\documentclass[preprint,3p,times,fleqn]{elsarticle} 

\pdfoutput=1

\makeatletter
\def\ps@pprintTitle{%
 \let\@oddhead\@empty
 \let\@evenhead\@empty
 \def\@oddfoot{\centerline{\thepage}}%
 \let\@evenfoot\@oddfoot}
\makeatother

\usepackage{amsmath,amssymb,amsthm,mathtools}

\usepackage{lineno}

\usepackage{etoolbox} 
\makeatletter 
\newcommand*\linenomathpatch{\@ifstar{\linenomathpatch@AMS}{\linenomathpatch@}}
\newcommand*\linenomathpatch@[1]{
  \expandafter\pretocmd\csname #1\endcsname {\linenomathWithnumbers}{}{}
  \expandafter\pretocmd\csname #1*\endcsname{\linenomathWithnumbers}{}{}
  \expandafter\apptocmd\csname end#1\endcsname {\endlinenomath}{}{}
  \expandafter\apptocmd\csname end#1*\endcsname{\endlinenomath}{}{}
}
\newcommand*\linenomathpatch@AMS[1]{
  \expandafter\pretocmd\csname #1\endcsname {\linenomathWithnumbersAMS}{}{}
  \expandafter\pretocmd\csname #1*\endcsname{\linenomathWithnumbersAMS}{}{}
  \expandafter\apptocmd\csname end#1\endcsname {\endlinenomath}{}{}
  \expandafter\apptocmd\csname end#1*\endcsname{\endlinenomath}{}{}
}
\let\linenomathWithnumbersAMS\linenomathWithnumbers
\patchcmd\linenomathWithnumbersAMS{\advance\postdisplaypenalty\linenopenalty}{}{}{}
\makeatother 

\linenomathpatch{equation}
\linenomathpatch*{gather}
\linenomathpatch*{multline}
\linenomathpatch*{align}
\linenomathpatch*{alignat}
\linenomathpatch*{split}

\usepackage{bbold}
\usepackage{bbding}

\usepackage{hyperref}








\bibliographystyle{model5-names}



\newtheorem{theorem}{Theorem}
\newtheorem{lemma}[theorem]{Lemma}

\newtheorem{corollary}[theorem]{Corollary}

\newtheorem{example}[theorem]{Example}

\usepackage{tikz}

\usetikzlibrary{arrows,positioning,automata}

\tikzstyle{automaton}=[>=latex',shorten >=1pt,node distance=1.8cm,on grid,auto,
roundnode/.style={circle, draw=black, thick, minimum size=7mm},->]

\newcommand{\N}{\mathbb{N}}
\newcommand{\B}{\mathbb{B}}
\usepackage{xspace}

\newcommand{\mathzero}{\mathbb{0}}
\newcommand{\mathone}{\mathbb{1}}

\newcommand{\mpunkt}{\,\text{.}}
\newcommand{\mkomma}{\,\text{,}}
\newcommand*{\msemikolon}{\,\text{;}}

\newcommand*{\ssig}{S \langle \Sigma \rangle}

\newcommand*{\seps}{S \langle \{\epsilon\} \rangle}
\newcommand*{\ssigstar}{S \langle\langle\Sigma^*\rangle\rangle}
\newcommand*{\ssigomega}{S\langle\langle\Sigma^\omega\rangle\rangle}
\newcommand*{\salg}{S^\text{alg}\langle\langle\Sigma^*\rangle\rangle}
\newcommand*{\snn}{S^{n \times n}}
\newcommand*{\snngam}{(\snn)^{\Gamma^* \times \Gamma^*}}
\newcommand*{\spair}{(\ssigstar,\allowbreak\ssigomega)}
\newcommand*{\stimes}{\ssigstar\times\ssigomega}
\newcommand*{\salgomega}{S^\text{alg}\langle\langle\Sigma^\omega\rangle\rangle}
\newcommand*{\salgtimes}{\salg\times\salgomega}

\newcommand*{\push}{\downarrow}
\newcommand*{\internal}{\#}
\newcommand*{\pop}{\uparrow}

\newcommand{\what}[1]{\,\widehat{\!{#1}}}


\begin{document}

\begin{frontmatter}

\title{Greibach Normal Form for \texorpdfstring{$\omega$}{omega}-Algebraic Systems and\\ Weighted Simple \texorpdfstring{$\omega$}{omega}-Pushdown Automata}

\author[leipzig]{Manfred Droste}
\ead{droste@informatik.uni-leipzig.de}

\author[leipzig,paris]{Sven Dziadek\fnref{dfg}}
\ead{dziadek@informatik.uni-leipzig.de}

\author[wien]{Werner Kuich\fnref{fwf}}
\ead{werner.kuich@tuwien.ac.at}

\address[leipzig]{Institut für Informatik, Universität Leipzig, Germany}
\address[wien]{Institut für Diskrete Mathematik und Geometrie, Technische Unversität Wien, Austria}
\address[paris]{LRDE, EPITA, Kremlin-Bicêtre, France}
\fntext[dfg]{Supported by Deutsche Forschungsgemeinschaft (DFG), Graduiertenkolleg 1763 (QuantLA)}
\fntext[fwf]{Partially supported by Austrian Science Fund (FWF): grant no. I1661 N25}

\begin{abstract}
 In weighted automata theory, many classical results on formal languages have been extended into a quantitative setting. Here, we investigate weighted context-free languages of infinite words, a generalization of $\omega$-context-free languages (Cohen, Gold 1977) and an extension of weighted context-free languages of finite words (Chomsky, Schützenberger 1963).
As in the theory of formal grammars, these weight\-ed context-free languages, or $\omega$-algebraic series, can be represented as solutions of mixed $\omega$-algebraic systems of equations and by weighted $\omega$-pushdown automata.

In our first main result,
we show that (mixed) $\omega$-algebraic systems can be transformed into \emph{Greibach normal form}.
We use the Greibach normal form in our second main result to prove that \emph{simple $\omega$-reset pushdown automata} recognize all $\omega$-algebraic series. Simple $\omega$-reset automata do not use $\epsilon$-transitions
and can change the stack only
by at most one symbol. These results generalize fundamental properties
of context-free languages to weighted context-free languages.
\end{abstract}

\begin{keyword}
Greibach normal form, weighted automata, omega-pushdown automata, omega context-free languages
\end{keyword}

\end{frontmatter}


\section{Introduction}
\label{sec:introduction}

Context-free languages provide a fundamental concept for
programming languages in computer science. In order to model
quantitative properties, already in 1963, Chomsky and Schützenberger~\cite{chomsky1963}
introduced weighted context-free languages. The theory of weighted pushdown automata has been extensively studied;
for background, we refer the reader to the survey~\cite{kuich_semirings}
and the books \cite{kuich_formal_power_series,MAT,handbook}.
In 1977, Cohen and Gold~\cite{cohen_gold}
investigated context-free languages of infinite words. Weighted
$\omega$-context-free languages, i.e., $\omega$-algebraic series were studied
by Ésik and Kuich~\cite{kuich_infinite}.

The goal of this paper is the investigation of weighted context-free
languages and weighted pushdown automata on infinite words.
As in~\cite{MAT}, the weighted context-free languages of
infinite words are described by solutions of $\omega$-algebraic
systems and mixed $\omega$-algebraic
systems of equations. In our first main result, we show that these
systems can be transformed into a Greibach normal form.
In the literature, Greibach normal forms, central for the theory of
context-free languages of finite words, have been established for
$\omega$-context-free languages (of infinite words), see~\cite{cohen_gold},
and also for algebraic systems of equations for series over finite words~\cite{kuich_formal_power_series,MAT};
this latter result is employed in our proof.
Hence here we extend these classical results to a weighted version for
infinite words.

Recently, Droste, Ésik and Kuich introduced weighted $\omega$-pushdown automata in~\cite{triple-pair,kleene_omega_pda}.
In our second main result, we consider weighted \emph{simple}
$\omega$-pushdown automata that we call simple $\omega$-reset pushdown automata here. 
These automata do not use $\epsilon$-transitions
and utilize only three simple stack commands: popping a symbol,
pushing a symbol or leaving the stack unaltered; moreover, it is only
possible to read the topmost stack symbol by popping it. Observe that
together with the restriction of not allowing $\epsilon$-transitions,
restrictions for the actions on the stack are non-trivial.
In our second main result we show that these \emph{simple}
$\omega$-reset pushdown automata recognize all weighted $\omega$-context-free
languages. For our proof, we use that $\omega$-algebraic systems can be brought into Greibach
normal form by our present first main result. 
Our construction of simple $\omega$-reset pushdown automata
is deduced from the construction used in
a recent corresponding result~\cite{finite_simple}, which states that
simple reset pushdown automata on finite words recognize all algebraic series.

We believe the model of simple $\omega$-reset pushdown automata
to be very natural. Similar expressivity equivalence results
in the unweighted case hold for context-free languages of finite
words, as used in a proof by Blass and Gurevich~\cite{note_nw},
and also for $\omega$-context-free languages, see~\cite{unweighted_logic}.
For a similar automaton model as the simple $\omega$-reset pushdown automata
introduced here, we show a logical characterization in~\cite{weighted_logic}.
Here, we close an important gap showing that in fact all $\omega$-algebraic series
can be converted into a weighted logical formula as described in~\cite{weighted_logic}.

To accomplish our goals, we introduce the following new elements.
We establish a new method to compute the matrix operations ${}^{\omega,t}$.
To prove the existence of the Greibach normal form for $\omega$-algebraic systems,
we introduce a new construction that transforms mixed $\omega$-algebraic systems
into $\omega$-algebraic systems.
For our automaton model, we define and investigate simple reset pushdown matrices.
Pushdown matrices historically are indexed first by the stack and subsequently by the states;
for our transformation from $\omega$-algebraic systems in Greibach normal form to
simple $\omega$-reset pushdown automata, we exploit and refine a notation that reverses this index order.
Finally, we show how the unicity of $l$\textsuperscript{th} canonical solutions can be deployed to show equality of two expressions.

Hereafter, we recall basic definitions in Section~\ref{sec:preliminaries}. There, we also extend our knowledge of the matrix operations~${}^{\omega,t}$.

After the Preliminaries, in Section~\ref{sec:mixed_algebraic_systems}, we introduce $\omega$-algebraic systems and mixed $\omega$-algebraic systems and their canonical solutions.
Then, we characterize $\omega$-algebraic series by a series of equivalent statements.

The main result of Section~\ref{sec:greibach_normal_form} states that each $\omega$-algebraic series is a component of a canonical solution of a mixed $\omega$-algebraic system in Greibach normal form. 

In Section~\ref{sec:greibach_normal_form_unmixed} we specialize the main result of Section~\ref{sec:greibach_normal_form}:  now each $\omega$-algebraic series is a component of a canonical solution of an $\omega$-algebraic system in Greibach normal form. 

We consider simple reset pushdown automata in Section~\ref{sec:simple_reset_pushdown_automata} and recall the result of~\cite{finite_simple} that for each algebraic series $r$ there exists a simple reset pushdown automaton with behavior $r$.
 
Simple $\omega$-reset pushdown automata are introduced in Section~\ref{sec:simple_omega_reset_pushdown_automata}. The main result of this section and of the whole paper is that for each $\omega$-algebraic series $r$ it is possible to construct a simple $\omega$-reset pushdown automaton with behavior $r$.

A preliminary version of this paper appeared in \cite{ddk_fsttcs}.
In this version, we strengthen the first main result by proving that already $\omega$-algebraic systems can be transformed into Greibach normal form.
In~\cite{ddk_fsttcs}, we only showed the existence of the Greibach normal form for mixed $\omega$-algebraic systems.
The stronger result in this work allows us to generalize the second main result: weighted simple $\omega$-pushdown automata recognize all $\omega$-algebraic series. For this, we needed to adapt the construction such that our simple $\omega$-reset pushdown automata behave exactly like the canonical solutions of $\omega$-algebraic systems.
Furthermore, we add a result (see Theorem~\ref{thm:omegak}) describing $\omega$-powers of matrices considering Büchi-acceptance.
We give complete arguments and further examples for our results.


\section{Preliminaries}
\label{sec:preliminaries}

For the convenience of the reader, we recall definitions and results from Ésik, Kuich \cite{MAT}.

A monoid $\langle S,+,0\rangle$ is called \emph{complete} if it is equipped with sum operations $\sum_I$ for all families $(a_i \mid i \in I)$ of elements of $S$, where $I$ is an arbitrary index set, such that the following conditions are satisfied
(see Conway \cite{conway_regular}, Eilenberg \cite{eilenberg_automata}, Kuich \cite{kuich_semirings}):
\begin{align*}
\text{(i)} \quad & \sum\limits_{i \in \emptyset} a_i = 0, \qquad
\sum\limits_{i \in \{j\}} a_i = a_j,
\qquad \sum\limits_{i \in \{j,k\}} a_i = a_j + a_k \text{ for } j \neq k \, , \\
\text{(ii)} \quad & \sum\limits_{j \in J}\big(\sum_{i \in I_j} a_i \big) = \sum_{i \in I} a_i \, ,
\text{ if }\ \bigcup_{j \in J}\! I_j = I \ \text{ and }\ I_j \cap I_{j'} = \emptyset \
\text{ for } \ j \neq j' \mpunkt
\intertext{Furthermore, a semiring $\langle S,+,\cdot,0,1\rangle$ is called \emph{complete} if $\langle S,+,0\rangle$ is a complete monoid and if we additionally have}
\text{(iii)} \quad & \sum_{i \in I} (c \cdot a_i) = c \cdot \big( \sum_{i \in I} a_i \big) ,
\qquad \sum_{i \in I} (a_i \cdot c) = \big( \sum_{i \in I } a_i \big) \cdot c \mpunkt
\end{align*}

This means that a semiring $S$ is complete if it has ``infinite sums'' (i) that are an extension of the finite sums, (ii) that are associative and commutative and
(iii) that satisfy the distributivity laws.

A semiring $S$ equipped with an additional unary star operation $^*: S \to S$ is called a \emph{starsemiring}. In complete semirings for each element $a$, the \emph{star} $a^*$
of $a$ is defined by
\begin{equation*}
a^* = \sum_{j\geq 0} a^j \, .
\end{equation*}
Hence, each complete semiring is a starsemiring, called a \emph{complete starsemiring}.

Starsemirings allow us to generalize the star operation to matrices. Let $M\in S^{n\times n}$, then we define $M^*\in S^{n\times n}$ inductively as in Ésik, Kuich~\cite{MAT}, pp. 14--15 as follows. For $n=1$ and $M=(a)$, for $a\in S$, we let $M^*=(a^*)$. Now, for $n>1$, we partition $M$ into submatrices, called \emph{blocks},
\begin{equation}\label{abcd_matrix}
M=\begin{pmatrix}a & b\\c & d\end{pmatrix}\mkomma
\end{equation}
with $a\in S^{1\times 1}$, $b\in S^{1\times (n-1)}$, $c\in S^{(n-1)\times 1}$, $d\in S^{(n-1)\times(n-1)}$, and we define
\begin{equation}
  \label{matrix_star_1}
  M^*=\begin{pmatrix}(a+b d^* c)^* & (a+b d^* c)^* b d^*\\(d+c a^* b)^* c a^* & (d+c a^* b)^*\end{pmatrix}\mpunkt
\end{equation}

Whenever we use a matrix $M$ as defined in~\eqref{abcd_matrix}, the corresponding automaton can be illustrated as follows: 
\begin{center}  
  \begin{tikzpicture}[automaton,node distance=1.5cm]
    \node[roundnode] (1) {1};
    \node[roundnode,right=of 1] (2) {2};
    \path[bend angle=17,bend right] (1) edge node[below] {$b$} (2);
    \path[bend angle=17,bend right] (2) edge node[above] {$c$} (1);
    \path (1) edge [loop left] node[left] {$a$} ();
    \path (2) edge [loop right] node[right] {$d$} ();
  \end{tikzpicture}
\end{center}

A semiring is called \emph{continuous} if it is ordered, each directed subset has a least upper bound and addition and multiplication preserve the least upper bound of directed sets. Any continuous semiring is complete. See Ésik, Kuich~\cite{MAT} for background.

Suppose that $S$ is a semiring and $V$ is a commutative monoid written additively.
We call $V$ a (left) $S$-semimodule if $V$ is
equipped with a (left) action
\begin{align*}
S \times V & \ \to\ V\\
(s,v) & \ \mapsto \  s v
\end{align*}
subject to the following rules:
\begin{align*}
& s(s' v) = (s s')v \, , \quad (s+s')v  = s v + s' v \, ,  \quad s(v+v')  = s v + s v' \, , \\
& 1v  = v \, , \quad  0v   = 0 \, ,\quad s0  = 0 \, ,
\end{align*}
for all $s,s' \in S$ and $v,v' \in V$.
If $V$ is an $S$-semimodule, we call $(S,V)$ a
\emph{semiring-semimodule pair}.

Suppose that $(S,V)$ is a semiring-semimodule pair
such that $S$ is a starsemiring and $S$ and $V$ are equipped with an omega operation
$^\omega: S \to V$.
Then we call $(S,V)$ a \emph{starsemiring-omegasemimodule pair}.

Ésik, Kuich \cite{iteration_semiring_semimodule_pairs} define a
\emph{complete semiring-semimodule pair} to be a semiring-semimodule pair $(S,V)$ such that $S$ is a complete semiring and $V$ is a complete monoid with
\begin{equation*}
s\Bigl(\sum_{i\in I} v_i \Bigr)  = \sum_{i \in I} s v_i \qquad \text{and} \qquad
\Bigl(\sum_{i\in I} s_i\Bigr) v  = \sum_{i \in I} s_i v \, ,
\end{equation*}
\noindent
for all $s\in S$, $v \in V$, and for all families
$(s_i)_{i \in I} $ over $S$ and $(v_i)_{i \in I} $ over $V$;
moreover, it is required that an \emph{infinite product operation}
\begin{equation*}
S^\omega \ni (s_1, s_2, \ldots) \ \mapsto \ \prod_{j\geq 1} s_j\in V
\end{equation*}
is given mapping infinite sequences over $S$ to $V$ subject to the following three conditions:
\begin{align*}
\text{(i)} \quad & \prod_{i\geq 1} s_i  \ = \ \prod_{i \geq 1} (s_{n_{i-1}+1}\cdot \dots \cdot s_{n_i})\mkomma\\
\text{(ii)} \quad & s_1 \cdot \prod_{i \geq 1} s_{i+1}  \ = \ \prod_{i\geq 1} s_i\mkomma\\
\text{(iii)} \quad & \prod_{j\geq 1} \sum_{i_j \in I_j} s_{i_j}  \ = \
\sum_{(i_1, i_2, \dots)\in I_1 \times I_2 \times \dots} \prod_{j\geq 1} s_{i_j} \mkomma
\end{align*}

\noindent
where in the first equation $0=n_0 \leq n_1 \leq n_2 \leq \dots$ and $I_1, I_2, \dots$ are arbitrary index sets. This means that the left action of the semimodule is distributive and it is required that it has ``infinite products'' mapping infinite sequences over $S$ to $V$ such that the product (i) can be partitioned (an infinite form of associativity), (ii) can be extended from the left and (iii) satisfies an infinite distributivity law.

Suppose that $(S, V)$ is complete.
Then we define
\begin{equation*}
s^*  \ = \ \sum_{i\geq 0} s^i \qquad \text{and} \qquad
s^\omega \  = \ \prod_{i\geq 1} s \, ,
\end{equation*}
\noindent
for all $s \in S$. This turns $(S, V)$ into a starsemiring-omegasemimodule pair.
Observe that, if $(S,V)$ is a complete semiring-semimodule pair,
then $0^\omega = 0$.

A \emph{star-omega semiring} is a semiring $S$ equipped with unary operations $^*$ and $^\omega : S \to S$.
A star-omega semiring $S$ is called \emph{complete} if $(S,S)$ is a complete semiring-semimodule pair,
i.e., if $S$ is complete and is equipped with an infinite product operation that satisfies the  three conditions stated above.
A complete star-omega semiring $S$ is called \emph{continuous} if the semiring $S$ is continuous.

\begin{example}
Formal languages are covered by our model. Let $\langle\B,+,\cdot,0,1\rangle$ be the Boolean semiring. Then let $0^*=1^*=1$ and take infima as infinite products. This makes $\B$ a continuous star-omega and commutative semiring. It then follows that $\B\langle\langle\Sigma^*\rangle\rangle\times\B \langle\langle\Sigma^\omega\rangle\rangle$ is isomorphic to formal languages of finite and infinite words with the usual operations.

The semiring $\langle\N^\infty,+,\cdot,0,1\rangle$ with $\N^\infty=\N\cup\{\infty\}$ and the natural infinite product operation of numbers is a continuous star-omega and commutative semiring.

The tropical semiring $\langle \N^\infty,\min,+,\infty,0\rangle$ with the usual infinite sum operation as infinite product is a commutative semiring and a continuous star-omega semiring.

Analogously, the arctic semiring $\langle \bar\N,\max,+,-\infty,0\rangle$ with $\bar\N=\N\cup\{-\infty,\infty\}$ and the infinite sum operation as infinite product is a commutative semiring and a continuous star-omega semiring.
\end{example}

A \emph{Conway semiring} (see Conway \cite{conway_regular}, Bloom, Ésik \cite{bloom}) is a starsemiring $S$
satisfying the \emph{sum star identity}
\[(a+b)^* = a^*(b a^*)^*\]
and the \emph{product star identity}
\[(ab)^* = 1+a(b a)^*b\]
for all $a,b \in S$.
Observe that by Ésik, Kuich~\cite{MAT}, Theorem 1.2.24, each complete starsemiring is
a Conway semiring.

Note that from the identities in Conway semirings, it follows
\begin{align}
  \label{eqn:conway_derived}
  \begin{split}
    a^*&=1+a a^*=1+a^* a\mkomma\\
    a (b a)^* &= (a b)^* a\mkomma
  \end{split}
\end{align}
for all $a,b\in S$.

If $S$ is a Conway semiring then so is $\snn$.
Let $M\in\snn$. Assume that $n > 1$ and write $M$ as in~\eqref{abcd_matrix}. Applying the identities of Conway semirings, we get an equivalent definition (cf.\@ Conway~\cite{conway_regular}, pp. 27--28) to~\eqref{matrix_star_1}:
\begin{equation}
  \label{matrix_star_2}
  M^*=\begin{pmatrix}(a+b d^* c)^* & a^* b(d+c a^* b)^*\\ d^* c(a+b d^* c)^* & (d+c a^* b)^*\end{pmatrix}\mpunkt
\end{equation}

Following Bloom, Ésik \cite{bloom}, we call a starsemiring-omegasemimodule pair $(S,V)$ a
\emph{Conway semiring-semimodule pair} if $S$ is a
Conway semiring and if the omega operation satisfies
the \emph{sum omega identity} and the \emph{product omega identity}:
\[(a+b)^\omega  = (a^*b)^\omega + (a^*b)^*a^\omega \qquad \text{and} \qquad
(ab)^\omega  = a(b a)^\omega\mkomma\]
for all $a,b \in S$. By Ésik, Kuich~\cite{iteration_semiring_semimodule_pairs} each complete semiring-semimodule pair is a Conway semiring-semimodule pair.

Observe that the
\emph{omega fixed-point equation} holds, i.e.
\[a a^\omega = a^\omega\mkomma\]
for all $a \in S$.

Consider a starsemiring-omegasemimodule pair $(S, V)$. Following Bloom, Ésik~\cite{bloom}, we define a matrix operation ${}^\omega\colon S^{n\times n} \to V^{n\times 1}$ on a starsemiring-omegasemimodule pair $(S,V)$ as follows. If $n = 0$, $M^\omega$ is the unique element of $V^0$, and if $n = 1$, so that $M = (a)$, for some $a\in S$, $M^\omega = (a^\omega)$. Assume now that $n > 1$ and write $M$ as in~\eqref{abcd_matrix}. Then
\[M^\omega=\begin{pmatrix}(a + b d^* c)^\omega + (a + b d^* c)^* b d^\omega\\(d + c a^* b)^\omega + (d + c a^* b)^* c a^\omega \end{pmatrix}\mpunkt\]
Additionally, the \emph{matrix star identity} is valid for Conway semirings and states that the star of a matrix is independent of the partitioning of the matrix. The \emph{matrix omega identity} is valid for Conway semiring-semimodule pairs and states that the operation ${}^\omega$ is independent of the partitioning of the matrix, i.e., the blocks of~\eqref{abcd_matrix} can have arbitrary sizes: $a\in S^{n_1\times n_1}$, $b\in S^{n_1\times n_2}$, $c\in S^{n_2\times n_1}$, $d\in S^{n_2\times n_2}$ for $n_1+n_2=n$. If $(S,V)$ is a Conway semiring-semimodule pair, then so is $(\snn,V^n)$.
See also Ésik, Kuich~\cite{MAT}, page 106.

Following Ésik, Kuich~\cite{ek_semiring}, we define matrix operations ${}^{\omega,t} \colon S^{n\times n}\to V^{n\times 1}$ for $0 \leq t \leq n$ as follows. Assume that $M\in S^{n\times n}$ is decomposed into blocks $a, b, c, d$ as in~\eqref{abcd_matrix}, but with $a$ of dimension $t\times t$ and $d$ of dimension $(n-t)\times (n-t)$. Then
\begin{equation}
\label{eqn:omega_buchi_1}
M^{\omega,t}=\begin{pmatrix}(a + b d^* c)^\omega\\d^* c (a + b d^* c)^\omega \end{pmatrix}\mpunkt
\end{equation}
Observe that $M^{\omega,0}=0$ and $M^{\omega,n}=M^\omega$.
Intuitively, $M$ can be interpreted as an adjacency matrix of the following automata with $n$ states:
\begin{center}  
  \begin{tikzpicture}[automaton,node distance=3cm]
    \node[roundnode,align=center,minimum size=1.6cm,accepting] (1) {first $t$\\ states};
    \node[roundnode,right=of 1,align=center,minimum size=1.7cm] (2) {other\\ $n-t$\\ states};
    \path[bend angle=17,bend right] (1) edge node[below] {$b$} (2);
    \path[bend angle=17,bend right] (2) edge node[above] {$c$} (1);
    \path (1) edge [loop left] node[left] {$a$} ();
    \path (2) edge [loop right] node[right] {$d$} ();
  \end{tikzpicture}
\end{center} Then $M^{\omega,t}$ are infinite paths where the first $t$ states are repeated states, i.e., states that are Büchi-accepting.

The next theorem states that, in case of a Conway semiring, $M^{\omega,t}$, for $0 \leq t \leq n$, can be computed also in a way different from its definition and, with certain limits, is independent of the partitioning of the matrix $M$.

\begin{theorem}
  \label{thm:omegak}
  \setlength\arraycolsep{4pt}
  Let $S$ be a Conway semiring and $0\leq t\leq k\leq n$.
  Assume $M\in S^{n\times n}$ is decomposed into blocks
  \[M=\begin{pmatrix} a & b \\ c & d\end{pmatrix}\]
  with block $a$ being of dimension $k\times k$ and block $d$ of dimension $(n-k)\times(n-k)$.

  Then we have,
  \begin{equation}
    \label{eqn:omega_buchi_2}
    M^{\omega,t}=
    \begin{pmatrix}
      (a+b d^*c)^{\omega,t}\\
      d^* c(a+b d^* c)^{\omega,t}
    \end{pmatrix}\mpunkt
  \end{equation}
\end{theorem}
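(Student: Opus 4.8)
\emph{Proof plan.} Because $t\le k$, all $t$ Büchi-accepting states lie inside the block $a$, and the recursion on the right-hand side peels off only non-accepting states. The plan is to refine the given partition into three groups of states, of sizes $t$, $k-t$ and $n-k$, and to compare the definition~\eqref{eqn:omega_buchi_1} of $M^{\omega,t}$ (which partitions $M$ at $t$) with the claimed right-hand side (which partitions at $k$, then recurses) block by block. Accordingly I would write
\[
a=\begin{pmatrix} a_1 & a_2\\ a_3 & a_4\end{pmatrix},\qquad
b=\begin{pmatrix} b_1\\ b_2\end{pmatrix},\qquad
c=\begin{pmatrix} c_1 & c_2\end{pmatrix},
\]
with $a_1$ of size $t\times t$, and set $\delta=\begin{pmatrix} a_4 & b_2\\ c_2 & d\end{pmatrix}$, $\beta=\begin{pmatrix}a_2 & b_1\end{pmatrix}$, $\gamma=\begin{pmatrix}a_3\\ c_1\end{pmatrix}$. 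Then the definition partitions $M$ at $t$ into the accepting block $a_1$, the non-accepting block $\delta$, and the connecting blocks $\beta,\gamma$, so that
\[
M^{\omega,t}=\begin{pmatrix}(a_1+\beta\delta^*\gamma)^\omega\\ \delta^*\gamma\,(a_1+\beta\delta^*\gamma)^\omega\end{pmatrix}.
\]
Writing $N=a+bd^*c$ with blocks $N_{11}=a_1+b_1d^*c_1$, $N_{12}=a_2+b_1d^*c_2$, $N_{21}=a_3+b_2d^*c_1$, $N_{22}=a_4+b_2d^*c_2$, and $w=(N_{11}+N_{12}N_{22}^*N_{21})^\omega$, applying~\eqref{eqn:omega_buchi_1} to the $k\times k$ matrix $N$ shows that the right-hand side of~\eqref{eqn:omega_buchi_2} is the vector whose group-$1$ block is $w$, whose group-$2$ block is $N_{22}^*N_{21}w$, and whose group-$3$ block is $d^*cN^{\omega,t}=d^*c_1w+d^*c_2N_{22}^*N_{21}w$. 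It therefore suffices to match these three blocks.

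\emph{Core step.} First I would show that the two $\omega$-arguments coincide as $t\times t$ matrices, $a_1+\beta\delta^*\gamma=N_{11}+N_{12}N_{22}^*N_{21}$, so that the group-$1$ blocks agree and both equal $w$. Conceptually this is the statement that the ``return matrix'' to the accepting states, obtained by eliminating all non-accepting states, does not depend on whether one eliminates groups $2$ and $3$ together or in two stages; it is an instance of the matrix star identity. Concretely it follows by expanding $\delta^*$ via~\eqref{matrix_star_1}–\eqref{matrix_star_2} and using the derived identity $(d+c_2a_4^*b_2)^*=d^*+d^*c_2N_{22}^*b_2d^*$.

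\emph{Prefactors.} It then remains to evaluate $\delta^*\gamma\,w$. Reading the four blocks of $\delta^*$ off from~\eqref{matrix_star_1} and~\eqref{matrix_star_2} gives top row $\begin{pmatrix}N_{22}^* & N_{22}^*b_2d^*\end{pmatrix}$ and bottom row $\begin{pmatrix}d^*c_2N_{22}^* & d^*+d^*c_2N_{22}^*b_2d^*\end{pmatrix}$. Multiplying by $\gamma$ and then by $w$, and using $N_{21}=a_3+b_2d^*c_1$, the group-$2$ block collapses to $N_{22}^*(a_3+b_2d^*c_1)w=N_{22}^*N_{21}w$, and the group-$3$ block to $(d^*c_1+d^*c_2N_{22}^*N_{21})w=d^*cN^{\omega,t}$. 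Both agree with the right-hand side, completing the comparison.

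\emph{Main obstacle.} The technical heart is the block-star bookkeeping for $\delta^*$ together with the sub-identity $(d+c_2a_4^*b_2)^*=d^*+d^*c_2N_{22}^*b_2d^*$; the latter I would derive from the sum-star and product-star identities and from $a(ba)^*=(ab)^*a$ of~\eqref{eqn:conway_derived}. Care is required because $^*$ enters through the two equivalent forms~\eqref{matrix_star_1}–\eqref{matrix_star_2} and because, in contrast with the full matrix omega identity, the accepting states must stay inside $a$. This is precisely where the hypothesis $t\le k$ is used: it guarantees both that $(a+bd^*c)^{\omega,t}$ is well-defined and that only non-accepting states are eliminated in the passage from $M$ to $N$.
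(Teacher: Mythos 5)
Your proposal follows essentially the same route as the paper's proof: refine the partition into blocks of sizes $t$, $k-t$, $n-k$ (your $a_1,a_2,a_3,a_4,b_1,b_2,c_1,c_2$ are the paper's $f,g,i,a,h,b,j,c$), compute $M^{\omega,t}$ from the definition via the partition at $t$ and the right-hand side via the partition at $k$, and match the three blocks using the Conway block-star identities --- your sub-identity $(d+c_2a_4^*b_2)^*=d^*+d^*c_2N_{22}^*b_2d^*$ is exactly the step the paper carries out via Lemma~1.2.16 of its reference (marked $\Diamond$). The only substantive omission is that the paper first disposes of the degenerate cases $t=k$ and $k=n$, where some of the nine blocks have dimension $0$, before assuming $t<k<n$; you should do the same, but this is routine.
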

\begin{proof}
  \setlength\arraycolsep{4pt}
  The proof resembles the proof of the matrix omega identity (cf.~\cite{MAT}, Theorem~5.3.13).
  Assume $M\in S^{n\times n}$ is decomposed into nine blocks
  \[M=\begin{pmatrix}
      f & g & h\\
      i & a & b\\
      j & c & d
  \end{pmatrix}\]
  with dimensions $f\in S^{t\times t}$, $a\in S^{(k-t)\times (k-t)}$ and $d\in S^{(n-k)\times (n-k)}$.
  Consider the following two partitionings:
  \[M=\left(
  \begin{array}{c|c}
    f & \begin{matrix}
      g & h
    \end{matrix}\\
  \hline
  \begin{matrix}
    i \\ j
  \end{matrix} &
  \begin{matrix}
    a & b\\
    c & d
  \end{matrix}
  \end{array}
  \right) \hspace{2cm}
  M'=\left(
  \begin{array}{c|c}
    \begin{matrix}
      f & g\\
      i & a
    \end{matrix} &
    \begin{matrix}
      h \\ b
    \end{matrix}\\
    \hline
    \begin{matrix}
      j & c
    \end{matrix} &
    d
  \end{array}
  \right)\]

  Now we need to show that $M^{\omega,t}$, calculated as in~\eqref{eqn:omega_buchi_1}
  \[M^{\omega,t}=
  \begin{pmatrix}
    \alpha\\
    \begin{pmatrix}
      a & b\\ c & d
    \end{pmatrix}^*
    \begin{pmatrix}
      i \\ j
    \end{pmatrix}
    \alpha
  \end{pmatrix}\mkomma\]
  where
  \[\alpha = \left(f +
  \begin{pmatrix}
    g & h
  \end{pmatrix}
  \begin{pmatrix}
    a & b\\ c & d
  \end{pmatrix}^*
  \begin{pmatrix}
    i\\ j
  \end{pmatrix}
  \right)^\omega
  \]
  is equal to $M'^{\omega,t}$, calculated as in~\eqref{eqn:omega_buchi_2}
  \[M'^{\omega,t}=
  \begin{pmatrix}
    \mu\\
    d^*
    \begin{pmatrix}
      j & c
    \end{pmatrix}\mu
  \end{pmatrix}\mkomma\]
  where
  \[\mu=\left(
  \begin{pmatrix}
    f & g\\ i & a
  \end{pmatrix} +
  \begin{pmatrix}
    h\\ b
  \end{pmatrix}d^*
  \begin{pmatrix}
    j & c
  \end{pmatrix}
  \right)^{\omega,t}\mpunkt\]

  In the case $t=k$, we have
  \[M=M'=\begin{pmatrix}
    f & h\\
    j & d
  \end{pmatrix}\mpunkt\]
  It follows that
  \begin{align*}
    \alpha &= (f+h d^* j)^\omega\\
    &=(f+h d^* j)^{\omega,t} = \mu\mkomma
  \end{align*}
  where the second equality is due to $t$ being the full dimension of $f+h d^* j$. The second components of $M^{\omega,t}$ and $M'^{\omega,t}$ then both reduce to $d^* j(f+h d^* j)^\omega$.

  If $k=n$, we have
  \[M=M'=\begin{pmatrix}
    f & g\\
    i & a
  \end{pmatrix}\mpunkt\]
  Now, the second component of $M'^{\omega,t}$ and the second summand of $\mu$ have dimension 0 and thus
  \[M'^{\omega,t} =
      \begin{pmatrix}
        f & g\\
        i & a
      \end{pmatrix}^{\omega,t}= M^{\omega,t}\]

  Hence, in the following, we can assume $t<k<n$.

  First, we compute $M^{\omega,t}$. We denote the blocks of $M^{\omega,t}$ by $(M^{\omega,t})_i$ for $1\leq i\leq 3$. Then we have
  \begin{align*}
    (M^{\omega,t})_1 =\alpha &= \left(f +
  \begin{pmatrix}
    g & h
  \end{pmatrix}
  \begin{pmatrix}
    a & b\\ c & d
  \end{pmatrix}^*
  \begin{pmatrix}
    i\\ j
  \end{pmatrix}
  \right)^\omega\\
  &= \left(f +
  \begin{pmatrix}
    g & h
  \end{pmatrix}
  \begin{pmatrix}
    (a+b d^* c)^* & a^* b(d+c a^* b)^*\\
    d^* c(a+b d^* c)^* & (d+c a^* b)^*
  \end{pmatrix}
  \begin{pmatrix}
    i\\ j
  \end{pmatrix}
  \right)^\omega\\
  &= \left(f +
  \begin{pmatrix}
    g & h
  \end{pmatrix}
  \begin{pmatrix}
    (a+b d^* c)^* i + a^* b(d+c a^* b)^* j\\
    d^* c(a+b d^* c)^* i + (d+c a^* b)^* j
  \end{pmatrix}
  \right)^\omega\\
  &= \big(f +
    g(a+b d^* c)^* i  +g a^* b(d+c a^* b)^* j\\
    &\hspace{1cm}+ h d^* c(a+b d^* c)^* i + h(d+c a^* b)^* j\big)^\omega\mpunkt
  \end{align*}
  Here, we used the star of a matrix in the form shown in~\eqref{matrix_star_2}.
  We will now compute the other two blocks by using the star of a matrix as in~\eqref{matrix_star_1}:
  \begin{align*}
    \begin{pmatrix}
      (M^{\omega,t})_2\\
      (M^{\omega,t})_3
    \end{pmatrix} &=
    \begin{pmatrix}
      a & b\\ c & d
    \end{pmatrix}^*
    \begin{pmatrix}
      i \\ j
    \end{pmatrix}
    \alpha\\
    &=
    \begin{pmatrix}
      (a+b d^* c)^* & (a + b d^* c)^* b d^*\\ (d+c a^* b)^* c a^* & (d+c a^* b)^*
    \end{pmatrix}
    \begin{pmatrix}
      i \\ j
    \end{pmatrix}
    \alpha\\
    &=\begin{pmatrix}
    (a+b d^* c)^* i + (a+b d^* c)^* b d^* j\\ (d+c a^* b)^* c a^* i + (d+c a^* b)^* j
    \end{pmatrix}
    \alpha\\
    &=\begin{pmatrix}
    \big((a+b d^* c)^* i + (a+b d^* c)^* b d^* j\big)\alpha\\
    \big((d+c a^* b)^* c a^* i + (d+c a^* b)^* j\big)\alpha
    \end{pmatrix}
  \end{align*}

  Now, we compute $M'^{\omega,t}$. We denote the blocks of $M'^{\omega,t}$ by $(M'^{\omega,t})_i$ for $1\leq i\leq 3$. Then we have
  \begin{align*}
    \begin{pmatrix}
      (M'^{\omega,t})_1\\
      (M'^{\omega,t})_2
    \end{pmatrix} = \mu &= \left(
  \begin{pmatrix}
    f & g\\ i & a
  \end{pmatrix} +
  \begin{pmatrix}
    h\\ b
  \end{pmatrix}d^*
  \begin{pmatrix}
    j & c
  \end{pmatrix}
  \right)^{\omega,t}\\
  &= \left(
  \begin{pmatrix}
    f & g\\ i & a
  \end{pmatrix} +
  \begin{pmatrix}
    h d^* j & h d^* c\\
    b d^* j & b d^* c
  \end{pmatrix}
  \right)^{\omega,t}\\
  &= \begin{pmatrix}
    f + h d^* j & g + h d^* c\\
    i + b d^* j & a + b d^* c
  \end{pmatrix}^{\omega,t}\\
  &= \begin{pmatrix}
    \delta\\
    (a + b d^* c)^* (i+b d^* j)\delta
  \end{pmatrix}\mkomma
  \end{align*}
  where
  \[\delta=\big(f + h d^* j + (g + h d^* c)(a + b d^* c)^*(i + b d^* j)\big)^\omega\mpunkt\]
  It remains to calculate
  \begin{align*}
    (M'^{\omega,t})_3 &= d^*
    \begin{pmatrix}
      j & c
    \end{pmatrix}\mu\\
    &= d^* \big( j + c(a + b d^* c)^* (i+b d^* j)\big)\delta\mpunkt
  \end{align*}

  The last step is to verify the three equalities $(M^{\omega,t})_i=(M'^{\omega,t})_i$ for $1\leq i\leq 3$.
  The first equality follows basically from Lemma~1.2.16 of~\cite{MAT}. We will mark the use of Lemma~1.2.16 by $\Diamond$ and obtain
  \begin{align*}
    (M^{\omega,t})_1 =\alpha &=\big(f + g(a+b d^* c)^* i  +g a^* b(d+c a^* b)^* j\\
    &\hspace{1cm}+ h d^* c(a+b d^* c)^* i + h(d+c a^* b)^* j\big)^\omega\\
    &\stackrel{\Diamond}{=}\big(f + h d^* j+ g(a + b d^* c)^* i + g(a+b d^* c)^* b d^* j\\
    &\hspace{1cm}+h d^* c(a+b d^* c)^* i+h d^* c(a+b d^* c)^* b d^* j\big)^\omega\\
    &=\big(f + h d^* j+ g(a + b d^* c)^*(i+b d^* j)+h d^* c(a+b d^* c)^*(i+b d^* j)\big)^\omega\\
    &=\big(f + h d^* j+ (g+h d^* c)(a + b d^* c)^*(i+b d^* j)\big)^\omega\\
    &=\delta=(M'^{\omega,t})_1
  \end{align*}
  For the second equality, we have
  \begin{align*}
    (M^{\omega,t})_2 &= \big((a+b d^* c)^* i + (a+b d^* c)^* b d^* j\big)\alpha\\
    &= \big((a+b d^* c)^* (i + b d^* j)\big)\delta\\
    &= (M'^{\omega,t})_2\mpunkt
  \end{align*}
  Now, for the third equality, it suffices to prove
  \[(d+c a^* b)^* c a^* i + (d+c a^* b)^* j = d^* \big( j + c(a + b d^* c)^* (i+b d^* j)\big)\mpunkt\]
  We have
  \begin{align*}
    d^* \big( j + c(a + b d^* c)^* (i+b d^* j)\big)
    &= d^* j + d^* c(a + b d^* c)^* (i+b d^* j)\\
    &= d^* j + d^* c(a + b d^* c)^* i + d^* c(a + b d^* c)^* b d^* j\\
    &= d^* j + d^* c(a^* b d^* c)^* a^* i + d^* c(a^* b d^* c)^* a^* b d^* j\\
    &= d^* j + (d^* c a^* b)^* d^* c a^* i + (d^* c a^* b)^* d^* c a^* b d^* j\\
    &= (d^* c a^* b)^* d^* c a^* i + (d^* c a^* b)^* d^* c a^* b d^* j + d^* j\\
    &= (d^* c a^* b)^* d^* c a^* i + \big((d^* c a^* b)^* d^* c a^* b + 1\big)d^* j\\
    &= (d^* c a^* b)^* d^* c a^* i + (d^* c a^* b)^* d^* j\\
    &= (d + c a^* b)^* c a^* i + (d + c a^* b)^* j\mpunkt
  \end{align*}
  Note that for this calculation, we rely heavily on commutativity of addition, distributivity and the sum star identity and the product star identity of Conway semirings together with their derived identities~\eqref{eqn:conway_derived}.
  This completes the proof.
\end{proof}


For a complete definition of quemirings, we refer the reader to~\cite{MAT}, page 110.
Here we note that a quemiring $T$ is isomorphic to a quemiring $S \times V$ determined by the semiring-semimodule pair $(S,V)$; it follows that we can identify every element $t$ of a quemiring $T$ by a pair $(s,v)$ of a semiring-semimodule pair $(S,V)$. A quemiring is an algebraic structure with an addition
given componentwise, i.e.,
\[(s,v)+(s',v') = (s+s',v+v')\mkomma\]
a semidirect product type multiplication (using that $S$ acts on $V$), i.e.,
\[(s,v)\cdot(s',v') = (s s',v+s v')\mkomma\]
and two constants $0=(0,0)$ and $1=(1,0)$ (and a unary operation $\P$, but we will not use it here). A quemiring $S\times V$ satisfies a set of axioms inherited from semiring-semimodule pairs; those axioms make a quemiring \emph{quasi a semiring} (cf.\@ Elgot~\cite{quemirings}, Ésik, Kuich~\cite{MAT}, page 109; in fact, a quemiring is not necessarily distributive from the left and 0 only behaves like a zero from the left).
Also, one can define a natural star operation on $S\times V$, i.e.,
\[(s,v)^\otimes = (s^*,s^\omega+s^* v)\mkomma\]
making it a \emph{generalized starquemiring}, see~\cite{MAT}.


For an alphabet $\Sigma$, we call mappings $r$ of $\Sigma^*$ into $S$ \emph{series}. The collection of all such series $r$ is denoted by $S\langle\langle\Sigma^*\rangle\rangle$. We call the set $\text{supp}(r)=\{w \mid (r,w)\neq 0\}$ the \emph{support} of a series $r$. The set of series with finite support $S\langle\Sigma^*\rangle=\{s\in \ssigstar\mid \text{supp}(s) \text{ is finite}\}$ is called the set of \emph{polynomials}. We denote by $S\langle\Sigma\rangle$, $S\langle\{\epsilon\}\rangle$ and $S\langle\Sigma\cup\{\epsilon\}\rangle$ the series with support in $\Sigma$, $\{\epsilon\}$ and $\Sigma\cup\{\epsilon\}$, respectively. Series $s$ with $|\text{supp}(s)|\leq 1$ are called monomials. Note that polynomials are finite sums of monomials.

Mappings of $\Sigma^\omega$ into $S$ are called \emph{$\omega$-series} and their collection is denoted by $\ssigomega$. See~\cite{kuich_formal_power_series,MAT} for more information.
Examples of monomials in $S\langle\Sigma^*\rangle$ for a semiring $\langle S, +, \cdot, 0, 1\rangle$ are $0$, $w$, $s w$ for $s\in S$ and $w\in\Sigma^*$, defined by\\
\indent $(0,w) = 0 \text{ for all }w,$\\
\indent $(w,w) = 1 \text{ and } (w,w')=0\text{ for }w\neq w',$\\
\indent $(s w,w) = s\text{ and } (s w,w')=0\text{ for }w\neq w'.$


\section{\texorpdfstring{$\omega$}{omega}-Algebraic Systems}
\label{sec:mixed_algebraic_systems}

This and the next two sections describe the Greibach normal form for (mixed) $\omega$-algebraic systems. Here, we define $\omega$-algebraic systems and mixed $\omega$-algebraic systems.

For this section and the next two sections, Sections~\ref{sec:mixed_algebraic_systems}, \ref{sec:greibach_normal_form} and \ref{sec:greibach_normal_form_unmixed}, \emph{$S$ is a continuous, and therefore complete, star-omega semiring}. Let further $\Sigma$ denote an alphabet. If we consider $\ssigstar$ or $\ssigomega$, then we assume additionally that \emph{the underlying semiring $S$ is commutative}.

By Theorem~5.5.5 of Ésik, Kuich~\cite{MAT}, $\spair$ is a complete semi\-ring-semimodule pair, hence a Conway semiring-semimodule pair, satisfying $\epsilon^\omega=0$. Hence, $\stimes$ is a generalized starquemiring.

In the sequel, $x$, $y$ and $z$ denote vectors of dimension $n$, i.e., $x=(x_1,\ldots,x_n)$, $y=(y_1,\ldots,y_n)$ and $z=(z_1,\ldots,z_n)$. Later, we will also use $z$ of dimension $m$. It is clear by the context whether they are used as row or as column vectors. Similar conventions hold for vectors $p$, $\sigma$, $\omega$ and $\tau$. Moreover, $X$ denotes the set of variables $\{x_1,\ldots,x_n\}$ for $\ssigstar$, while $\{z_1,\ldots,z_n\}$ is the set of variables for $\ssigomega$. The set $Y$ denotes the set of variables $\{y_i,\ldots,y_n\}$ for the quemiring $\stimes$.

We will be working with two different generalizations of $\omega$-context-free grammars, the $\omega$-algebraic systems and the mixed $\omega$-algebraic systems.
Both representations model $\omega$-algebraic series, i.e., weighted $\omega$-context-free languages. The $\omega$-algebraic systems look similar to $\omega$-context-free grammars and are, therefore, of interest.
The mixed $\omega$-algebraic systems distinguish between variables for finite word solutions and variables for infinite word solutions. This division allows us to define and describe canonical solutions that behave similarly to unweighted $\omega$-context-free grammars.
On the other hand, canonical solutions of $\omega$-algebraic systems are derived by first transforming the $\omega$-algebraic systems into mixed $\omega$-algebraic systems.
Thus, this double presentation is needed because we finally work with mixed $\omega$-algebraic systems, but we want to describe series by $\omega$-algebraic systems in the first place.

An \emph{$\omega$-algebraic system over the quemiring} $\stimes$ consists of an algebraic system over $\stimes$
\[y=p(y),\hspace{0.5cm}p\in(S\langle(\Sigma\cup Y)^*\rangle)^{n\times 1}\mpunkt\]
The vector of quemiring elements $\tau\in(\stimes)^n$ is a \emph{solution} of the $\omega$-algebraic system
\[y = p(y)\mkomma\]
if
\[\tau = p(\tau)\mpunkt\]

Note that every $p_i$ is a polynomial, i.e., a finite sum of monomials in $S\langle(\Sigma\cup Y)^*\rangle$. Let $y_i=(x_i,z_i)$, for $1\leq i \leq n$. Now, we can apply the quemiring addition and multiplication to $p$.

Consider a monomial
\[t(y_1,\ldots,y_n)=s w_0 y_{i_1}w_1 \ldots w_{k-1}y_{i_k}w_k\mkomma\]
where $s\in S$ and $w_i\in \Sigma^*$ for $1\leq i\leq k$.
Note that from the quemiring operations, we have
\[t((x_1,z_1),\ldots,(x_n,z_n))=(s w_0 x_{i_1}w_1 \ldots w_{k-1}x_{i_k}w_k,\;s w_0 z_{i_1} + s w_0 x_{i_1} w_1 z_{i_2} + \ldots + s w_0 x_{i_1}w_1 \cdots w_{k-2}x_{i_{k-1}}w_{k-1}z_{i_k})\mpunkt\]
Therefore, following Ésik, Kuich~\cite{MAT}, p.\@ 138, we define
\[t_x(x_1,\ldots,x_n,z_1,\ldots,z_n) = s w_0 z_{i_1} + s w_0 x_{i_1} w_1 z_{i_2} + \ldots + s w_0 x_{i_1}w_1 \cdots w_{k-2}x_{i_{k-1}}w_{k-1}z_{i_k}\mkomma\]
and for a polynomial $p(y_1,\ldots,y_n)=\sum_{1\leq j\leq m}t_j(y_1,\ldots,y_n)$, we let
\[p_x(x_1,\ldots,x_n,z_1,\ldots,z_n)=\sum_{1\leq j\leq m}(t_j)_x(x_1,\ldots,x_n,z_1,\ldots,z_n)\mpunkt\]

For an $\omega$-algebraic system $y=p(y)$ over $\stimes$, we call $x=p(x)$, $z=p_x(x,z)$ the \emph{mixed $\omega$-algebraic system over} $\stimes$ \emph{induced by} $y= p(y)$.

In general, a \emph{mixed $\omega$-algebraic system} over the quemiring $\stimes$ consists of an algebraic system over $\ssigstar$
\[x=p(x),\hspace{0.5cm}p\in(S\langle(\Sigma\cup X)^*\rangle)^{n\times 1}\]
and a linear system over $\ssigomega$
\[z=\varrho(x)z,\hspace{0.5cm}\varrho\in(S\langle(\Sigma\cup X)^*\rangle)^{m\times m}\mpunkt\]
The pair $(\sigma,\omega)\in(\ssigstar)^n\times(\ssigomega)^m$ is a \emph{solution} of the mixed $\omega$-algebraic system
\[x=p(x),\;\;z=\varrho(x)z\mkomma\]
if
\[\sigma=p(\sigma),\;\;\omega=\varrho(\sigma)\omega\mpunkt\]
Observe that, by Theorem~5.5.1 of Ésik, Kuich~\cite{MAT}, $\omega^{(k)}=\varrho(\sigma)^{\omega,k}$ for each $1\leq k\leq n$, is solution for the linear system
\[z=\varrho(\sigma)z\mpunkt\]
A solution $(\sigma_1,\ldots,\sigma_n)$ of the algebraic system $x=p(x)$ is termed \emph{least solution} if
\[\sigma_i \leq \tau_i,\hspace{.5cm}\text{for each }1\leq i\leq n,\]
for all solutions $(\tau_1,\ldots,\tau_n)$ of $x=p(x)$.

If $\sigma$ is the least solution of $x=p(x)$, then $z=\varrho(\sigma)z$ is an $\salg$-linear system and $(\sigma,\omega^{(k)})=(\sigma,\varrho(\sigma)^{\omega,k})$, where $k\in\{0,1,\ldots,m\}$, is called $k$\textsuperscript{th} \emph{canonical solution} of $x=p(x)$, $z=\varrho(x)z$. 
Observe that the $k$\textsuperscript{th} canonical solution is unique by definition.
A solution $(\sigma,\omega)$ is called $canonical$, if there exists a $k$ such that $(\sigma,\omega)$ is the $k$\textsuperscript{th} canonical solution. The $k$\textsuperscript{th} canonical solution of an $\omega$-algebraic system $y=p(y)$ is defined to be the $k$\textsuperscript{th} canonical solution of the mixed $\omega$-algebraic system $x=p(x)$, $z=p_x(x,z)$ induced by $y=p(y)$.

Recall that $\salg$ is the collection of \emph{algebraic series}, i.e., of all components of least solutions of algebraic systems
\[x_i=p_i\hspace{0.5cm}\text{where }p_i\in S\langle(\Sigma\cup X)^*\rangle\text{ for }1\leq i\leq n\mpunkt\]
We define $\salgomega$ to be the collection of all components of vectors $M^{\omega,k}$, where $M\in(\salg)^{n\times n}$, $n\geq 1$, and $k\in\{1,\ldots,n\}$ and call it the collection of \emph{$\omega$-algebraic series}.

\begin{example}
\label{ex:algebraic_system}
  We consider the following $\omega$-algebraic system over the quemiring $\B \langle\langle\Sigma^*\rangle\rangle\times \B \langle\langle\Sigma^\omega\rangle\rangle$ for the Boolean semiring $\langle \B,+,\cdot,0,1\rangle$
  \begin{align*}
    y_1 &= y_2 y_1 + \epsilon\\
    y_2 &= a y_2 b + \epsilon\mkomma
  \end{align*}
  where $a,b\in\Sigma$.
  This induces the following mixed $\omega$-algebraic system
  \begin{alignat*}{4}
    x_1 &= x_2 x_1 + \epsilon & z_1 &= z_2 + x_2 z_1\\
    x_2 &= a x_2 b + \epsilon  \hspace{2cm} & z_2 &= a z_2\mpunkt
  \end{alignat*}

  Then for the algebraic system $x=p(x)$ over $\B \langle\langle\Sigma^*\rangle\rangle$, we get the least solution $\sigma_2=\sum_{n\geq0}a^n b^n$ and therefore $\sigma_1=(\sum_{n\geq0}a^n b^n)^*$.
For the semimodule part, we can consider the first canonical solution where only $z_1$ is Büchi-accepting and the second canonical solution where both $z_1$ and $z_2$ are Büchi-accepting.
The first canonical solution of the mixed $\omega$-algebraic system $x=p(x)$, $z=p_x(x,z)$ over $\B \langle\langle\Sigma^*\rangle\rangle\times\B \langle\langle\Sigma^\omega\rangle\rangle$ is then $(\sigma_1,\sigma_2;(\sum_{n\geq0}a^n b^n)^\omega,0)$. The second canonical solution would be $(\sigma_1,\sigma_2;(\sum_{n\geq0}a^n b^n)^\omega+(\sum_{n\geq0}a^n b^n)^*a^\omega,a^\omega)$.
\end{example}
\begin{example}
\label{ex:mixed_algebraic_system}
  We consider the following mixed $\omega$-algebraic system over the quemiring $\N^\infty \langle\langle\Sigma^*\rangle\rangle\times \N^\infty \langle\langle\Sigma^\omega\rangle\rangle$ for the tropical semiring $\langle \N^\infty,\min,+,\infty,0\rangle$
  \begin{alignat*}{4}
    x_1 &= 1a x_1 b + 1 a b \hspace{2cm} & z_1 &= c z_1\\
    & & z_2 &= x_1 z_1 +z_1
  \end{alignat*}
  where $a,b,c\in\Sigma$ and using the natural number $1$.

  Then for the algebraic system $x=p(x)$ over $\N^\infty \langle\langle\Sigma^*\rangle\rangle$, we get the least solution $\sigma=a^n b^n\mapsto n$ for $n\geq 1$.
The first canonical solution of the mixed $\omega$-algebraic system $x=p(x)$, $z=\varrho(x)z$ over $\N^\infty \langle\langle\Sigma^*\rangle\rangle\times\N^\infty \langle\langle\Sigma^\omega\rangle\rangle$ is then $(\sigma,c^\omega\mapsto 0, a^n b^n c^\omega \mapsto n)$ for $n\geq 0$. Hence the series $a^n b^n c^\omega \mapsto n$ is $\omega$-algebraic but it is clearly not recognizable by a weighted automaton without stack. Note for the sake of completeness that in this particular example, the second canonical solution is identical to the first because no other infinite paths are possible.
\end{example}




Now we have the following characterization of $\omega$-algebraic series.

\begin{theorem}
  \label{thm:3.1}
  Let $S$ be a continuous complete star-omega semi\-ring with the underlying semiring $S$ being commutative and let $\Sigma$ be an alphabet. Then the following statements are equivalent for $(s,\upsilon)\in\stimes$:
  \begin{enumerate}[(i)]
  \item $(s,\upsilon)\in\salg\times\salgomega$,\label{i}
  \item $s\in\salg$ and $\upsilon=\sum_{1\leq j\leq l} s_j t_j^\omega$ for some $l\geq 0$, where $s_j, t_j\in \salg$,\label{iv}
  \item $(s,\upsilon)$ is a component of a canonical solution of a mixed $\omega$-algebraic system over $\stimes$.\label{vi}
  \end{enumerate}
\end{theorem}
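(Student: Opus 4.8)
The plan is to prove the theorem by establishing a cycle of implications $(\ref{i}) \Rightarrow (\ref{iv}) \Rightarrow (\ref{vi}) \Rightarrow (\ref{i})$, thereby showing all three statements are equivalent. The central tool is the definition of $\salgomega$ as the collection of components of vectors $M^{\omega,k}$ for $M \in (\salg)^{n\times n}$, together with the matrix operation identities (the matrix omega identity and Theorem~\ref{thm:omegak}) and the correspondence between mixed $\omega$-algebraic systems and their canonical solutions $(\sigma, \varrho(\sigma)^{\omega,k})$.

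\medskip

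\noindent\textbf{The implication $(\ref{i}) \Rightarrow (\ref{iv})$.} Here I start from $\upsilon \in \salgomega$, so by definition $\upsilon$ is a component of $M^{\omega,k}$ for some matrix $M \in (\salg)^{n\times n}$ and some $k \in \{1,\ldots,n\}$. The goal is to unfold this matrix operation into an explicit finite sum $\sum_{1\leq j \leq l} s_j t_j^\omega$ with $s_j, t_j \in \salg$. The natural approach is induction on the dimension $n$, using the recursive block definition~\eqref{eqn:omega_buchi_1} of $M^{\omega,t}$ (and Theorem~\ref{thm:omegak} to handle the partitioning freely). For $n=1$ and $k=1$ we get $M^\omega = (a^\omega)$, which is of the desired form with $s_1 = 1$, $t_1 = a$. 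In the inductive step, the block formula expresses the components of $M^{\omega,t}$ in terms of $(a + bd^*c)^{\omega,t}$ premultiplied by algebraic series such as $d^*c$; since $\salg$ is closed under $+$, $\cdot$ and $^*$ (these stay within the algebraic series, as $\salg$ is a Conway semiring), each resulting term is again a finite $S$-linear-type combination $s_j t_j^\omega$ with algebraic coefficients. I must also use that $\salg$ is closed under these operations to keep $s, t$ algebraic.

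\medskip

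\noindent\textbf{The implication $(\ref{iv}) \Rightarrow (\ref{vi})$.} Given $s \in \salg$ and $\upsilon = \sum_{1\leq j\leq l} s_j t_j^\omega$ with all $s_j, t_j \in \salg$, I must build a mixed $\omega$-algebraic system whose canonical solution has $(s,\upsilon)$ as a component. The idea is to assemble the algebraic part $x = p(x)$ so that its least solution realizes $s$ and all the $s_j, t_j$ simultaneously as components (possible since each algebraic series is a component of the least solution of \emph{some} algebraic system, and systems can be combined into a single one over a larger variable set). For the linear $\omega$-part $z = \varrho(x)z$, I design $\varrho$ so that the relevant $\omega,k$-operation on $\varrho(\sigma)$ produces exactly $\sum_j s_j t_j^\omega$ in the appropriate component; each summand $s_j t_j^\omega$ is captured by a loop weighted $t_j$ at a Büchi-accepting state entered via weight $s_j$, so choosing which of the $m$ states are accepting fixes the index $k$ and picks out precisely the desired sum. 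Using $0^\omega = \epsilon^\omega = 0$ lets me zero out unwanted contributions.

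\medskip

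\noindent\textbf{The implication $(\ref{vi}) \Rightarrow (\ref{i})$.} This should be the most direct: if $(s,\upsilon)$ is a component of a canonical solution $(\sigma, \varrho(\sigma)^{\omega,k})$ of a mixed $\omega$-algebraic system, then $s = \sigma_i \in \salg$ by definition of the least solution of an algebraic system, and $\upsilon$ is a component of $\varrho(\sigma)^{\omega,k}$ where $\varrho(\sigma) \in (\salg)^{m\times m}$ (since $\sigma$ is algebraic and $\varrho$ has polynomial entries, so $\varrho(\sigma)$ has algebraic entries), hence $\upsilon \in \salgomega$ directly from the definition. \textbf{The main obstacle} I expect is the implication $(\ref{iv}) \Rightarrow (\ref{vi})$: the bookkeeping required to merge several independent algebraic systems into one while simultaneously engineering the matrix $\varrho$ and the accepting set so that the $\omega,k$-operation yields \emph{exactly} the prescribed finite sum, with no spurious cross-terms, is delicate — this is precisely where $\epsilon^\omega = 0$ and the explicit form of $M^{\omega,k}$ must be used to suppress unwanted infinite paths.
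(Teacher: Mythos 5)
Your proposal is correct and takes essentially the same route as the paper: the identical cycle of implications, with \eqref{vi}$\Rightarrow$\eqref{i} read directly from the definition of canonical solutions, \eqref{i}$\Rightarrow$\eqref{iv} obtained by unfolding the recursive block definition of $M^{\omega,k}$ using closure of $\salg$ under the rational operations (the paper states this in one line, leaving implicit the induction you spell out), and \eqref{iv}$\Rightarrow$\eqref{vi} realized exactly as in the paper by merging the algebraic systems for the $s_j,t_j$ into a single system and attaching a linear system whose matrix has $t_j$-weighted loops at the $l$ Büchi-accepting states, each entered with weight $s_j$ from one extra non-accepting state. The paper's explicit matrix computation of the $(l+1)$\textsuperscript{st} component of the $l$\textsuperscript{th} canonical solution confirms that the bookkeeping you flag as the main obstacle works out cleanly, with the zero blocks of the matrix (rather than $\epsilon^\omega=0$, which is only needed later for Corollary~\ref{cor:4.1}) suppressing the unwanted paths.
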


\newcommand{\dmatrix}[1]{\begin{pmatrix}#1_1 & & \\& \ddots & \\& & #1_l\end{pmatrix}}
\newcommand{\rmatrix}[1]{\begin{pmatrix}#1_1 & \cdots & #1_l\end{pmatrix}}

\newcommand{\dmatrixx}[1]{\begin{matrix}#1_1 & & \\& \ddots & \\& & #1_l\end{matrix}}
\newcommand{\dmatrixxx}[1]{\begin{matrix}#1_{l+1} & & \\& \ddots & \\& & #1_{l+l}\end{matrix}}

\begin{proof}




  \eqref{vi}$\Rightarrow$\eqref{i}: Assume there exists a mixed $\omega$-algebraic system $x=p(x), z=\varrho(x)z$, with canonical solution $(\sigma, \varrho(\sigma)^{\omega,k})$ such that $(s,\upsilon)=(\sigma_i,(\varrho(\sigma)^{\omega,k})_j)$ for some $i$ and $j$. Since the entries of $\sigma$ and $\varrho(\sigma)$ are in $\salg$, $(s,\upsilon)$ is in $\salgtimes$.

  \eqref{i}$\Rightarrow$\eqref{iv}: Now assume $s\in\salg$ and $\upsilon=(M^{\omega,k})_i$ for some $M\in(\salg)^{n\times n}$, $n\geq 1$, and $i,k\in\{1,\ldots,n\}$. By the definition of $M^{\omega,k}$, each entry of $M^{\omega,k}$ is of the form $\sum_{1\leq j\leq l}s_j t_j^{\omega}$ for some $l\geq 0$, where $s_j, t_j\in\salg$ for $1\leq j\leq l$.

  \eqref{iv}$\Rightarrow$\eqref{vi}: As $s_j, t_j\in \salg$, we can assume that there exist algebraic systems such that all series $s_j$ and $t_j$ ($1\leq j\leq l$) are components of least solutions of one of the algebraic systems. We additionally assume that their variables are distinct and write all algebraic systems together into one algebraic system over $\ssigstar$
  \begin{equation}
    \label{eqn:charact_finite}
    x=p(x)\mkomma
  \end{equation}
  and we order the variables such that for $1\leq j\leq l$, we have that the $j$\textsuperscript{th} component of its least solution is $s_j$ and the $(l+j)$\textsuperscript{th} component of its least solution is $t_j$.


  Now consider the linear system over $\ssigomega$
  \begin{equation}
    \label{eqn:charact_infinite}
    \begin{split}
      z_1 &= x_{l+1} z_1\\
      &\;\;\vdots\\
      z_l &= x_{l+l} z_l\\
      z_{l+1} &= \sum_{1\leq  j\leq l} x_j z_j\mpunkt
    \end{split}
  \end{equation}


  We now show that the last component of the $l$\textsuperscript{th} canonical solution of our mixed $\omega$-algebraic system \eqref{eqn:charact_finite}, \eqref{eqn:charact_infinite} is $\upsilon=\sum_{1\leq j\leq l}s_j t_j^\omega$.

  By assumption, we know the first $2l$ components of the least solution of \eqref{eqn:charact_finite}, i.e.,
  \[\sigma=(s_1,\ldots,s_l,t_1,\ldots,t_l,\sigma_{2l+1},\ldots)\mpunkt\]
  
  Now, we write \eqref{eqn:charact_infinite} as $z = M(x) z$ where
  \[M(x)=\left(
  \begin{array}{c|c}
    \dmatrixxx{x} & 0\\
    \hline\\[-1.5\medskipamount]
    \begin{matrix}
      x_1 & \cdots & x_l
    \end{matrix} &
    0
  \end{array}
  \right)\mpunkt\]
  We have
  \begin{align*}
    M(\sigma)^{\omega,l}_{l+1} &= \left(
    \begin{array}{c|c}
      \dmatrixx{t} & 0\\
      \hline\\[-1.5\medskipamount]
      \begin{matrix}
        s_1 & \cdots & s_l
      \end{matrix} &
      0
    \end{array}
    \right)^{\omega,l}_{l+1}\\
    &=
    \begin{pmatrix}
      \dmatrix{t}^\omega\\
      0^*(s_1,\ldots,s_l)\dmatrix{t}^\omega
    \end{pmatrix}_{l+1}\\
    &=(s_1,\ldots,s_l)
    \begin{pmatrix}
      t_1^\omega\\ \vdots\\ t_l^\omega
    \end{pmatrix}\\
    &= \sum_{1\leq j\leq l}s_j t_j^\omega\mpunkt\qedhere
  \end{align*}
\end{proof}


\section{Greibach Normal Form for Mixed \texorpdfstring{$\omega$}{omega}-Algebraic Systems}
\label{sec:greibach_normal_form}

In this section we show that for any element $(s,\upsilon)$ of $\salgtimes$ there exists a mixed $\omega$-algebraic system in Greibach normal form such that $(s,\upsilon)$ is a component of a solution of this $\omega$-algebraic system. We start by showing this property for \emph{mixed} $\omega$-algebraic systems because Theorem~\ref{thm:3.1}~\eqref{iv} gives us a powerful tool but only for separate $s$ and $\upsilon$, thus we construct equations for $s$ and equations for $\upsilon$ separately---a \emph{mixed} $\omega$-algebraic system.

Similar to the definition for algebraic systems on finite words (cf.\@ also Greibach~\cite{greibach}), a mixed $\omega$-algebraic system
\[x=p(x),\;\; z=\varrho(x)z\]
is in \emph{Greibach normal form} if
\begin{alignat*}{2}
&\text{supp}(p_i(x)) \subseteq \{\epsilon\}\cup\Sigma\cup \Sigma X\cup \Sigma X X,\qquad &&\text{for all }1\leq i\leq n,\hspace{.5cm} \text{ and}\\
&\text{supp}(\varrho_{i j}(x)) \subseteq \Sigma\cup \Sigma X,\;\; &&\text{for all }1\leq i,j\leq m\mpunkt
\end{alignat*}

For the construction of the Greibach normal form we need a corollary to Theorem~\ref{thm:3.1} specializing statement \eqref{iv}.
\begin{corollary}
  \label{cor:4.1}
  The following statement for $(s,\upsilon)\in\stimes$ is equivalent to the statements \eqref{i} to \eqref{vi} of Theorem~\ref{thm:3.1}:\\
$s\in\salg$ and $\upsilon=\sum_{1\leq j\leq l}s_j t_j^\omega$ for some $l\geq 0$, where $s_j,t_j\in\salg$ with $(t_j,\epsilon)=0$; moreover $(s_j,\epsilon)=0$ or $s_j=(s_j,\epsilon)\epsilon$.
\end{corollary}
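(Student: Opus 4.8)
The plan is to show that the displayed statement is equivalent to item~\eqref{iv} of Theorem~\ref{thm:3.1}; since \eqref{iv} is already equivalent to \eqref{i}--\eqref{vi}, this suffices. One direction is immediate, as the refined statement is a special case of \eqref{iv}. For the converse I would start from a representation $s\in\salg$, $\upsilon=\sum_{1\leq j\leq l}s_j t_j^\omega$ with $s_j,t_j\in\salg$ supplied by \eqref{iv}, and rewrite it term by term until the two extra conditions hold, keeping both the value of $\upsilon$ and membership in $\salg$ unchanged.

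First I would strip the constant term from each $t_j$. Write $c_j=(t_j,\epsilon)$ and let $\hat t_j$ be the proper part of $t_j$, i.e.\ the series agreeing with $t_j$ on all nonempty words and vanishing on $\epsilon$, so that $t_j=c_j\epsilon+\hat t_j$ and $(\hat t_j,\epsilon)=0$. Applying the sum omega identity with $a=c_j\epsilon$ and $b=\hat t_j$, and using $(c_j\epsilon)^*=c_j^*\epsilon$ together with $(c_j\epsilon)^\omega=0$, I obtain
\[t_j^\omega=\big((c_j\epsilon)^*\hat t_j\big)^\omega+\big((c_j\epsilon)^*\hat t_j\big)^*(c_j\epsilon)^\omega=(c_j^*\hat t_j)^\omega\mpunkt\]
Hence I may replace $t_j$ by $t_j'=c_j^*\hat t_j$, which is a scalar multiple of the proper part of an algebraic series and satisfies $(t_j',\epsilon)=c_j^*\cdot 0=0$, without changing the term $s_j t_j^\omega=s_j(t_j')^\omega$. (If $\hat t_j=0$ the whole term vanishes and may be dropped.)

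With every $t_j$ now proper, I would split each coefficient as $s_j=d_j\epsilon+\hat s_j$, where $d_j=(s_j,\epsilon)$ and $\hat s_j$ is the proper part of $s_j$, giving $s_j(t_j')^\omega=d_j\epsilon(t_j')^\omega+\hat s_j(t_j')^\omega$. The first summand has coefficient $d_j\epsilon=(d_j\epsilon,\epsilon)\epsilon$ of the required scalar-$\epsilon$ shape, and the second has coefficient $\hat s_j$ with $(\hat s_j,\epsilon)=0$; both coefficients lie in $\salg$. Summing over $j$ then expresses $\upsilon$ as a sum of at most $2l$ terms $s_k'(t_k')^\omega$ meeting all the imposed conditions, which is precisely the refined statement.

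Two points carry the argument and are where I would be careful. The first is the identity $(c\epsilon)^\omega=0$ for any $c\in S$: this is what lets me discard the constant part of $t_j$ inside the omega operation, and I would derive it from $\epsilon^\omega=0$ together with the fact that $c\epsilon$ is supported on $\{\epsilon\}$, so that no factorization of an infinite word uses only empty factors. The second, and the main obstacle, is ensuring that the proper parts $\hat t_j$ and $\hat s_j$ are again algebraic: since semirings have no subtraction, ``removing the $\epsilon$-coefficient'' cannot be treated as a difference but must be justified as a closure property. Here I would appeal to closure of $\salg$ under the Hadamard product with the recognizable characteristic series $\mathone_{\Sigma^+}$ of the nonempty words (using commutativity of $S$), which yields $\hat r=r\odot\mathone_{\Sigma^+}\in\salg$ for every $r\in\salg$.
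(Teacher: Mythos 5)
Your proposal is correct and takes essentially the same route as the paper's proof: both decompose $t_j=(t_j,\epsilon)\epsilon+t_j'$ with $(t_j',\epsilon)=0$ and apply the sum omega identity together with $\epsilon^\omega=0$ (so that $((t_j,\epsilon)\epsilon)^\omega=0$) to replace $t_j$ by $(t_j,\epsilon)^*t_j'$, and both then split $s_j=(s_j,\epsilon)\epsilon+s_j'$ and distribute over the sum. The only difference is that you explicitly justify that the proper parts stay in $\salg$ (via the Hadamard product with the recognizable series $\mathone_{\Sigma^+}$, using commutativity), a closure fact the paper's proof uses implicitly without comment.
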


\begin{proof}
  Assume $(s_j,\epsilon)\neq 0$. Then $s_j=(s_j,\epsilon)\epsilon+s_j'$ where $(s'_j,\epsilon)=0$, and $s_j t_j^\omega =(s_j,\epsilon)t_j^\omega + s_j' t_j^\omega$.

Assume $(t_j,\epsilon)\neq 0$. Then $t_j=(t_j,\epsilon)\epsilon+t_j'$, where $(t_j',\epsilon)=0$. Since $\spair$ is a Conway semiring-semimodule pair satisfying $\epsilon^\omega=0$, we obtain $t_j^\omega=((t_j,\epsilon)^* \epsilon^* t_j')^\omega$ with $(t_j,\epsilon)^*\epsilon^* t_j'\in \salg$, since $((t_j,\epsilon)\epsilon)^\omega = (t_j,\epsilon)^\omega \epsilon^\omega = 0$.
\end{proof}

We now assume that $(s,\upsilon)\in\salg\times \ssigomega$ is given in the form of Corollary~\ref{cor:4.1} with $l=1$. By Theorem~2.4.10 of Ésik, Kuich~\cite{MAT}, there exist algebraic systems in Greibach normal form whose first component of their least solutions equals $s_1$, $t_1$.

Firstly, we deal with the case $(s_1,\epsilon)=0$. Let
\begin{align}
x_i=p_i(x)+\sum_{1\leq j\leq n} p_{i j}(x) x_j,\;\;\text{for each } 1\leq i\leq n,\label{*}\tag{$\ast$}
\end{align}
where $\text{supp}(p_i(x))\subseteq\Sigma\cup \Sigma X$, $\text{supp}(p_{i j}(x))\subseteq\Sigma X$, be the algebraic system in Greibach normal form for $s_1$ and
\begin{align}
  x_i'=p_i'(x')+\sum_{1\leq j\leq m} p'_{i j}(x') x'_j,\;\;\text{for each } 1\leq i\leq m,\label{**}\tag{$\ast\ast$}
\end{align}
where $\text{supp}(p_i'(x'))\subseteq\Sigma\cup \Sigma X'$, $\text{supp}(p_{i j}(x'))\subseteq\Sigma X'$, be the algebraic system in Greibach normal form for $t_1$. Let $\sigma$ and $\sigma'$ with $\sigma_1=s_1$ and $\sigma'_1=t_1$ be the least solutions of (\ref{*}) and (\ref{**}), respectively.

Consider now the mixed $\omega$-algebraic system consisting of the algebraic system (\ref{*}), (\ref{**}) over $\ssigstar$ and the linear system over $\ssigomega$
\begin{equation*}
  \begin{alignedat}{2}
    z'' &= p_1'(x')z'' + \sum_{1\leq j\leq m} p'_{1 j}(x') z_j'\mkomma\\
    z_i' &= p_i'(x')z''+\sum_{1\leq j\leq m} p'_{i j}(x') z_j',\hspace{.5cm} &&\text{for }1\leq i\leq m\mkomma\\
    z_i &= p_i(x) z''+\sum_{1\leq j\leq n} p_{i j}(x) z_j,\;&&\text{for }1\leq i\leq n\mpunkt
  \end{alignedat}\label{***}\tag{$\ast\ast\ast$}
\end{equation*}

Observe that the mixed $\omega$-algebraic system is in Greibach normal form. We then order the variables of the mixed $\omega$-algebraic system (\ref{*}), (\ref{**}), (\ref{***}) as $x_1,\ldots, x_n;\allowbreak x_1',\ldots,x_{m}';z'';\allowbreak z_1',\ldots,z'_{m};\allowbreak z_1,\ldots,z_n$.
After an example, we will prove that
\begin{align}
  (\sigma_1,\ldots,\sigma_n;\sigma'_1,\ldots,\sigma'_m;\sigma_1'\sigma_1'^{\omega};\sigma_1'\sigma_1'^\omega,\ldots,\sigma_{m}'\sigma_1'^\omega;\sigma_1\sigma_1'^\omega,\ldots,\sigma_n\sigma_1'^\omega)\label{1}
\end{align}
is a canonical solution of (\ref{*}), (\ref{**}), (\ref{***}). Observe that $\sigma_1'\sigma_1'^\omega=\sigma_1'^\omega$.

\begin{example}
\label{ex:finite_simple}
Consider the quemiring $\N^\infty\! \langle\langle\Sigma^*\rangle\rangle\times \N^\infty\! \langle\langle\Sigma^\omega\rangle\rangle$ for the tropical semiring $\langle \N^\infty\!,\min,+,\infty,0\rangle$. Note that subsequently, $1$ stands for the natural number $1$ and the neutral element of the semiring multiplication is $\mathone=0$.

  We now define algebraic systems in Greibach normal form for $s=a^n b^n\mapsto n$ and $t=((d d)^* c)\mapsto 0$. Let
  \begin{alignat*}{4}
    x_1 &= 1 a x_2 + 1a x_1 x_2\hspace{2cm} & x_1' &= c + d x_2' x_1'\\
    x_2 &= b                                & x_2' &= d
  \end{alignat*}
  Here, $x_1$ is the start variable for $s$ and $x_1'$ is the start variable for $t$. In the proof, these two systems are called~\eqref{*} and \eqref{**}. Now, we construct a mixed $\omega$-algebraic system:
  \begin{alignat*}{4}
    z'' &= c z''+ d x_2' z_1'&&\\
    z_1' &= c z''+ d x_2' z_1' & z_2' &= d z''\\
    z_1 &= 1a x_2 z'' + 1 a x_1 z_2\hspace{1cm} & z_2 &= b z''
  \end{alignat*}
  In the new system (corresponding to~\eqref{***}), variable $z''$ is Büchi-accepting and variable $z_1$ acts as the start variable, i.e., we consider the fourth component (with the ordering $z'', z_1', z_2', z_1, z_2$) of the first canonical solution. The semimodule part of the solution is $s t^\omega= a^n b^n ((dd)^*c)^\omega\mapsto n$. Note that the equation for $z''$ is needed in this example because $z_1'$ is not allowed to be Büchi-accepting to prevent $(dd)^\omega$ as part of the canonical solution.
\end{example}


\begin{lemma}
  \label{lem:canonical_1}
  The tuple (\ref{1}) is the first canonical solution of the mixed $\omega$-algebraic system (\ref{*}), (\ref{**}), (\ref{***}).
\end{lemma}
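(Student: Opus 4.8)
The plan is to evaluate the first canonical solution directly from its definition. By definition the first canonical solution is $(\sigma,\varrho(\sigma)^{\omega,1})$, where $\sigma$ is the least solution of the finite-word part (\ref{*}), (\ref{**}) and $\varrho(\sigma)$ is the coefficient matrix of the linear system (\ref{***}) evaluated at $\sigma$. First I would note that, since (\ref{*}) and (\ref{**}) use disjoint variable sets and neither refers to the other's variables, the least solution of their union is the concatenation $(\sigma_1,\ldots,\sigma_n,\sigma_1',\ldots,\sigma_m')$; this already gives the finite-word block of (\ref{1}). It therefore remains to show that $\varrho(\sigma)^{\omega,1}$ equals the $\omega$-word block of (\ref{1}).

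Next I would write (\ref{***}) in the form $z=\varrho(\sigma)z$ under the variable order $z'';z_1',\ldots,z_m';z_1,\ldots,z_n$. Abbreviating $\pi=(p_i(\sigma))_i$, $P=(p_{ij}(\sigma))_{ij}$, $\pi'=(p_i'(\sigma'))_i$, $P'=(p_{ij}'(\sigma'))_{ij}$, and letting $\beta'$ be the first row of $P'$, the matrix reads
\[\varrho(\sigma)=\begin{pmatrix} \pi_1' & \beta' & 0\\ \pi' & P' & 0\\ \pi & 0 & P\end{pmatrix},\]
the two zero blocks reflecting that no $z_i'$-equation mentions a $z_j$ and no $z_i$-equation mentions a $z_j'$. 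I would then record the identities $\sigma=P^*\pi$ and $\sigma'=P'^*\pi'$: evaluating (\ref{*}) and (\ref{**}) at their least solutions gives $\sigma=\pi+P\sigma$ and $\sigma'=\pi'+P'\sigma'$, and since Greibach normal form forces $\text{supp}(p_{ij})\subseteq\Sigma X$ (and likewise for $p_{ij}'$), the matrices $P,P'$ are proper; hence these linear fixed-point equations have the unique solutions $P^*\pi$ and $P'^*\pi'$, which must coincide with $\sigma,\sigma'$.

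Finally I would apply the definition \eqref{eqn:omega_buchi_1} of ${}^{\omega,t}$ with $t=1$, for which the partition is forced: $a=\pi_1'$, $b=(\beta'\ 0)$, $c=\bigl(\begin{smallmatrix}\pi'\\\pi\end{smallmatrix}\bigr)$, and the block-diagonal $d=\bigl(\begin{smallmatrix}P'&0\\0&P\end{smallmatrix}\bigr)$, so that $d^*=\bigl(\begin{smallmatrix}P'^*&0\\0&P^*\end{smallmatrix}\bigr)$. Computing the scalar $a+bd^*c=\pi_1'+\beta'P'^*\pi'$ and using the matrix form of $a^*=1+a a^*$ from \eqref{eqn:conway_derived} on the first row, I get $a+bd^*c=(P'^*\pi')_1=\sigma_1'$, so the top component $(a+bd^*c)^\omega=\sigma_1'^\omega=\sigma_1'\sigma_1'^\omega$ by the omega fixed-point equation. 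For the lower block, $d^*c=\bigl(\begin{smallmatrix}P'^*\pi'\\P^*\pi\end{smallmatrix}\bigr)=\bigl(\begin{smallmatrix}\sigma'\\\sigma\end{smallmatrix}\bigr)$, whence $d^*c(a+bd^*c)^\omega=\bigl(\begin{smallmatrix}\sigma'\\\sigma\end{smallmatrix}\bigr)\sigma_1'^\omega$, which is exactly the list $\sigma_1'\sigma_1'^\omega,\ldots,\sigma_m'\sigma_1'^\omega;\sigma_1\sigma_1'^\omega,\ldots,\sigma_n\sigma_1'^\omega$ appearing in (\ref{1}).

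I expect the main obstacle to be the two least-solution identities $\sigma=P^*\pi$ and $\sigma'=P'^*\pi'$ of the middle paragraph: everything afterwards is a mechanical block computation, but these identities are precisely what makes the $\omega$-power collapse to $\sigma_1'^\omega$ and to the scaled components $\sigma_i\sigma_1'^\omega$, and they genuinely rely on the Greibach normal form rendering $P$ and $P'$ proper so that the frozen linear systems have unique solutions. A secondary point to handle with care is verifying that the block-diagonal shape of $d$ is really forced by (\ref{***}), since it is this that makes $d^*$ split and annihilates all cross terms between the primed and unprimed parts.
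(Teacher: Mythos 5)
Your proposal is correct and follows essentially the same route as the paper's proof: the same variable ordering $z'';z'_1,\ldots,z'_m;z_1,\ldots,z_n$ and block structure for the coefficient matrix of (\ref{***}), the same identity $a+b\,d^*c=\sigma'_1$ (the paper's equation~(\ref{4})) giving the top component $\sigma_1'^\omega$, and the same evaluation $d^*c\,(a+b\,d^*c)^\omega=\bigl(\begin{smallmatrix}\sigma'\\ \sigma\end{smallmatrix}\bigr)\sigma_1'^\omega$ for the remaining components. The only minor divergence is how you justify $\sigma=P^*\pi$ and $\sigma'=P'^*\pi'$ (the paper's identities (\ref{2}) and (\ref{3})): the paper invokes the diagonal identity (Proposition~2.2.11 of Ésik, Kuich) to transfer the least solution to the linear system with frozen coefficients, whereas you combine the fixed-point property of $\sigma,\sigma'$ with uniqueness of solutions of proper linear systems (properness being forced by the Greibach normal form); both arguments are sound, so this is a cosmetic rather than substantive difference.
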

\begin{proof}
  Let
  \begin{alignat*}{4}
    P'_{1 m}(x') &=\begin{pmatrix}p_{1 1}'(x') &\cdots &p_{1 m}'(x')\end{pmatrix}, \hspace{0.4cm}\\
    P'_{m 1}(x') &= \begin{pmatrix}p_1'(x')\\\vdots\\p'_{m}(x')\end{pmatrix}, &
    P'_{m m}(x') &=\begin{pmatrix}p'_{1 1}(x') & \ldots & p'_{1 m}(x')\\\vdots & & \vdots\\p'_{m 1}(x') & \ldots &p'_{m m}(x')\end{pmatrix},\\
    P_{n 1}(x) &= \begin{pmatrix}p_1(x)\\\vdots\\p_n(x)\end{pmatrix}, &
    P_{n n}(x) &= \begin{pmatrix}p_{1 1}(x) & \ldots & p_{1 n}(x)\\\vdots & & \vdots\\p_{n 1}(x) & \ldots &p_{n n}(x)\end{pmatrix},\\
    z &= \begin{pmatrix}z_1\\\vdots\\ z_n\end{pmatrix},&
    z' &= \begin{pmatrix}z_1'\\\vdots\\ z_{m}'\end{pmatrix},
  \end{alignat*}
and
\begin{equation*}
  M(x,x')=\begin{pmatrix}p'_1(x') & P'_{1 m}(x') & 0\\ P'_{m 1}(x') & P'_{m m}(x') & 0\\ P_{n 1}(x) & 0 & P_{n n}(x)\end{pmatrix}\mpunkt
\end{equation*}

Then the linear system~(\ref{***}) can be written in the form
\begin{equation*}
  \begin{pmatrix}z''\\z'\\z\end{pmatrix}=M(x,x')\begin{pmatrix}z''\\z'\\z\end{pmatrix}\mpunkt
\end{equation*}

Hence, the first canonical solution of (\ref{*}), (\ref{**}), (\ref{***}) is $(\sigma, \sigma', M(\sigma,\sigma')^{\omega,1})$. Before we prove our lemma, we prove three identities.

The system (\ref{*}) can be written in the form
\begin{equation*}
  x=P_{n 1}(x)+P_{n n}(x)x,\quad\text{for } x=(x_1,\ldots,x_n)^{\mathsf T}\mpunkt
\end{equation*}
By the diagonal identity (see Proposition 2.2.11 of Ésik, Kuich~\cite{MAT}) the system
\begin{equation*}
  x= P_{n 1}(\sigma)+P_{n n}(\sigma)x
\end{equation*}
has the same least solution as~(\ref{*}). Hence,
\begin{equation}
  \sigma = P_{n n}(\sigma)^* P_{n 1}(\sigma)\mpunkt\label{2}
\end{equation}

The system (\ref{**}) can be written in the form
\begin{equation*}
  x' = P'_{m 1}(x')+P'_{m m}(x') x', \quad\text{for }x'=(x_1',\ldots,x'_{m})^{\mathsf T}\mpunkt
\end{equation*}
Again, by the diagonal identity (see Proposition 2.2.11 of Ésik, Kuich~\cite{MAT}) the system
\begin{equation*}
  x' = P'_{m 1}(\sigma')+P'_{m m}(\sigma') x'
\end{equation*}
has the same solution. Hence
\begin{align}
  \sigma' = P_{m m}'(\sigma')^*P_{m 1}'(\sigma')\mpunkt\label{3}
\end{align}
It follows for the first component
\begin{align}
  \sigma_1' &= \left(P_{m m}'(\sigma')^*P_{m 1}'(\sigma')\right)_1\notag\\
  &= \left(P_{m 1}'(\sigma')+P_{m m}'(\sigma')^+ P_{m 1}'(\sigma')\right)_1\notag\\
  &= \left(P_{m 1}'(\sigma')+P_{m m}'(\sigma')P_{m m}'(\sigma')^* P_{m 1}'(\sigma')\right)_1\notag\\
  &= p_1'(\sigma')+P_{1 m}'(\sigma')P_{m m}'(\sigma')^* P_{m 1}'(\sigma')\mpunkt\label{4}
\end{align}

We now compute
\begin{align*}
  (M^{\omega,1}(\sigma,\sigma'))_{z''} &= \left[p_1'(\sigma')+\begin{pmatrix}P_{1 m}'(\sigma') & 0\end{pmatrix}\begin{pmatrix} P_{m m}'(\sigma') & 0\\ 0 & P_{n n}(\sigma)\end{pmatrix}^*\begin{pmatrix} P_{m 1}'(\sigma')\\P_{n 1}(\sigma)\end{pmatrix}\right]^\omega\\
  &= \left[p_1'(\sigma')+\begin{pmatrix}P_{1 m}'(\sigma') & 0\end{pmatrix}\begin{pmatrix} P_{m m}'(\sigma')^* & 0\\ 0 & P_{n n}(\sigma)^*\end{pmatrix}\begin{pmatrix} P_{m 1}'(\sigma')\\P_{n 1}(\sigma)\end{pmatrix}\right]^\omega\\
  &= \left[p_1'(\sigma') + P_{1 m}'(\sigma') P_{m m}'(\sigma')^* P_{m 1}'(\sigma')\right]^\omega\\
  &= \sigma_1'^\omega\mpunkt
\end{align*}
The last equality is by~(\ref{4}).

When starting with another variable $z_i$ or $z_j'$ for $1\leq i\leq n$ and $1\leq j\leq m$, we get
\begin{align*}
  (M^{\omega,1}(\sigma,\sigma'))_{(z',z)} &= \begin{pmatrix} P_{m m}'(\sigma') & 0\\ 0 & P_{n n}(\sigma)\end{pmatrix}^*\begin{pmatrix} P_{m 1}'(\sigma')\\P_{n 1}(\sigma)\end{pmatrix}(M^{\omega,1}(\sigma,\sigma'))_{z''}\\
  &= \begin{pmatrix} P_{m m}'(\sigma')^* & 0\\ 0 & P_{n n}(\sigma)^*\end{pmatrix}\begin{pmatrix} P_{m 1}'(\sigma')\\P_{n 1}(\sigma)\end{pmatrix} \sigma_1'^\omega\\
  &= \begin{pmatrix} P_{m m}'(\sigma')^*P_{m 1}'(\sigma')\\ P_{n n}(\sigma)^*P_{n 1}(\sigma)\end{pmatrix} \sigma_1'^\omega
\end{align*}
Thus, by~(\ref{3}), we have, for $1\leq i\leq m$,
\begin{equation*}
  (M^{\omega,1}(\sigma,\sigma'))_{z'_i}=\left[P'_{m m}(\sigma')^*P'_{m 1}(\sigma')\right]_i\sigma_1'^\omega=\sigma_i'\sigma_1'^\omega\mkomma
\end{equation*}
and, by~(\ref{2}), we have, for $1\leq i\leq n$,
\begin{equation*}
  (M^{\omega,1}(\sigma,\sigma'))_{z_i}=\left[P_{n n}(\sigma)^*P_{n 1}(\sigma)\right]_i\sigma_1'^\omega=\sigma_i\sigma_1'^\omega\mpunkt
\end{equation*}
This completes the proof.
\end{proof}


Secondly, we deal with the case $s_1=(s_1,\epsilon)\epsilon$. Consider now the mixed $\omega$-algebraic system consisting of~(\ref{**}) and the linear system over $\ssigomega$

\begin{equation*}
  \begin{aligned}
    z'' &= p_1'(x')z'' + \sum_{1\leq j\leq m} p'_{1 j}(x') z_j'\mkomma\\
    z_i' &= p_i'(x')z''+\sum_{1\leq j\leq m} p'_{i j}(x') z_j',\;1\leq i\leq m\mkomma\\
    z_1 &= (s_1,\epsilon)p_1'(x')z''+(s_1,\epsilon)\sum_{1\leq j\leq m} p'_{1 j}(x') z_j'\mpunkt
  \end{aligned}\label{****}\tag{$\ast\!\ast\!\ast\ast$}
\end{equation*}


\begin{lemma}
  \label{lem:canonical_2}
  The first canonical solution of the mixed algebraic system (\ref{**}), (\ref{****}) is
  \begin{equation}\label{sol_2}
    (\sigma_1',\ldots,\sigma_m';\sigma_1'\sigma_1'^\omega;\sigma_1'\sigma_1'^\omega,\ldots,\sigma_{m}'\sigma_1'^\omega;(s_1,\epsilon)\sigma_1'^\omega)\mpunkt
  \end{equation}
\end{lemma}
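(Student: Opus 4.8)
The plan is to proceed exactly as in the proof of Lemma~\ref{lem:canonical_1}. Since here $s_1=(s_1,\epsilon)\epsilon$, there is no separate finite-word system for $s_1$: the finite-word part of (\ref{****}) is just (\ref{**}), whose least solution is $\sigma'$ with $\sigma_1'=t_1$. Hence the finite-word part of the first canonical solution is $\sigma_1',\ldots,\sigma_m'$, and it remains to compute the semimodule part $M(\sigma')^{\omega,1}$, where $M(x')$ is the coefficient matrix of the linear system (\ref{****}). Ordering the $\omega$-variables as $z'';z_1',\ldots,z_m';z_1$ and reusing the block notation $P'_{1m},P'_{m1},P'_{mm}$ of Lemma~\ref{lem:canonical_1}, this matrix is
\[M(x')=\begin{pmatrix} p_1'(x') & P'_{1m}(x') & 0\\ P'_{m1}(x') & P'_{mm}(x') & 0\\ (s_1,\epsilon)p_1'(x') & (s_1,\epsilon)P'_{1m}(x') & 0\end{pmatrix}\mpunkt\]

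The two structural features that drive the computation are that the last column of $M$ vanishes (the variable $z_1$ never occurs on a right-hand side, so it is a sink) and that its last row is $(s_1,\epsilon)$ times its first row. I would exploit the first feature through Theorem~\ref{thm:omegak}: partition $M(\sigma')$ into the top-left $(m+1)\times(m+1)$ block $a'$ (the $z'',z'$ part) and the $1\times1$ block $d'=(0)$. Because the off-diagonal block $b'$ from $(z'',z')$ to $z_1$ is zero and $d'^*=0^*=1$, formula (\ref{eqn:omega_buchi_2}) collapses (with $t=1$, $k=m+1$, $n=m+2$) to
\[M(\sigma')^{\omega,1}=\begin{pmatrix} a'^{\omega,1}\\ c'\,a'^{\omega,1}\end{pmatrix}\mkomma\]
where $c'=(s_1,\epsilon)\big(p_1'(\sigma')\ \ P'_{1m}(\sigma')\big)$ is the $z_1$-row restricted to the $(z'',z')$ columns.

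Now $a'$ is precisely the $(z'',z')$-submatrix already analysed in Lemma~\ref{lem:canonical_1}. Applying the definition (\ref{eqn:omega_buchi_1}) with $t=1$, i.e.\ splitting off the single accepting variable $z''$, the same computation as in Lemma~\ref{lem:canonical_1} (now without the $P_{nn}$-block) yields, via identities (\ref{3}) and (\ref{4}) which concern only system (\ref{**}), that the $z''$-component of $a'^{\omega,1}$ is $\sigma_1'^\omega$ and its $z_i'$-components are $\sigma_i'\sigma_1'^\omega$; this already reproduces the $z''$- and $z'$-entries of (\ref{sol_2}). For the last component I would then compute
\[c'\,a'^{\omega,1}=(s_1,\epsilon)\Big(p_1'(\sigma')\sigma_1'^\omega+\sum_{1\leq j\leq m}p'_{1j}(\sigma')\,\sigma_j'\sigma_1'^\omega\Big)=(s_1,\epsilon)\Big(p_1'(\sigma')+\sum_{1\leq j\leq m}p'_{1j}(\sigma')\sigma_j'\Big)\sigma_1'^\omega\mpunkt\]
Since $\sigma'$ solves (\ref{**}), its first equation gives $p_1'(\sigma')+\sum_j p'_{1j}(\sigma')\sigma_j'=\sigma_1'$, so $c'\,a'^{\omega,1}=(s_1,\epsilon)\sigma_1'\sigma_1'^\omega=(s_1,\epsilon)\sigma_1'^\omega$ by the omega fixed-point equation. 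Assembling the components in the chosen order reproduces (\ref{sol_2}).

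The computation is almost entirely routine once Lemma~\ref{lem:canonical_1} is available; the only genuinely new point is the correct treatment of the sink variable $z_1$ inside the matrix $\omega$-operation. The expected obstacle is therefore pure bookkeeping: justifying that $z_1$ contributes exactly through $c'\,a'^{\omega,1}$ and in no other way, which is precisely what the partition-independence supplied by Theorem~\ref{thm:omegak} gives. Alternatively, one could avoid invoking it and apply (\ref{eqn:omega_buchi_1}) directly to $M(\sigma')$, computing the star of the lower-triangular block $d$ via (\ref{matrix_star_1}); this yields the same value and stays even closer to the argument of Lemma~\ref{lem:canonical_1}.
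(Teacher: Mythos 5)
Your proof is correct. It is worth noting that the ``alternative'' you sketch in your final paragraph is precisely the paper's own proof: there, definition~\eqref{eqn:omega_buchi_1} is applied directly to the full $(m+2)\times(m+2)$ matrix with the $1+(m+1)$ split (the accepting variable $z''$ split off), the star of the block-lower-triangular $(m+1)\times(m+1)$ block indexed by $(z',z_1)$ is computed via \eqref{matrix_star_1}, and the $z'$- and $z_1$-entries are then read off using \eqref{3} and \eqref{4}. Your main route takes the dual decomposition: splitting off the sink $z_1$ first gives the $(m+1)+1$ partition, which is not the defining partition of ${}^{\omega,1}$ and therefore genuinely needs the partition-independence supplied by Theorem~\ref{thm:omegak} (with $t=1$, $k=m+1$, $n=m+2$; the hypotheses $0\leq t\leq k\leq n$ hold). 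What this buys is modularity: the surviving block $a'$ is exactly the matrix of Lemma~\ref{lem:canonical_1} with the $P_{n n}$-blocks deleted, so its ${}^{\omega,1}$ follows from \eqref{3} and \eqref{4} verbatim, and the $z_1$-entry is obtained from the first equation of (\ref{**}) together with the omega fixed-point equation $\sigma_1'\sigma_1'^\omega=\sigma_1'^\omega$, rather than from \eqref{4} as in the paper. Both arguments perform the same essential computations; the paper's version avoids any appeal to Theorem~\ref{thm:omegak} at the price of one extra block-star evaluation, while yours is slightly more economical given that Lemma~\ref{lem:canonical_1} and Theorem~\ref{thm:omegak} are already in hand.
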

\begin{proof}
  Let
\begin{equation*}
    M_\epsilon(x')=\begin{pmatrix}p'_1(x') & P'_{1 m}(x') & 0\\ P'_{m 1}(x') & P'_{m m}(x') & 0\\ (s_1,\epsilon)p'_1(x') & (s_1,\epsilon)P'_{1 m}(x') & 0\end{pmatrix}\mpunkt
\end{equation*}
Then the linear system~(\ref{****}) can be written in the form
\begin{equation*}
  \begin{pmatrix}z''\\z'\\z_1\end{pmatrix}=M_\epsilon(x')\begin{pmatrix}z''\\z'\\z_1\end{pmatrix}\mpunkt
\end{equation*}

Hence, the first canonical solution of (\ref{**}), (\ref{****}) is $(\sigma', M_\epsilon(\sigma')^{\omega,1})$.
We now compute
\begin{align*}
  (M_\epsilon^{\omega,1}(\sigma'))_{z''} &= \left[p_1'(\sigma')+\begin{pmatrix}P_{1 m}'(\sigma') & 0\end{pmatrix}\begin{pmatrix} P_{m m}'(\sigma') & 0\\ (s_1,\epsilon)P'_{1 m}(\sigma') & 0\end{pmatrix}^*\begin{pmatrix}P_{m 1}'(\sigma')\\(s_1,\epsilon)p'_1(\sigma')\end{pmatrix}\right]^\omega\\
  &= \left[p_1'(\sigma')+\begin{pmatrix}P_{1 m}'(\sigma') & 0\end{pmatrix}\begin{pmatrix} P_{m m}'(\sigma')^* & 0\\(s_1,\epsilon)P'_{1 m}(\sigma')P_{m m}'(\sigma')^* & 1\end{pmatrix}\begin{pmatrix}P_{m 1}'(\sigma')\\(s_1,\epsilon)p'_1(\sigma')\end{pmatrix}\right]^\omega\\
  &= \left[p_1'(\sigma') + P_{1 m}'(\sigma') P_{m m}'(\sigma')^* P_{m 1}'(\sigma')\right]^\omega\\
  &= \sigma_1'^\omega\mpunkt
\end{align*}
The last equality is by~(\ref{4}).

When starting with another variable $z_i'$ or $z_1$ for $1\leq i\leq m$, we get
\begin{align*}
  (M_\epsilon^{\omega,1}(\sigma'))_{(z',z_1)} &= \begin{pmatrix} P_{m m}'(\sigma') & 0\\ (s_1,\epsilon)P'_{1 m}(\sigma') & 0\end{pmatrix}^*\begin{pmatrix}P_{m 1}'(\sigma')\\(s_1,\epsilon)p'_1(\sigma')\end{pmatrix}(M_\epsilon^{\omega,1}(\sigma'))_{z''}\\
  &= \begin{pmatrix} P_{m m}'(\sigma')^* & 0\\(s_1,\epsilon)P'_{1 m}(\sigma')P_{m m}'(\sigma')^* & 1\end{pmatrix}\begin{pmatrix}P_{m 1}'(\sigma')\\(s_1,\epsilon)p'_1(\sigma')\end{pmatrix} \sigma_1'^\omega\\
  &= \begin{pmatrix} P_{m m}'(\sigma')^*P_{m 1}'(\sigma')\\ (s_1,\epsilon)P'_{1 m}(\sigma')P_{m m}'(\sigma')^*P_{m 1}'(\sigma')+(s_1,\epsilon)p'_1(\sigma')\end{pmatrix} \sigma_1'^\omega
\end{align*}
Thus, by~(\ref{3}), we have, for $1\leq i\leq m$,
\begin{equation*}
  (M_\epsilon^{\omega,1}(\sigma'))_{z'_i}=\left[P'_{m m}(\sigma')^*P'_{m 1}(\sigma')\right]_i\sigma_1'^\omega=\sigma_i'\sigma_1'^\omega\mkomma
\end{equation*}
and, by~(\ref{4}), we have
\begin{align*}
  (M_\epsilon^{\omega,1}(\sigma'))_{z_1} &=\big((s_1,\epsilon)P'_{1 m}(\sigma')P_{m m}'(\sigma')^*P_{m 1}'(\sigma')+(s_1,\epsilon)p'_1(\sigma')\big)\sigma_1'^\omega\\
  &=(s_1,\epsilon)\big(P'_{1 m}(\sigma')P_{m m}'(\sigma')^*P_{m 1}'(\sigma')+p'_1(\sigma')\big)\sigma_1'^\omega\\
  &=(s_1,\epsilon)\sigma_1'\sigma_1'^\omega=(s_1,\epsilon)\sigma_1'^\omega\mpunkt\qedhere
\end{align*}
\end{proof}


We now consider general sums of series of the above form. The next lemma shows how to construct a mixed $\omega$-algebraic system whose canonical solution is the sum of the canonical solutions of multiple mixed $\omega$-algebraic systems as given in Lemmas~\ref{lem:canonical_1} and~\ref{lem:canonical_2}.
\begin{lemma}
  \label{lem:canonical_both}
  Let $(s,\upsilon)\in\salg\times \ssigomega$ be given in the form of Corollary~\ref{cor:4.1}. Then there exists a mixed $\omega$-algebraic system in Greibach normal form such that $\upsilon$ is a component of its $l$\textsuperscript{th} canonical solution.
\end{lemma}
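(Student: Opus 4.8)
The plan is to build the required system as a disjoint union of the single-summand systems furnished by Lemmas~\ref{lem:canonical_1} and~\ref{lem:canonical_2}, glued together by one fresh start variable that collects the summands. Writing $\upsilon=\sum_{1\le j\le l}s_j t_j^\omega$ in the form of Corollary~\ref{cor:4.1}, each summand satisfies either $(s_j,\epsilon)=0$ or $s_j=(s_j,\epsilon)\epsilon$. Hence, for each $j$, Lemma~\ref{lem:canonical_1} (in the first case) or Lemma~\ref{lem:canonical_2} (in the second case) yields a mixed $\omega$-algebraic system $\mathcal{S}_j$ in Greibach normal form with exactly one Büchi-accepting variable ${z''}^{(j)}$ and a start variable $z_1^{(j)}$ whose value in the first canonical solution equals $s_j t_j^\omega$. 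Renaming variables, I may assume that $\mathcal{S}_1,\dots,\mathcal{S}_l$ use pairwise disjoint sets of variables.

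Next I would form the combined system: its finite part is the union of the finite parts of all $\mathcal{S}_j$, and its infinite part consists of all $z$-equations of the $\mathcal{S}_j$ together with one new equation for a fresh variable $z_0$ whose right-hand side is the sum of the right-hand sides of the equations for $z_1^{(1)},\dots,z_1^{(l)}$. I order the $\omega$-variables so that ${z''}^{(1)},\dots,{z''}^{(l)}$ occupy the first $l$ positions, followed by $z_0$ and the remaining variables, and I consider the $l$\textsuperscript{th} canonical solution. Greibach normal form is immediate: every $\mathcal{S}_j$ is already in Greibach normal form, and the equation for $z_0$ is a finite sum of right-hand sides that are themselves in Greibach normal form (scaling by the scalars $(s_j,\epsilon)$ does not enlarge supports), so all coefficients lie in $\Sigma\cup\Sigma X$.

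It remains to evaluate $\varrho(\sigma)^{\omega,l}$, where $\sigma$ is the least solution of the combined finite system; since the variables are disjoint, $\sigma$ restricts on each block to the least solution used in $\mathcal{S}_j$. As no $\mathcal{S}_j$ refers to a variable outside itself and nothing refers to $z_0$, the matrix $\varrho(\sigma)$ is block diagonal over the summands except for the single source row of $z_0$. Applying Theorem~\ref{thm:omegak} with $t=k=l$ peels off the whole non-accepting part and reduces the accepting components to $(a+bd^*c)^\omega$, where $a,b,c,d$ are the blocks relative to the $l$ accepting states; because $z_0$ has no incoming transitions it never occurs as an interior vertex of the factor $bd^*c$, so $a+bd^*c$ remains block diagonal and the matrix omega identity splits $(a+bd^*c)^\omega$ summand-wise. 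Tracing this decomposition back through the $d^*c(\cdots)$ part shows that the $z_1^{(j)}$-component of $\varrho(\sigma)^{\omega,l}$ coincides with the $z_1^{(j)}$-component of the first canonical solution of $\mathcal{S}_j$, namely $s_j t_j^\omega$. Finally, the canonical solution satisfies the fixed-point equation $\varrho(\sigma)^{\omega,l}=\varrho(\sigma)\,\varrho(\sigma)^{\omega,l}$, and since $z_0$ has no self-loop its component evaluates to $\sum_{1\le j\le l}(\varrho(\sigma)^{\omega,l})_{z_1^{(j)}}=\sum_{1\le j\le l}s_j t_j^\omega=\upsilon$; the degenerate case $l=0$ gives $\upsilon=0$, which any such system realizes since $M^{\omega,0}=0$.

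The main obstacle will be the decomposition in the last paragraph: one has to argue rigorously that moving the $l$ Büchi-accepting states to the front and then applying the $\omega,l$ operation genuinely splits over the disjoint summands, i.e.\ that the source variable $z_0$, although it sits inside the non-accepting block, does not couple distinct summands through the $d^*$-factor. This is precisely where Theorem~\ref{thm:omegak} together with independence of the partitioning (the matrix omega identity) is used, and the delicate bookkeeping is to confirm that $z_0$ contributes only through the top-level fixed-point equation rather than through any cycle.
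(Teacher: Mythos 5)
Your proposal is correct and follows essentially the same route as the paper's proof: the paper also takes the disjoint union of the single-summand systems from Lemmas~\ref{lem:canonical_1} and~\ref{lem:canonical_2}, adjoins one fresh source variable whose equation is the sum of the rows of the start-variable equations (there written as the block $z'$ with row $\rmatrix{c}\,\rmatrix{d}\,0$), orders the $l$ Büchi-accepting variables first, and evaluates $M^{\omega,l}$ via exactly the block-diagonal collapse of $a+b d^* c$ that you describe. The only cosmetic difference is in the last step: the paper computes the new component directly, using $c_i+d_i d_i^* c_i=d_i^* c_i$, whereas you invoke the fixed-point identity $\varrho(\sigma)^{\omega,l}=\varrho(\sigma)\varrho(\sigma)^{\omega,l}$ (Theorem~5.5.1 of Ésik--Kuich~\cite{MAT}, which the paper uses elsewhere); the two arguments are interchangeable.
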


\begin{proof}
  Let $\upsilon=\sum_{1\leq i\leq l}s_i t_i^\omega$ as in the statement of Corollary~\ref{cor:4.1} and let $l\geq1$. By Lemmas~\ref{lem:canonical_1} and \ref{lem:canonical_2}, for $1\leq i\leq l$, there exist mixed $\omega$-algebraic systems
\begin{align}\label{sharp}
  x_i&=p_i(x_i),\tag{$\sharp$}\\
  \begin{pmatrix}z_i\\ \bar z_i\end{pmatrix}&=M_i(x_i)\begin{pmatrix}z_i\\ \bar z_i\end{pmatrix},\notag
\end{align}
in Greibach normal form with
\begin{equation*}
  M_i(x_i)=\begin{pmatrix}a_i & b_i\\ c_i & d_i\end{pmatrix},
\end{equation*}
where
\begin{align*}
  a_i &\in (S\langle(\Sigma\cup X)^*\rangle)^{1\times 1}, \\
  b_i &\in (S\langle(\Sigma\cup X)^*\rangle)^{1\times (n_i-1)}, \\
  c_i &\in (S\langle(\Sigma\cup X)^*\rangle)^{(n_i-1)\times 1}, \\
  d_i &\in (S\langle(\Sigma\cup X)^*\rangle)^{(n_i-1)\times (n_i-1)},
\end{align*}
such that $s_i t_i^\omega$ is a component of the first canonical solution of the $i$\textsuperscript{th} system. We will assume without loss of generality that $s_i t_i^\omega$ is the first component of variable $\bar z_i$, i.e.,
\begin{equation}\label{st}
  s_i t_i^\omega= \left[(M_i^{\omega,1})_{\bar z_i}\right]_1=\left[(d_i^* c_i)(a_i+b_i d_i^* c_i)^\omega\right]_1\mpunkt
\end{equation}

Similarly to the case of summation in Theorem 5.4.4 of Ésik, Kuich \cite{MAT}, we consider now the mixed $\omega$-algebraic system consisting of the algebraic systems (\ref{sharp}) over $\ssigstar$ and the linear system over $\ssigomega$
\begin{equation*}\label{ssharp}
 \hat z=M \hat z\mkomma\tag{$\sharp\sharp$}
\end{equation*}
with
\begin{equation*}
  M=\begin{pmatrix}
  \dmatrix{a} & \dmatrix{b} & 0\\
  \dmatrix{c} & \dmatrix{d} & 0\\
  \rmatrix{c} & \rmatrix{d} & 0
  \end{pmatrix},\quad
  \hat z=\begin{pmatrix}z_1\\\vdots\\z_l\\ \bar z_1\\\vdots\\\bar z_l\\z'\end{pmatrix}\mpunkt
\end{equation*}
Note that this system~\ref{ssharp} is still in Greibach normal form.

We order the variables of the mixed $\omega$-algebraic system~(\ref{sharp}), (\ref{ssharp}) as $z_1,\ldots,z_l;\allowbreak \bar z_1,\ldots, \bar z_l;z'$.
We now compute the $l$\textsuperscript{th} canonical solution, starting with variable $z=(z_1, \ldots, z_l)^{\mathsf T}$. Then
\begin{align*}
  (M^{\omega,l})_z &=\left[\dmatrix{a}+\begin{pmatrix}\dmatrix{b} & 0\end{pmatrix}
    \begin{pmatrix}\dmatrix{d} & 0\\\rmatrix{d} & 0\end{pmatrix}^*
    \begin{pmatrix}\dmatrix{c}\\ \rmatrix{c}\end{pmatrix}\right]^\omega\\
  &=\left[\dmatrix{a}+\begin{pmatrix}\dmatrix{b} & 0\end{pmatrix}
    \begin{pmatrix}\dmatrix{d}^* & 0\\\rmatrix{d}\dmatrix{d}^* & 1\end{pmatrix}
    \begin{pmatrix}\dmatrix{c}\\ \rmatrix{c}\end{pmatrix}\right]^\omega\\
  &=\left[\dmatrix{a}+\dmatrix{b}\dmatrix{d}^*\dmatrix{c}\right]^\omega\\
  &=\begin{pmatrix}(a_1+b_1 d_1^* c_1)^\omega\\ \vdots\\ (a_l+b_l d_l^* c_l)^\omega\end{pmatrix}\mpunkt
\end{align*}

When starting with the new variable $z'$, we get a sum of the original solutions:
\begin{align*}
  (M^{\omega,l})_{z'}&=\left[\begin{pmatrix}\dmatrix{d} & 0\\\rmatrix{d} & 0\end{pmatrix}^*
    \begin{pmatrix}\dmatrix{c}\\ \rmatrix{c}\end{pmatrix}(M^{\omega,l})_z\right]_{l+1}\\
  &=\left[\begin{pmatrix}\dmatrix{d}^* & 0\\\rmatrix{d}\dmatrix{d}^* & 1\end{pmatrix}
    \begin{pmatrix}\dmatrix{c}\\ \rmatrix{c}\end{pmatrix}(M^{\omega,l})_z\right]_{l+1}\\
  &=\left[\begin{pmatrix}\dmatrix{d}^*\dmatrix{c}\\\begin{pmatrix}d_1 d_1^* & \cdots & d_l d_l^*\end{pmatrix}\dmatrix{c}+\rmatrix{c}\end{pmatrix}(M^{\omega,l})_z\right]_{l+1}\\
  &=\left[\begin{pmatrix}\begin{pmatrix}d_1^* c_1 & & \\& \ddots & \\& & d_l^* c_l\end{pmatrix}\\\begin{pmatrix}d_1 d_1^* c_1+c_1 & \cdots & d_l d_l^* c_l+c_l\end{pmatrix}\end{pmatrix}\begin{pmatrix}(a_1+b_1 d_1^* c_1)^\omega\\ \vdots\\ (a_l+b_l d_l^* c_l)^\omega\end{pmatrix}\right]_{l+1}\allowdisplaybreaks\\
  &=\begin{pmatrix}d_1^* c_1 (a_1+b_1 d_1^*c_1)^\omega\\\vdots\\d_l^* c_l (a_l+b_l d_l^* c_l)^\omega\\\sum_{1\leq i\leq l}(d_i d_i^* c_i +c_i)(a_i+b_i d_i^* c_i)^\omega\end{pmatrix}_{l+1}\\
  &= \sum_{1\leq i\leq l}(d_i d_i^* c_i +c_i)(a_i+b_i d_i^* c_i)^\omega\\
  &= \sum_{1\leq i\leq l}(d_i^* c_i)(a_i+b_i d_i^* c_i)^\omega
\end{align*}
Thus, the first component is (by identity~(\ref{st}))
\begin{align*}
  \left[(M^{\omega,l})_{z'}\right]_1 &= \left[\sum_{1\leq i\leq l}(d_i^* c_i)(a_i+b_i d_i^* c_i)^\omega\right]_1\\
  &= \sum_{1\leq i\leq l}\left[(d_i^* c_i)(a_i+b_i d_i^* c_i)^\omega\right]_1 = \sum_{1\leq i\leq l}s_i t_i^\omega = \upsilon\mpunkt\qedhere
\end{align*}
\end{proof}

We can now conclude the following theorem.
\begin{theorem}
  \label{thm:greibach}
  The following statement for $(s,\upsilon)\in\stimes$ is equivalent to the statements of Theorem~\ref{thm:3.1}:\\
  $(s,\upsilon)$ is component of a canonical solution of a mixed $\omega$-algebraic system over $\stimes$ in Greibach normal form.
\end{theorem}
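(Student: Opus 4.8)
The plan is to prove Theorem~\ref{thm:greibach} by establishing the equivalence with the already-proven statements of Theorem~\ref{thm:3.1}. Since the new statement trivially implies statement~\eqref{vi} of Theorem~\ref{thm:3.1} (a mixed $\omega$-algebraic system in Greibach normal form is in particular a mixed $\omega$-algebraic system), the only real work is to show that statement~\eqref{i}, or equivalently the form provided by Corollary~\ref{cor:4.1}, implies the existence of a Greibach-normal-form system whose canonical solution has $(s,\upsilon)$ as a component. First I would observe that the machinery assembled in Lemmas~\ref{lem:canonical_1}, \ref{lem:canonical_2}, and~\ref{lem:canonical_both} already handles the semimodule component $\upsilon$ completely: given $(s,\upsilon)$ in the form of Corollary~\ref{cor:4.1}, Lemma~\ref{lem:canonical_both} produces a mixed $\omega$-algebraic system in Greibach normal form whose $l$\textsuperscript{th} canonical solution contains $\upsilon$. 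So the remaining task is to handle the finite component $s$ and to package both components into a single system.

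The key steps, in order, are as follows. I would start from $(s,\upsilon)\in\salg\times\salgomega$ and apply Corollary~\ref{cor:4.1} to rewrite $\upsilon=\sum_{1\leq j\leq l}s_j t_j^\omega$ with the normalization $(t_j,\epsilon)=0$ and each $s_j$ either satisfying $(s_j,\epsilon)=0$ or equal to $(s_j,\epsilon)\epsilon$. Next, since $s\in\salg$, by Theorem~2.4.10 of Ésik, Kuich~\cite{MAT} there is an algebraic system in Greibach normal form whose first component of its least solution equals $s$. I would then invoke Lemma~\ref{lem:canonical_both} to obtain the Greibach-normal-form mixed system for $\upsilon$. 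The final assembly step is to form the disjoint union of the algebraic system for $s$ with the finite (algebraic) part of the system from Lemma~\ref{lem:canonical_both}, keeping the linear part~\eqref{ssharp} for the semimodule, and to order the variables so that $s$ appears as a designated component of the least solution while $\upsilon$ appears as the designated component of the $l$\textsuperscript{th} canonical solution. Because the finite systems for the various $t_j$, $s_j$ and for $s$ all use disjoint sets of variables, the diagonal identity guarantees that the least solution of the combined algebraic system restricts correctly on each block, so both $s$ and $\upsilon$ are simultaneously read off as components of one canonical solution.

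The main obstacle I anticipate is bookkeeping rather than conceptual: one must verify that combining the separate algebraic systems (with disjoint variable sets) into a single system $x=p(x)$ does not disturb the individual least-solution components, and that the resulting combined system remains in Greibach normal form, i.e.\ that $\mathrm{supp}(p_i(x))\subseteq\{\epsilon\}\cup\Sigma\cup\Sigma X\cup\Sigma X X$ and $\mathrm{supp}(\varrho_{ij}(x))\subseteq\Sigma\cup\Sigma X$ hold after the merge. The disjointness of variables makes the first point follow from the diagonal identity (Proposition~2.2.11 of~\cite{MAT}), and the second point is immediate since we only take a block-disjoint union and never introduce new monomials. A minor care point is ensuring that the component carrying $s$ and the component carrying $\upsilon$ belong to one and the same canonical solution with a common index $k=l$; this is automatic because the finite part's least solution is independent of the Büchi-acceptance index $k$, so choosing $k=l$ for the linear part simultaneously fixes $\sigma$ and selects $\upsilon$.

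Having assembled the system, the conclusion is that $(s,\upsilon)$ is a component of the $l$\textsuperscript{th} canonical solution of a mixed $\omega$-algebraic system in Greibach normal form, which is precisely the new statement; combined with the trivial implication into statement~\eqref{vi} of Theorem~\ref{thm:3.1}, this closes the equivalence and completes the proof.
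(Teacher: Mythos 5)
Your proposal is correct and takes essentially the same route as the paper: the new statement trivially implies statement~\eqref{vi} of Theorem~\ref{thm:3.1}, and the converse is obtained from Corollary~\ref{cor:4.1} together with Lemma~\ref{lem:canonical_both}. You are in fact slightly more thorough than the paper's own two-sentence proof, which cites only Corollary~\ref{cor:4.1} and Lemma~\ref{lem:canonical_both} (these yield a Greibach-normal-form system whose $l$\textsuperscript{th} canonical solution contains $\upsilon$) and leaves implicit the packaging step you spell out, namely adjoining, with disjoint variables, a Greibach-normal-form algebraic system for $s$ so that $s$ also appears as a component of the least solution of the finite part; this is harmless since the added equations are variable-disjoint, do not touch the linear part, and preserve the normal form (though for that step plain disjointness of the systems already suffices, without appeal to the diagonal identity).
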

\begin{proof}
  The above statement trivially implies statement~\eqref{vi} of Theorem~\ref{thm:3.1}. By Corollary~\ref{cor:4.1} and Lemma~\ref{lem:canonical_both}, the statements of Theorem~\ref{thm:3.1} imply the above statement.
\end{proof}



\section{Greibach Normal Form for \texorpdfstring{$\omega$}{omega}-Algebraic Systems}
\label{sec:greibach_normal_form_unmixed}

We show in this section a specialization of Theorem~\ref{thm:greibach} for $\omega$-algebraic systems: already $\omega$-algebraic systems in Greibach normal form are sufficient to describe all $\omega$-algebraic series.

We will apply this new result in Section~\ref{sec:simple_omega_reset_pushdown_automata}, but we believe that proving the existence of the Greibach normal form for $\omega$-algebraic systems is of independent interest.

Similar to the definition for mixed $\omega$-algebraic systems, an $\omega$-algebraic system
\[y=p(y)\]
where $\{y_1,\ldots,y_n\}$ is a set of variables for the quemiring $\salgtimes$,
is in \emph{Greibach normal form} if
\begin{alignat*}{2}
&\text{supp}(p_i(y)) \subseteq \{\epsilon\}\cup\Sigma\cup \Sigma Y\cup \Sigma Y Y, \qquad &&\text{for all }1\leq i\leq n\mpunkt
\end{alignat*}


Our first main result is the following.

\begin{theorem}
  \label{thm:unmix}
  The following statement for $(s,\upsilon)\in\stimes$ is equivalent to the statements of Theorem~\ref{thm:3.1}:\\
  $(s,\upsilon)$ is component of a canonical solution of an $\omega$-algebraic system over $\stimes$ in Greibach normal form.
\end{theorem}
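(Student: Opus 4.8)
The plan is to establish the two implications relating this statement to those of Theorem~\ref{thm:3.1}. One direction is immediate: by definition the canonical solution of an $\omega$-algebraic system $y=p(y)$ \emph{is} the canonical solution of the induced mixed $\omega$-algebraic system $x=p(x)$, $z=p_x(x,z)$, which is a genuine mixed $\omega$-algebraic system over $\stimes$. Hence any component $(s,\upsilon)$ of such a canonical solution is a component of a canonical solution of a mixed $\omega$-algebraic system, i.e. statement~\eqref{vi} of Theorem~\ref{thm:3.1} holds. For the converse I would start from Theorem~\ref{thm:greibach}, which already supplies a \emph{mixed} $\omega$-algebraic system in Greibach normal form with finite part $x=p(x)$ (least solution $\sigma$) and linear part $z=\varrho(x)z$, such that $s=\sigma_{i_0}$ and $\upsilon=(\varrho(\sigma)^{\omega,k})_{j_0}$ for suitable indices. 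All the work then lies in a \emph{mixed-to-unmixed transformation}: converting this system into a single $\omega$-algebraic system in Greibach normal form possessing a variable with value exactly $(s,\upsilon)$.

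The key difficulty, and the reason the transformation is nontrivial, is that in an $\omega$-algebraic system the finite and infinite behaviour of a monomial are rigidly coupled through $p_x$: a Greibach monomial $s a\,y_k y_{j'}$ contributes to the induced $z$-equation both the wanted term $s a\,x_k z_{j'}$ (carrying the finite coefficient $s a\sigma_k$ of a $\Sigma X$-entry of $\varrho$) and an \emph{unwanted} term $s a\,z_k$. My plan to neutralise this is to split the variables into two groups. Group~A consists of one variable $y^A_i$ per finite variable $x_i$, with $P^A_i:=p_i$ reinterpreted over the $Y^A$; these reproduce $x^A=p(x^A)$ with solution $\sigma$ and, since their productions never mention a Group~B variable, they can never reach an accepting variable, so their $\omega$-component will turn out to be $0$. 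Group~B consists of one variable $y^B_j$ per infinite variable $z_j$; for every $\Sigma$-entry $s a$ of $\varrho_{jj'}$ I put the monomial $s a\,y^B_{j'}$ into $P^B_j$, and for every $\Sigma X$-entry $s a\,x_k$ I put $s a\,y^A_k y^B_{j'}$. All these monomials lie in $\{\epsilon\}\cup\Sigma\cup\Sigma Y\cup\Sigma Y Y$, so the system stays in Greibach normal form, and each spurious $s a\,z_k$ term is produced only towards a Group~A target.

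The correctness argument then rests on the block structure of the induced transition matrix $R(x)$ of $z=p_x(x,z)$. Ordering Group~B before Group~A, the identity $R_{AB}=0$ makes $R(\sigma)$ block upper triangular, and Theorem~\ref{thm:omegak} (with the top block equal to all of $B$) yields $R(\sigma)^{\omega,t}=\big(R_{BB}(\sigma)^{\omega,t},\,0\big)$: the Group~A components vanish, so precisely the spurious terms $s a\,z_k$, which live in the off-diagonal block $R_{BA}$, are annihilated. A direct inspection of the remaining monomials shows $R_{BB}(\sigma)=\varrho(\sigma)$, whence, placing the $k$ accepting Group~B variables first and taking $t=k$, the Group~B $\omega$-components equal $\varrho(\sigma)^{\omega,k}$; in particular $y^B_{j_0}$ has value $(0,\upsilon)$, its finite component being $0$ because every Group~B production keeps a Group~B variable in last position, making the induced finite system on $x^B$ homogeneous (least solution $0$). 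Finally, to couple $s$ with $\upsilon$ inside one variable I would adjoin a fresh non-accepting start variable $y_0$ whose production polynomial is the \emph{union} of the productions of the Group~A variable realising $s$ and of those of $y^B_{j_0}$; this stays in Greibach normal form, gives finite component $s+0=s$, and—using the matrix omega fixed-point identity $M^{\omega,t}=M\,M^{\omega,t}$, which makes the $\omega$-component of the transient source $y_0$ additive over its two groups of productions—gives $\omega$-component $0+\upsilon=\upsilon$. The main obstacle is thus the entanglement of finite and infinite parts described above; everything else is bookkeeping of supports and of the canonical-solution index.
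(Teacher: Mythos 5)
Your proposal follows essentially the same route as the paper's proof: starting from Theorem~\ref{thm:greibach}, you duplicate the variables into an $x$-group and a $z$-group (the paper's $\bar y$ and $\hat y$), order the $z$-group first so the Büchi-accepting indices stay correct, adjoin a fresh start variable whose productions are the union of those realising $s$ and those realising $\upsilon$ (the paper's $\dot y$), and verify correctness via the block-triangular structure of the induced linear system, Theorem~\ref{thm:omegak}, and the fixed-point identity $M M^{\omega,t}=M^{\omega,t}$. The only cosmetic differences are that you argue the vanishing of the $z$-group's finite components by homogeneity (least solution $0$) where the paper invokes strictness and uniqueness, and that you adjoin the start variable in a separate second step rather than computing the paper's full three-block matrix at once.
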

\begin{proof}
  By Theorem~\ref{thm:greibach}, we can assume that $(s,\upsilon)$ is component of the $t$\textsuperscript{th} canonical solution of a mixed $\omega$-algebraic system over $\stimes$ in Greibach normal form for a $t\in\N$. Let the mixed $\omega$-algebraic system be given in the following form:
  \begin{alignat}{2}
    x_i &= p_i+\sum_{1\leq j\leq n}(p_{i j} x+q_{i j})x_j,&\quad &\text{for }1\leq i\leq n\mkomma\tag{$\divideontimes$}\label{fin}\\
    z_i &= \sum_{1\leq j\leq m}(p'_{i j} x+q'_{i j})z_j,&\quad &\text{for }1\leq i\leq m\mkomma\tag{$\divideontimes\divideontimes$}\label{inf}
  \end{alignat}
  where
  \begin{alignat*}{2}
    &p_{i j}\in \ssig^{1\times n},&\qquad &\text{for }1\leq i,j\leq n\mkomma\\
    &p'_{i j}\in \ssig^{1\times n},&\qquad &\text{for }1\leq i,j\leq m\mkomma
  \end{alignat*}
  and
  \begin{alignat*}{3}
    &\text{supp}(p_i)\subseteq \{\epsilon\}\cup\Sigma,\quad& \text{supp}(p_{i j}x)\subseteq \Sigma X,\quad& \text{supp}(q_{i j})\subseteq \Sigma\mkomma\\
    & &\text{supp}(p'_{i j}x)\subseteq \Sigma X,\quad& \text{supp}(q'_{i j})\subseteq \Sigma\mpunkt
  \end{alignat*}
  Note that
  \[p_{i j} x = \sum_{1\leq k\leq n}(p_{i j})_k x_k\msemikolon\]
  we decided for this notation because of brevity, important especially in matrices.

  For the remainder of the proof, consider integers $k$ and $l$ to be fixed such that the $t$\textsuperscript{th} canonical solution of \eqref{fin}, \eqref{inf} is $(\sigma,\omega)$ with $\sigma_k=s$ and $\omega_{l}=\upsilon$.

  We will later need a simple implication: We can write the linear system \eqref{inf} as
  \[z=P'_{m m}(x)z\mkomma\]
  where
  \[P'_{m m}(x)=
  \begin{pmatrix}
    p'_{1 1} x+q'_{1 1} & \cdots & p'_{1 m} x+q'_{1 m}\\
    \vdots & \ddots & \vdots\\
    p'_{m 1} x+q'_{m 1} & \cdots & p'_{m m} x+q'_{m m}
  \end{pmatrix}\mpunkt\]
  Note that $t\leq m$. It follows that
  \begin{equation}
    \label{eqn:omega}
    \omega=P'_{m m}(\sigma)^{\omega,t}.
  \end{equation}

  Now, we construct from \eqref{fin}, \eqref{inf} an $\omega$-algebraic system \eqref{unmixed} where the variables $x$ are substituted by $\bar y$ and $z$ by $\hat y$. Additionally, we add a new equation and a new variable $\dot y$ to combine the $k$\textsuperscript{th} component of the semiring part and the $l$\textsuperscript{th} component of the semimodule part:
\begin{equation*}
  \begin{aligned}
    \hat y_i &= \sum_{1\leq j\leq m}(p'_{i j} \bar y+q'_{i j})\hat y_j,\qquad \text{for }1\leq i\leq m\mkomma\\
    \bar y_i &= p_i+\sum_{1\leq j\leq n}(p_{i j} \bar y+q_{i j})\bar y_j,\qquad \text{for }1\leq i\leq n\mkomma\\
    \dot y &= p_k+\sum_{1\leq j\leq n}(p_{k j} \bar y+q_{k j})\bar y_j+\sum_{1\leq j\leq m}(p'_{l j} \bar y+q'_{l j})\hat y_j\mpunkt
  \end{aligned}\label{unmixed}\tag{$\divideontimes\!\divideontimes\!\divideontimes$}
\end{equation*}
Note that \eqref{unmixed} is in Greibach normal form.
Moreover, note that we order the equations such that the first equations are those corresponding to the old equations of variables $z_i$. This ensures that the $t$\textsuperscript{th} canonical solution still considers the correct variables as Büchi-accepting.

\emph{Claim}: The $(m+n+1)$\textsuperscript{th} component of the $t$\textsuperscript{th} canonical solution of \eqref{unmixed} is $(\sigma_k,\omega_l)=(s,\upsilon)$.

We now compute this solution. The $t$\textsuperscript{th} canonical solution of the $\omega$-algebraic system \eqref{unmixed} is defined to be the $t$\textsuperscript{th} canonical solution of the mixed $\omega$-algebraic system induced by \eqref{unmixed}. The corresponding induced mixed $\omega$-algebraic system is given by the algebraic system over $\salg$
\begin{align}
  \begin{split}
    \hat x_i &= \sum_{1\leq j\leq m}(p'_{i j} \bar x+q'_{i j})\hat x_j,\qquad \text{for }1\leq i\leq m\mkomma\\
    \bar x_i &= p_i+\sum_{1\leq j\leq n}(p_{i j} \bar x+q_{i j})\bar x_j,\qquad \text{for }1\leq i\leq n\mkomma\\
    \dot x &= p_k+\sum_{1\leq j\leq n}(p_{k j} \bar x+q_{k j})\bar x_j+\sum_{1\leq j\leq m}(p'_{l j} \bar x+q'_{l j})\hat x_j\mkomma
  \end{split}\tag{\#}\label{unmixed_fin}
  \intertext{and the linear system over $\salgomega$}
  \begin{split}
    \hat z_i &= \sum_{1\leq j\leq m}(p'_{i j} \bar x+q'_{i j})\hat z_j+p'_{i j}\bar z,\qquad \text{for }1\leq i\leq m\mkomma\\
    \bar z_i &= \sum_{1\leq j\leq n}(p_{i j} \bar x+q_{i j})\bar z_j+p_{i j}\bar z,\qquad \text{for }1\leq i\leq n\mkomma\\
    \dot z &= \sum_{1\leq j\leq n}(p_{k j} \bar x+q_{k j})\bar z_j+p_{k j}\bar z+\sum_{1\leq j\leq m}(p'_{l j} \bar x+q'_{l j})\hat z_j+p'_{l j}\bar z\mpunkt
  \end{split}\tag{\#\#}\label{unmixed_inf}
\end{align}

\emph{Claim}: $(0,\ldots,0;\sigma;\sigma_k)$ is the least solution of \eqref{unmixed_fin}.


First, we prove that it is a solution by plugging it into the right sides of the equations.
We have for the first $m$ equations, and for $1\leq i\leq m$,
\begin{align*}
  \sum_{1\leq j\leq m}(p'_{i j} \sigma+q'_{i j})0 &= 0\mpunkt
\end{align*}
Then for the second set of equations and $1\leq i\leq n$,
\begin{align*}
  p_i+\sum_{1\leq j\leq n}(p_{i j} \sigma+q_{i j})\sigma_j
  &= \sigma_i\msemikolon
\end{align*}
because $\sigma$ is a solution of \eqref{fin}.
Finally, we obtain by the same reason, for the last equation,
\begin{align*}
  p_k+\!\sum_{1\leq j\leq n}(p_{k j} \sigma+q_{k j})\sigma_j+\!\sum_{1\leq j\leq m}(p'_{l j} \sigma+q'_{l j})0_j &=  p_k+\sum_{1\leq j\leq n}(p_{k j} \sigma+q_{k j})\sigma_j+0\\
  &=  \sigma_k\mpunkt
\end{align*}
The algebraic system \eqref{unmixed_fin} is \emph{strict} and therefore has a unique solution. See~\cite{MAT}, p.\@ 62 for a definition and~\cite{MAT}, Theorem~2.4.7 for the unicity. This means that $(0,\ldots,0;\sigma;\sigma_k)$ is also the least solution. This proves the claim.


Now consider the linear system \eqref{unmixed_inf}. Let $P'_{m m}(\bar x)$ be defined as above and let further
\[P_{n n}(\bar x)=
\begin{pmatrix}
  p_{1 1}\bar x+q_{1 1} & \cdots & p_{1 n}\bar x+q_{1 n}\\
  \vdots & \ddots & \vdots\\
  p_{n 1}\bar x+q_{n 1} & \cdots & p_{n n}\bar x+q_{n n}
\end{pmatrix}\mkomma\]
\[R_{n n}=
\begin{pmatrix}
  \sum_{1\leq j\leq n}(p_{1 j})_1 & \cdots & \sum_{1\leq j\leq n}(p_{1 j})_n\\
  \vdots & \ddots & \vdots\\
  \sum_{1\leq j\leq n}(p_{n j})_1 & \cdots & \sum_{1\leq j\leq n}(p_{n j})_n
\end{pmatrix}\mkomma\]
\[R'_{m n}=
\begin{pmatrix}
  \sum_{1\leq j\leq m}(p'_{1 j})_1 & \cdots & \sum_{1\leq j\leq m}(p'_{1 j})_n\\
  \vdots & \ddots & \vdots\\
  \sum_{1\leq j\leq m}(p'_{m j})_1 & \cdots & \sum_{1\leq j\leq m}(p'_{m j})_n
\end{pmatrix}\mpunkt\]

Note that for \eqref{unmixed_inf} and for $1\leq i\leq m$, we have
\begin{align*}
  \sum_{1\leq j\leq m} p'_{i j}\bar z &= \sum_{1\leq j\leq m}\sum_{1\leq k\leq n} (p'_{i j})_k \bar z_k\\
  &= \sum_{1\leq k\leq n}\sum_{1\leq j\leq m} (p'_{i j})_k \bar z_k\\
  &= \Big(\sum_{1\leq j\leq m} (p'_{i j})_1,\cdots,\sum_{1\leq j\leq m}(p'_{i j})_n\Big) \bar z\\
  &= (R'_{m n})_i \bar z\mpunkt
\end{align*}
Analogously, we can prove $\sum_{1\leq j\leq n}p_{i j}\bar z=(R_{n n})_i\bar z$.
We let
\[M(\hat x,\bar x,x)=
\begin{pmatrix}
  P'_{m m}(\bar x) & R'_{m n} & 0\\
  0 & P_{n n}(\bar x)+R_{n n} & 0\\
  (P'_{m m}(\bar x))_l & (P_{n n}(\bar x))_k+(R_{n n})_k+(R'_{m n})_l & 0
\end{pmatrix}\mkomma\]
then the linear system \eqref{unmixed_inf} can be written as
\[\begin{pmatrix}\hat z\\ \bar z\\ z\end{pmatrix}=M(\hat x,\bar x,x)\begin{pmatrix}\hat z\\ \bar z\\ z\end{pmatrix}\mpunkt\]

Now, we can plug the semiring part $(0,\sigma,\sigma_k)$ of the solution into $M$. By Theorem~\ref{thm:omegak}, the semimodule part of the canonical solution of \eqref{unmixed_fin}, \eqref{unmixed_inf} is
\begin{align*}
  M(0,\sigma,\sigma_k)^{\omega,t} = \begin{pmatrix}
    \xi^{\omega,t}\\
    \begin{pmatrix}
    P_{n n}(\sigma)+R_{n n} & 0\\
    \chi & 0
  \end{pmatrix}^*
  \begin{pmatrix}
    0\\ (P'_{m m}(\sigma))_l
  \end{pmatrix}
  \xi^{\omega,t}
  \end{pmatrix}
\end{align*}
with
\[\chi = (P_{n n}(\sigma))_k+(R_{n n})_k+(R'_{m n})_l\]
and
\begin{align*}
  \xi &=P'_{m m}(\sigma)+\begin{pmatrix}R'_{m n} & 0\end{pmatrix}
  \begin{pmatrix}
    P_{n n}(\sigma)+R_{n n} & 0\\
    \chi & 0
  \end{pmatrix}^*
  \begin{pmatrix}
    0\\ (P'_{m m}(\sigma))_l
  \end{pmatrix}\\
  &= P'_{m m}(\sigma)+\begin{pmatrix}R'_{m n} & 0\end{pmatrix}
  \begin{pmatrix}
    (P_{n n}(\sigma)+R_{n n})^* & 0\\
    \chi(P_{n n}(\sigma)+R_{n n})^* & 1
  \end{pmatrix}
  \begin{pmatrix}
    0\\ (P'_{m m}(\sigma))_l
  \end{pmatrix}\\
  &= P'_{m m}(\sigma)+\begin{pmatrix}R'_{m n}(P_{n n}(\sigma)+R_{n n})^* & 0\end{pmatrix}
  \begin{pmatrix}
    0\\ (P'_{m m}(\sigma))_l
  \end{pmatrix}\\
  &= P'_{m m}(\sigma)+0 = P'_{m m}(\sigma)\mpunkt
\end{align*}
It follows that
\begin{align*}
  M(0,\sigma,\sigma_k)^{\omega,t} &=
  \begin{pmatrix}
    P'_{m m}(\sigma)^{\omega,t}\\
    \begin{pmatrix}
      P_{n n}(\sigma)+R_{n n} & 0\\
      \chi & 0
    \end{pmatrix}^*
    \begin{pmatrix}
      0\\ (P'_{m m}(\sigma))_l
    \end{pmatrix}
    P'_{m m}(\sigma)^{\omega,t}
  \end{pmatrix}\\
  &=
  \begin{pmatrix}
    P'_{m m}(\sigma)^{\omega,t}\\
    \begin{pmatrix}
      (P_{n n}(\sigma)+R_{n n})^* & 0\\
      \chi(P_{n n}(\sigma)+R_{n n})^* & 1
    \end{pmatrix}
    \begin{pmatrix}
      0\\ (P'_{m m}(\sigma))_l
    \end{pmatrix}
    P'_{m m}(\sigma)^{\omega,t}
  \end{pmatrix}\\
  &=
  \begin{pmatrix}
    P'_{m m}(\sigma)^{\omega,t}\\
    \begin{pmatrix}
      0\\ (P'_{m m}(\sigma))_l
    \end{pmatrix}
    P'_{m m}(\sigma)^{\omega,t}
  \end{pmatrix}\\
  &=
  \begin{pmatrix}
    P'_{m m}(\sigma)^{\omega,t}\\ 0\\ (P'_{m m}(\sigma))_l P'_{m m}(\sigma)^{\omega,t}
  \end{pmatrix}\mpunkt
\end{align*}

Now, we have for the last component
\begin{align*}
  \left(M(0,\sigma,\sigma_k)^{\omega,t}\right)_{m+n+1} &= (P'_{m m}(\sigma))_l P'_{m m}(\sigma)^{\omega,t}\\
  &= \left(P'_{m m}(\sigma) P'_{m m}(\sigma)^{\omega,t}\right)_l\\
  &= \left(P'_{m m}(\sigma)^{\omega,t}\right)_l=\omega_l\mkomma
\end{align*}
where the third equality is by Theorem~5.5.1 of~\cite{MAT} and the last equality is by~\eqref{eqn:omega}. In summary, the $(n+m+1)$\textsuperscript{th} component of the $t$\textsuperscript{th} canonical solution of \eqref{unmixed_fin}, \eqref{unmixed_inf} is $(\sigma_k,\omega_l)=(s,\upsilon)$. As defined for $\omega$-algebraic systems, it then follows that also the $t$\textsuperscript{th} canonical solution of \eqref{unmixed} is $(s,\upsilon)$.
\end{proof}

As the mixed $\omega$-algebraic system in the preceding proof does not depend on the previous discussion and since we proved that we can construct the Greibach normal form when needed, we infer the following.

\begin{corollary}
  Let $(s,\upsilon)$ be a component of a canonical solution of a mixed $\omega$-algebraic system over $\stimes$.

  Then we can construct an $\omega$-algebraic system over $\stimes$ (in Greibach normal form) where $(s,\upsilon)$ is a component of a canonical solution.
\end{corollary}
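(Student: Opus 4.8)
The plan is to obtain the statement as a direct consequence of the equivalences already established, while keeping track of the constructive content. First I would observe that the hypothesis of the corollary is verbatim statement~\eqref{vi} of Theorem~\ref{thm:3.1}. Since the three statements of that theorem are equivalent, it follows that $(s,\upsilon)$ also satisfies statement~\eqref{i}, i.e.\ $(s,\upsilon)\in\salg\times\salgomega$, so that all the statements of Theorem~\ref{thm:3.1} hold for $(s,\upsilon)$.

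Next I would invoke Theorem~\ref{thm:unmix}, whose displayed statement---being a component of a canonical solution of an $\omega$-algebraic system over $\stimes$ in Greibach normal form---is equivalent to the statements of Theorem~\ref{thm:3.1}. Having just established those statements, we immediately conclude that $(s,\upsilon)$ is a component of a canonical solution of an $\omega$-algebraic system in Greibach normal form, which is exactly the assertion of the corollary, including the parenthetical normal-form property.

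To make the argument constructive I would spell out the two construction steps. Starting from the given mixed $\omega$-algebraic system, apply Theorem~\ref{thm:greibach}---which is constructive via Corollary~\ref{cor:4.1} together with Lemmas~\ref{lem:canonical_1}, \ref{lem:canonical_2} and \ref{lem:canonical_both}---to produce a mixed $\omega$-algebraic system in Greibach normal form having $(s,\upsilon)$ as a component of a canonical solution. Then feed this system into the explicit construction~\eqref{unmixed} from the proof of Theorem~\ref{thm:unmix}; that proof exhibits $(s,\upsilon)$ as the $(m+n+1)$\textsuperscript{th} component of the $t$\textsuperscript{th} canonical solution of the resulting $\omega$-algebraic system in Greibach normal form.

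The only point needing care---rather than a genuine obstacle---is to notice that the construction in the proof of Theorem~\ref{thm:unmix} takes an arbitrary mixed $\omega$-algebraic system in Greibach normal form as input and nowhere uses how that system was obtained. Combined with the constructivity of Theorem~\ref{thm:greibach}, this yields the corollary for a mixed system that is not assumed to be in Greibach normal form at the outset.
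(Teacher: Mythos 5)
Your proposal is correct and follows essentially the same route as the paper: the paper's own justification is precisely that the construction in the proof of Theorem~\ref{thm:unmix} accepts any mixed $\omega$-algebraic system in Greibach normal form regardless of its origin, and that Theorem~\ref{thm:greibach} lets one construct such a normal form from the given mixed system. Your additional opening remark via the equivalences of Theorem~\ref{thm:3.1} is harmless but redundant, since the constructive chain in your last two paragraphs is exactly the paper's argument.
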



\section{Simple Reset Pushdown Automata}
\label{sec:simple_reset_pushdown_automata}

Now that we have proved the existence of the Greibach normal form for every $\omega$-algebraic system and every mixed $\omega$-algebraic system, we want to use it in the second part of the paper to show that each $\omega$-algebraic series can be represented as the behavior of a simple $\omega$-reset pushdown automaton. The next section will prove that result. For the proof, we will need the corresponding result for finite words as an intermediate step. We have shown in~\cite{finite_simple} that for every algebraic series $r$ (of finite words), there exists a simple reset pushdown automaton with behavior $r$. We recall the construction of the simple reset pushdown automata here for the convenience of the reader, as variants of these automata will
be used in Section~\ref{sec:simple_omega_reset_pushdown_automata} for $\omega$-algebraic series.

Following Kuich, Salomaa \cite{kuich_formal_power_series} and Kuich \cite{kuich_semirings}, we introduce pushdown transitions matrices. These matrices can be considered as adjacency matrices of graphs representing automata. A special form, the reset pushdown matrices, is used for pushdown automata starting with an empty stack and allowing the automaton to push onto the empty stack. Here, we are interested in simple reset pushdown matrices, introduced in~\cite{finite_simple}. This simple form allows the automaton only to push one symbol, to pop one symbol or to ignore the stack. The corresponding automata, the simple reset pushdown automata are a generalization of the unweighted automata used in~\cite{unweighted_logic}. They do not use $\epsilon$-transitions and do not allow the inspection of the topmost stack symbol.


A matrix $M\in (S^{n\times n})^{\Gamma^*\times \Gamma^*}$ is called \emph{row-finite} if $\{\pi' \mid M_{\pi,\pi'}\neq 0\}$ is finite for all $\pi\in\Gamma^*$.

Let $\Gamma$ be an alphabet, called \emph{pushdown alphabet} and let $n\geq 1$. A matrix $\bar M \in (S^{n \times n})^{\Gamma^* \times \Gamma^*}$ is called a \emph{pushdown matrix} (with \emph{pushdown alphabet} $\Gamma$ and \emph{stateset} $\{1,\dots,n\}$) if
  \begin{enumerate}[(i)]
  \item $\bar M$ is row-finite;\label{itm:pushdown_matrix_1}
  \item for all $\pi_1, \pi_2 \in \Gamma^*$,
  \[\bar M_{\pi_1,\pi_2} = \begin{cases}
      \bar M_{p,\pi}, & \text{if there exist } p \in \Gamma, \pi,\pi' \in \Gamma^* \text{ with }\pi_1 = p\pi' \text{ and } \pi_2 = \pi \pi', \\
      0, & \text{otherwise.}
    \end{cases}\]\label{itm:pushdown_matrix_2}
  \end{enumerate}
Intuitively, here \eqref{itm:pushdown_matrix_2} means that the infinite pushdown matrix $\bar M$ is fully represented already by the blocks $\bar M_{p,\pi}$ where $p\in\Gamma$, $\pi\in\Gamma^*$, and \eqref{itm:pushdown_matrix_1} means that only finitely many such blocks are nonzero.

Let $\Gamma$ be a pushdown alphabet and $\{1,\dots,n\}$, $n\geq1$, be a set of states.
A \emph{reset matrix} $M_R\in\snngam$ is a row-finite matrix such that
\[(M_R)_{\pi_1,\pi_2}=0\hspace{1cm}\text{for }\pi_1,\pi_2\in\Gamma^*\text{ with }\pi_1\neq\epsilon\mpunkt\]

A \emph{reset pushdown matrix} $M\in\snngam$ is the sum of a reset matrix $M_R$ and a pushdown matrix $\bar M$,
\[M=M_R+\bar M\mpunkt\]
Intuitively, a reset pushdown matrix is similar to a pushdown matrix with the additional possibility to push onto the empty stack, i.e., $M_{\epsilon,\pi}$ is allowed to be nonzero. Note that the entries of reset pushdown matrices are determined by finitely many values because it is row-finite and property \eqref{itm:pushdown_matrix_2} of pushdown matrices ensures that the value of $M_{p \pi',\pi \pi'}$ is equal to (and therefore can be derived from) $M_{p,\pi}$.



A reset pushdown matrix $M$ is called \emph{simple} if, $M\in\big((\ssig)^{n\times n}\big)^{\Gamma^*\times\Gamma^*}$ for some $n\geq 1$, and for all $p, p_1\in\Gamma$,
\[M_{p,\epsilon},\;M_{p,p}=M_{\epsilon,\epsilon}\text{ and }M_{p,p_1 p}=M_{\epsilon,p_1},\]
are the only blocks $M_{\pi,\pi'}$, where $\pi\in\{\epsilon,p\}$ and $\pi'\in\Gamma^*$, that may be unequal to the zero matrix $0$.

Hence, a simple reset pushdown matrix $M$ is defined by its blocks $M_{\epsilon,\epsilon}$ and $M_{p,\epsilon}$, $M_{\epsilon,p}$ ($p\in\Gamma$). Intuitively, the automata will only be allowed to ignore the stack (modeled by $M_{\epsilon,\epsilon}$), pop one symbol ($M_{p,\epsilon}$) or push one symbol ($M_{\epsilon,p}$). Note also that the matrix $M\in((\ssig)^{n\times n})^{\Gamma^*\times\Gamma^*}$ forbids $\epsilon$-transitions. Moreover, the equalities $M_{p,p}=M_{\epsilon,\epsilon}$ and $M_{p,p_1p}=M_{\epsilon,p_1}$ imply that the next transition does not depend on the topmost symbol of the stack except when popping it (modeled by $M_{p,\epsilon}$). 


A \emph{reset pushdown automaton} (with input alphabet $\Sigma$) $\mathfrak{A}=(n,\Gamma,I,M,P)$ is given by
\begin{itemize}
\item a \emph{set of states} $\{1,\dots,n\}$, $n\geq1$,
\item a \emph{pushdown alphabet} $\Gamma$,
\item a reset pushdown matrix $M\in((S \langle \Sigma\cup\{\epsilon\} \rangle)^{n\times n})^{\Gamma^*\times\Gamma^*}$ called \emph{transition matrix},
\item a row vector $I\in(\seps)^{1\times n}$, called \emph{initial state vector},
\item a column vector $P\in(\seps)^{n\times1}$, called \emph{final state vector}.
\end{itemize}
The \emph{behavior} $\|\mathfrak{A}\|$ of a reset pushdown automaton $\mathfrak{A}$ is defined by
\[\|\mathfrak{A}\|=I(M^*)_{\epsilon,\epsilon}P\mpunkt\]

A reset pushdown automaton $\mathfrak{A}=(n,\Gamma,I,M,P)$ is called \emph{simple} if $M$ is a simple reset pushdown matrix.

Example~\ref{ex:simple_automaton} will show a simple reset pushdown automaton and the corresponding simple reset pushdown matrix.

Given a series $r\in \salg$, we want to construct a simple reset pushdown automaton with behavior $r$. By Theorems~5.10 and 5.4 of \cite{kuich_semirings}, $r$ is a component of the unique solution of a strict algebraic system in Greibach normal form.

We only consider the algebraic series $r$ with $(r,\epsilon)=0$; cf.\@ \cite{finite_simple} for the other case. So we assume without loss of generality that $r$ is the $x_1$-component of
the unique solution of the algebraic system \eqref{eqn:2} with variables $x_1,\dots,x_n$
\[ x_i = p_i,\;\;1\leq i\leq n,\]
of the form
\begin{align}
  \tag{$\blacklozenge$}
  \label{eqn:2}
    x_i= \sum_{1\leq j,k\leq n}\sum_{a\in\Sigma}(p_i,a x_j x_k)a x_j x_k+\sum_{1\leq j\leq n}\sum_{a\in\Sigma}(p_i,a x_j)a x_j+\sum_{a\in\Sigma}(p_i,a)a\mpunkt
\end{align}

As in~\cite{finite_simple}, we now construct the simple reset pushdown automaton $\mathfrak{A}_m=(n+1,\Gamma,I_m,M,P)$, $1\leq m\leq n$, with $r=\|\mathfrak{A}_1\|$ as follows:\\
We let $\Gamma=\{x_1,\dots,x_n\}$; we also denote the state $n+1$ by $f$; the entries of $M$ of the form $(M_{x_k,x_k})_{i,j}$, $(M_{x_k,\epsilon})_{i,j}$, $(M_{\epsilon,x_k})_{i,j}$, $(M_{\epsilon,\epsilon})_{i,j}$, $(M_{\epsilon,\epsilon})_{i,f}$, $1\leq i,j,k\leq n$, that may be unequal to $0$ are
\begin{align*}
  (M_{\epsilon,x_k})_{i,j} &= \sum_{a\in\Sigma}(p_i,a x_j x_k)a\mkomma\\
  (M_{x_k,x_k})_{i,j} = (M_{\epsilon,\epsilon})_{i,j} &= \sum_{a\in\Sigma}(p_i,a x_j)a\mkomma\\
  (M_{x_k,\epsilon})_{i,k} = (M_{x_k,x_k})_{i,f} = (M_{\epsilon,\epsilon})_{i,f} &= \sum_{a\in\Sigma}(p_i,a)a\msemikolon
\end{align*}
we further put $(I_m)_m=\epsilon, (I_m)_i=0$ for $1\leq i\leq m-1$ and $m+1\leq i \leq n+1$; finally let $P_f=\epsilon$ and $P_j=0$ for $1\leq j\leq n$;

The following motivation will be essential for our later construction for $\omega$-pushdown automata.
Intuitively, the variables in the algebraic system are simulated by states in the simple reset pushdown automaton $\mathfrak{A}_m$. By the Greibach normal form, only two variables on the right-hand side are allowed. The first is modeled directly by changing the state, the second is pushed to the pushdown tape and the state is changed to it later when the variable is popped again. The special final state $f$ will only be used as the last state.

Note that $(M_{x_k,x_k})_{i,f}$ allows the automaton to change to the final state with a non-empty pushdown tape. This is an artificial addition to fit the definition of simple reset pushdown matrices. If the simple reset automaton is not popping a symbol from the pushdown tape, it cannot distinguish between different pushdown states. Even though the automaton can enter the final state too early, it can not continue from there as it is a sink.

Observe that $\|\mathfrak{A}_m\|=((M^*)_{\epsilon,\epsilon})_{s,f}$ for all $1\leq m\leq n$.

This simple reset pushdown matrix $M$ is called the simple pushdown matrix \emph{induced} by the Greibach normal form~\eqref{eqn:2}.
The simple reset pushdown automata $\mathfrak{A}_m$, $1\leq m\leq n$, are called the simple reset pushdown automata \emph{induced} by the Greibach normal form~\eqref{eqn:2}.

The following (main) theorem of~\cite{finite_simple} states that the behavior of the simple reset pushdown automata induced by the Greibach normal form~\eqref{eqn:2} is the unique solution of the original algebraic system~\eqref{eqn:2}.
\begin{theorem}[Theorem 11 of~\cite{finite_simple}]
  \label{thm:8}The unique solution of the algebraic system~\eqref{eqn:2} is
  \[(\|\mathfrak{A}_1\|,\dots,\|\mathfrak{A}_n\|)=(((M^*)_{\epsilon,\epsilon})_{1,f},\dots,((M^*)_{\epsilon,\epsilon})_{n,f})\mpunkt\]
\end{theorem}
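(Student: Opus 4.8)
The plan is to show that the vector $(((M^*)_{\epsilon,\epsilon})_{1,f},\dots,((M^*)_{\epsilon,\epsilon})_{n,f})$ satisfies the algebraic system~\eqref{eqn:2}; since the system is strict (and in Greibach normal form), its solution is unique, so verifying that this vector is \emph{a} solution establishes the theorem. First I would set up the right algebraic object to compute with. The behavior $\|\mathfrak{A}_m\|=((M^*)_{\epsilon,\epsilon})_{m,f}$ is governed by the block $(M^*)_{\epsilon,\epsilon}$ of the Kleene star of the reset pushdown matrix $M$. The key identity to exploit is the fixed-point property $M^*=\mathone+M M^*$, which on the $(\epsilon,\epsilon)$-block unfolds into a sum over the first transition $M_{\epsilon,\pi}$ and the remaining computation. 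Because $M$ is \emph{simple}, the only nonzero first-transition blocks from the empty stack are $M_{\epsilon,\epsilon}$ (ignore the stack) and $M_{\epsilon,x_k}$ (push $x_k$), so this unfolding splits into exactly two cases that mirror the two recursive summands in~\eqref{eqn:2}.

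The technical heart is a \emph{pop-matching} lemma describing $(M^*)_{\epsilon,\epsilon}$ in terms of $(M^*)_{\epsilon,\epsilon}$ again: when a symbol $x_k$ is pushed, the only way to return to empty stack is to eventually pop it via $M_{x_k,\epsilon}$, and in between the automaton runs an independent subcomputation on a stack with $x_k$ on the bottom. Formally I would establish that
\[
((M^*)_{\epsilon,x_k})_{i,j}
=\Big(\,\textstyle\sum_{1\le j'\le n}((M^*)_{\epsilon,\epsilon})_{i,j'}\,(M_{x_k,\epsilon})_{j',j}\,\Big)\text{-type factorization,}
\]
so that pushing $x_k$, running to empty stack below it, and popping into state $j$ contributes precisely a factor $((M^*)_{\epsilon,\epsilon})_{\cdot,\cdot}$ together with the pop transition $(M_{x_k,\epsilon})_{k,k}$. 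This is where property~\eqref{itm:pushdown_matrix_2} of pushdown matrices (context-independence of the stack tail) and the defining equalities $M_{p,p}=M_{\epsilon,\epsilon}$, $M_{p,p_1p}=M_{\epsilon,p_1}$ of simple matrices are used: they guarantee that the subcomputation above the symbol $x_k$ is governed by the \emph{same} $(M^*)_{\epsilon,\epsilon}$-block, independent of what lies below.

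Assembling these pieces, I would compute $((M^*)_{\epsilon,\epsilon})_{i,f}$ by conditioning on the first transition and tracking the computation until it reaches state $f$. The ``ignore'' transitions $(M_{\epsilon,\epsilon})_{i,j}=\sum_a(p_i,ax_j)a$ reproduce the middle summand $\sum_{j,a}(p_i,ax_j)\,a\,x_j$ of~\eqref{eqn:2}; the final transitions $(M_{\epsilon,\epsilon})_{i,f}=\sum_a(p_i,a)a$ reproduce the last summand $\sum_a(p_i,a)a$; and the push transitions $(M_{\epsilon,x_k})_{i,j}=\sum_a(p_i,ax_jx_k)a$, combined with the pop-matching factorization above, reproduce the first summand $\sum_{j,k,a}(p_i,ax_jx_k)\,a\,x_j\,x_k$. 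Substituting the purported solution $x_j=((M^*)_{\epsilon,\epsilon})_{j,f}$ then turns these three contributions into exactly the right-hand side $p_i$ evaluated at that vector.

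The main obstacle I anticipate is the pop-matching lemma: making the informal ``push, run below, pop'' decomposition into a rigorous matrix identity requires a careful induction on the structure of paths (or, equivalently, an induction on stack height) to show that $(M^*)_{\epsilon,\epsilon}$ genuinely factors through the deeper stack levels without interference. Since this is precisely the content of the finite-word result of~\cite{finite_simple} (Theorem~\ref{thm:8} is quoted \emph{from} that paper), I would either invoke the corresponding lemma of~\cite{finite_simple} directly or reprove this decomposition using the fixed-point equation $M^*=\mathone+M^*M$ together with the simplicity and pushdown properties of $M$; everything else is a routine rearrangement of sums justified by completeness of $S$.
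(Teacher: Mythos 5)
There is nothing in the paper to compare your argument against: Theorem~\ref{thm:8} is imported verbatim, with citation, as Theorem~11 of \cite{finite_simple}, and the present paper gives only the construction of $M$ and $\mathfrak{A}_m$, not a proof. So your proposal can only be judged on its own merits and against the cited source. Its skeleton is sound and is surely the intended one: since \eqref{eqn:2} is in Greibach normal form it is a strict system and hence has a unique solution (the paper itself relies on this, via Theorems~5.10 and~5.4 of \cite{kuich_semirings}), so it suffices to verify that $(((M^*)_{\epsilon,\epsilon})_{1,f},\dots,((M^*)_{\epsilon,\epsilon})_{n,f})$ is \emph{a} solution, and this is done by unfolding $M^*=\mathone+MM^*$ (or $M^*=\mathone+M^*M$) at the $(\epsilon,\epsilon)$-block and splitting over the three transition types of the simple reset pushdown matrix.

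Two concrete repairs are needed, however. First, your displayed ``pop-matching'' identity is false as written: in $((M^*)_{\epsilon,x_k})_{i,j}=\sum_{j'}((M^*)_{\epsilon,\epsilon})_{i,j'}(M_{x_k,\epsilon})_{j',j}$ the stack indices do not compose (a run ending with empty stack cannot be continued by a transition whose source stack is $x_k$), and a factor is missing. The identity your prose actually describes, and which an induction on the first return to the empty stack yields, is
\[(M^*)_{x_k,\epsilon}=(M^*)_{\epsilon,\epsilon}\,M_{x_k,\epsilon}\,(M^*)_{\epsilon,\epsilon}\mkomma\]
equivalently $(M^*)_{\epsilon,x_k}=(M^*)_{\epsilon,\epsilon}M_{\epsilon,x_k}(M^*)_{\epsilon,\epsilon}$, where the factor describing the subcomputation above the bottom symbol equals $(M^*)_{\epsilon,\epsilon}$ precisely because of $M_{p,p}=M_{\epsilon,\epsilon}$, $M_{p,p_1p}=M_{\epsilon,p_1}$ and property \eqref{itm:pushdown_matrix_2}. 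Second, even with this identity the push term becomes $\sum_{j,k}(M_{\epsilon,x_k})_{i,j}\bigl(\sum_{j'}((M^*)_{\epsilon,\epsilon})_{j,j'}(M_{x_k,\epsilon})_{j',k}\bigr)((M^*)_{\epsilon,\epsilon})_{k,f}$, and to recognize the bracketed factor as $((M^*)_{\epsilon,\epsilon})_{j,f}$ --- i.e.\ as the value substituted for $x_j$ --- you must use the design coincidence $(M_{x_k,\epsilon})_{j',k}=(M_{\epsilon,\epsilon})_{j',f}=\sum_{a}(p_{j'},a)a$, together with the facts that column $f$ of every $M_{x_k,\epsilon}$ vanishes and $f$ is a sink, which give $((M^*)_{\epsilon,\epsilon}M_{\epsilon,\epsilon})_{j,f}=((M^*)_{\epsilon,\epsilon})_{j,f}$. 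Your proposal never makes this identification, yet it is the crux of the construction: the pop into state $k$ plays the role of the subautomaton's accepting transition. Finally, note that your fallback of invoking the corresponding lemma of \cite{finite_simple} is legitimate for the way the present paper uses the theorem (it, too, merely cites it), but it would be circular as a self-contained proof of Theorem~11 of \cite{finite_simple} itself.
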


\begin{corollary}[Corollary 12 of~\cite{finite_simple}]
  \label{cor:9}
  Let $r\in\salg$. Then there exists a simple reset pushdown automaton with behavior $r$.
\end{corollary}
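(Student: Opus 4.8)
The plan is to obtain Corollary~\ref{cor:9} as an almost immediate consequence of Theorem~\ref{thm:8} together with the two preparatory facts recalled just before it. First I would invoke the cited results ``Theorems~5.10 and 5.4 of \cite{kuich_semirings}'' to write the given algebraic series $r\in\salg$ as a component of the unique solution of a strict algebraic system in Greibach normal form. I split into the case $(r,\epsilon)=0$ and the case $(r,\epsilon)\neq 0$; the latter is handled exactly as in~\cite{finite_simple} and I would simply refer to that, so that without loss of generality $(r,\epsilon)=0$ and the system has the shape~\eqref{eqn:2}. Re-indexing the variables if necessary, I assume $r$ is the $x_1$-component of the unique solution.

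Next I would apply the construction already laid out in the excerpt: from the Greibach normal form~\eqref{eqn:2} one builds the simple reset pushdown matrix $M$ induced by it (with pushdown alphabet $\Gamma=\{x_1,\dots,x_n\}$, state set $\{1,\dots,n+1\}$, and the entries $(M_{\epsilon,x_k})_{i,j}$, $(M_{x_k,x_k})_{i,j}=(M_{\epsilon,\epsilon})_{i,j}$, $(M_{x_k,\epsilon})_{i,k}=(M_{x_k,x_k})_{i,f}=(M_{\epsilon,\epsilon})_{i,f}$ as specified), together with the induced simple reset pushdown automata $\mathfrak{A}_m=(n+1,\Gamma,I_m,M,P)$. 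By construction $M$ is a simple reset pushdown matrix, so each $\mathfrak{A}_m$ is a genuine \emph{simple} reset pushdown automaton in the sense of the definition given in this section.

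Finally I would read off the conclusion from Theorem~\ref{thm:8}, which asserts that the unique solution of~\eqref{eqn:2} equals $(\|\mathfrak{A}_1\|,\dots,\|\mathfrak{A}_n\|)$. Since $r$ is the $x_1$-component of that unique solution, we get $r=\|\mathfrak{A}_1\|$, and $\mathfrak{A}_1$ is a simple reset pushdown automaton; this proves the corollary. The only genuine content here is bookkeeping rather than a new obstacle: I must make sure that after the reduction to $(r,\epsilon)=0$ and the re-indexing, the distinguished component really is the first one (so that $\mathfrak{A}_1$ is the correct witness), and that the excised case $(r,\epsilon)\neq 0$ is accounted for by citing~\cite{finite_simple}. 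All the analytic work—verifying $\|\mathfrak{A}_m\|=((M^*)_{\epsilon,\epsilon})_{m,f}$ and that these behaviors solve~\eqref{eqn:2}—is already encapsulated in Theorem~\ref{thm:8}, so the corollary follows without further calculation.
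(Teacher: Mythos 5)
Your proposal is correct and follows essentially the same route as the paper: it recalls the Greibach normal form for algebraic series via Theorems~5.10 and 5.4 of \cite{kuich_semirings}, restricts to the case $(r,\epsilon)=0$ (deferring the other case to \cite{finite_simple}, exactly as the paper does), builds the induced simple reset pushdown automata $\mathfrak{A}_m$, and reads off $r=\|\mathfrak{A}_1\|$ from Theorem~\ref{thm:8}. No gap; this is precisely how the corollary is obtained in the paper.
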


\begin{figure}[t]
  \centering\small
  \begin{tikzpicture}[automaton,node distance=2cm]
    \node[roundnode] (S) {$S$};
    \node[right=of S] (dummy) {};
    \node[roundnode,right=of dummy] (P) {$Q$};
    \node[roundnode,below=4cm of S] (Q) {$R$};
    \node[roundnode,below=4cm of P] (T) {$T$};
    \node[roundnode,below=of dummy] (B) {$B$};
    \node[roundnode,right=4cm of B] (F) {$F$};
    \draw (F) -- ++ (0.8,0);
    \draw[<-] (S) -- ++ (-0.8,0);
    \path[bend angle=10,bend right] (S) edge node[below] {$a\push S$} (P);
    \path[bend angle=10,bend right] (P) edge node[above] {$b\pop S$} (S);
    \path (S) edge node[left,align=left] {$1a \push T$\\ $1a\internal$} (Q);
    \path (T) edge node[right,pos=0.1,align=left] {$a\push T$\\ $a\internal$} (P);
    \path (Q) edge node[below] {$b \pop T$} (T);
    \path (B) edge node[left=.1cm,near start] {$b \pop S$} (S);
    \path (B) edge node[right,pos=.3] {$b \pop T$} (T);
    \path (Q) edge node[left,near end] {$b \pop B$} (B);
    \path (P) edge node[right=.1cm,near end] {$b \pop B$} (B);
    \path (P) edge node[right] {$b \internal$} (F);
    \path (B) edge node[below,pos=.8] {$b \internal$} (F);
    \path (P) edge [loop right] node[right] {$a \push B$} ();
    \path (Q) edge [loop left] node[left] {$1a \push B$} ();
    \path (B) edge [loop below] node[below] {$b\pop B$} ();
  \end{tikzpicture}
  \caption{Example~\ref{ex:simple_automaton}: Simple reset pushdown automaton, where $\push X$ means push symbol $X$, $\pop X$ means pop $X$, and $\internal$ leaves the stack unaltered. All shown transitions have a weight equal to the natural number 0 except the three transitions going to state $Q$, which have weight 1. All other possible transitions have weight $-\infty$.}
  \label{fig:example_simple_automaton}
\end{figure}

\begin{example}
  \label{ex:simple_automaton}
  Consider the semiring $\bar\N \langle\langle\Sigma^*\rangle\rangle$ for the arctic semiring $\langle \bar\N,\allowbreak\max,\allowbreak+,\allowbreak-\infty,0\rangle$ with $\bar \N=\N\cup\{-\infty,\infty\}$.
Analogously to Example~\ref{ex:finite_simple}, we let $\mathzero = -\infty$ and $\mathone=0$ and we note that in the following, $1$ stands for the natural number $1$.

  We define the algebraic system
  \begin{alignat*}{4}
    S &= a Q S + 1a R T + 1a R\hspace{1cm} & T &= a Q T + a Q\\
    Q &= b + a Q B &  R &= b + 1a R B\\
    B &= b
  \end{alignat*} with the variables $S, T, Q, R, B$. These variables facilitate reading the equations, but for comparison with equation~\eqref{eqn:2}, consider the variable mapping $x_1=T, x_2=S, x_3=R, x_4=Q, x_5=B$.

Now, the variable $Q$ derives a string $a^n b^{n+1}$ for $n\in\N$. The variable $R$ does the same but at the same time produces the weight $n$. The variables $S$ and $T$ add another $a$.

Let $L=\{a^n b^n \mid n\geq 1\}$. 
In total, the second component (i.e., with $S$ being the start variable) of the least solution is $u$ with $(u,a^{n_1}b^{n_1}a^{n_2}b^{n_2}\ldots a^{n_k}b^{n_k})=\max n_i$ for $k\geq 1$ and $(u,w)=-\infty$ for $w\notin L^+$.

From this, we can construct a simple reset pushdown automaton $\mathfrak{A}_2=(n,\Gamma,I,M,P)$ as shown in Figure~\ref{fig:example_simple_automaton}. Thus, we have $n=6$, $\Gamma=\{T,S,R,Q,B\}$. The initial state vector is $I_2=\epsilon$ and $I_i=0$ for $i\neq 2$. The final state vector is $P_6=\epsilon$ and $P_i=0$ for $i\neq 6$. The simple reset pushdown matrix is defined as
\[M=
\begin{pmatrix}
  M_{\epsilon,\epsilon} & M_{\epsilon,T} & M_{\epsilon,S} & M_{\epsilon,R} & M_{\epsilon,Q} & M_{\epsilon,B} & \cdots\\
  M_{T,\epsilon} & M_{\epsilon,\epsilon} & \mathzero & \mathzero & \mathzero & \mathzero & \cdots\\
  M_{S,\epsilon} & \mathzero & M_{\epsilon,\epsilon} & \mathzero & \mathzero & \mathzero  & \cdots\\
  M_{R,\epsilon} & \mathzero & \mathzero & M_{\epsilon,\epsilon} & \mathzero & \mathzero  & \cdots\\
  M_{Q,\epsilon} & \mathzero &  \mathzero & \mathzero & M_{\epsilon,\epsilon} & \mathzero & \cdots\\
  M_{B,\epsilon} & \mathzero &  \mathzero & \mathzero & \mathzero & M_{\epsilon,\epsilon} & \cdots\\
  \vdots & \vdots & \vdots & \vdots & \vdots & \vdots & \ddots
\end{pmatrix}\mkomma\]
with, for instance
\[M_{\epsilon,\epsilon}=
\begin{pmatrix}
  \mathzero & \mathzero & \mathzero & a & \mathzero & \mathzero\\
  \mathzero & \mathzero & 1a & \mathzero & \mathzero & \mathzero\\
  \mathzero & \mathzero & \mathzero & \mathzero & \mathzero & \mathzero\\
  \mathzero & \mathzero & \mathzero & \mathzero & \mathzero & b\\
  \mathzero & \mathzero & \mathzero & \mathzero & \mathzero & b\\
  \mathzero & \mathzero & \mathzero & \mathzero & \mathzero & \mathzero
\end{pmatrix}\text{ and }
M_{\epsilon,B}=
\begin{pmatrix}
  \mathzero & \mathzero & \mathzero & \mathzero & \mathzero & \mathzero\\
  \mathzero & \mathzero & \mathzero & \mathzero & \mathzero & \mathzero\\
  \mathzero & \mathzero & 1a & \mathzero & \mathzero & \mathzero\\
  \mathzero & \mathzero & \mathzero & a & \mathzero & \mathzero\\
  \mathzero & \mathzero & \mathzero & \mathzero & \mathzero & \mathzero\\
  \mathzero & \mathzero & \mathzero & \mathzero & \mathzero & \mathzero
\end{pmatrix}\mpunkt
\]
The rest of the matrix $M$ can be inferred by the rules of pushdown matrices. The behavior $\|\mathfrak{A}_2\|$ is equal to the second component of the least solution of the algebraic system above.
\end{example}


\section{Simple \texorpdfstring{$\omega$}{omega}-Reset Pushdown Automata}
\label{sec:simple_omega_reset_pushdown_automata}

In this section, we will prove that for every $\omega$-algebraic series $r$, there exists a simple $\omega$-reset pushdown automaton with behavior $r$. We first introduce some notation and prove an important equality for infinite applications of reset pushdown matrices. Then we introduce simple $\omega$-reset pushdown automata, and the main theorem will show that they can recognize all $\omega$-algebraic series.

In the sequel, $(S,V)$ \emph{is a complete semiring-semimodule pair}.

We will use sets $P_l$ comprising infinite sequences over $\{1,\ldots,n\}$ as defined in \cite{triple-pair}:
\[P_l = \{ (j_1, j_2, \dots) \in \{1, \dots, n\}^\omega \mid j_t \leq l \text{ for infinitely many } t \geq 1 \}\mpunkt\]

We obtain, for a reset pushdown matrix $M\in\snngam$, $\pi\in \Gamma^+$ and for $1\leq j\leq n$,
\begin{equation}\label{eqn:pl_identity}
  ((M^{\omega,l})_\pi)_j =
  \sum_{\pi_1,\pi_2,\dots \in \Gamma^*} \sum_{(j_1, j_2, \dots) \in P_l}
  (M_{\pi,\pi_1})_{j,j_1}(M_{\pi_1,\pi_2})_{j_1,j_2}(M_{\pi_2,\pi_3})_{j_2,j_3} \cdots \mpunkt
\end{equation}

Observe the following summation identity: Assume that $M_1, M_2, \dots$ are matrices in $\snn$. Then for $0 \leq l \leq n$, $1 \leq j \leq n$, and $m \geq 1$, we have
\[\sum_{\mathclap{(j_1,j_2, \dots) \in P_l}}\;(M_1)_{j,j_1}(M_2)_{j_1,j_2}\dots=
\;\sum_{\mathclap{1 \leq j_1, \dots, j_m \leq n}}\; (M_1)_{j,j_1}\cdots (M_m)_{j_{m-1},j_m} \sum_{\mathclap{(j_{m+1},j_{m+2}, \dots)\in P_l}}\;(M_{m+1})_{j_m,j_{m+1}} \cdots \mpunkt\]


By Theorem 5.5.1 of Ésik, Kuich \cite{MAT} we obtain, for a finite matrix $M$ and for $0 \leq l \leq n$, the equality $MM^{\omega, l} = M^{\omega,l}$. By Theorem~6 of Droste, Ésik, Kuich~\cite{triple-pair}, we have a similar result for pushdown matrices.
We will now show the same equality for a reset pushdown matrix $M$.

\begin{theorem}
  \label{thm:triple-pair:6}
  Let $(S,V)$ be a complete semiring-semimodule pair and let further $M \in \snngam$ be a reset pushdown transition matrix. Then, for $0 \leq l \leq n$,
  \[M^{\omega, l}= MM^{\omega, l} \mpunkt\]
\end{theorem}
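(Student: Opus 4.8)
The plan is to verify the identity entrywise: for every stack content $\pi\in\Gamma^*$ and every state $1\le j\le n$, I would show that $((M^{\omega,l})_\pi)_j$ equals $((MM^{\omega,l})_\pi)_j=\sum_{\pi_1\in\Gamma^*}\sum_{1\le j_1\le n}(M_{\pi,\pi_1})_{j,j_1}\,((M^{\omega,l})_{\pi_1})_{j_1}$. The natural engine is the path-sum representation \eqref{eqn:pl_identity} together with the displayed summation identity: reading $((M^{\omega,l})_\pi)_j$ as a sum over all infinite runs $(\pi,j)\to(\pi_1,j_1)\to(\pi_2,j_2)\to\cdots$ whose state sequence lies in $P_l$, I would peel off exactly the first factor by applying the summation identity with $m=1$. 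Since deleting one state from an infinite sequence does not affect whether a state $\le l$ is visited infinitely often, the tail $(j_2,j_3,\dots)$ again ranges over $P_l$; hence each inner sum collapses to $((M^{\omega,l})_{\pi_1})_{j_1}$ and the right-hand side is reproduced.

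The one place where this peeling is not immediate is the empty stack, and this is the main obstacle. Identity \eqref{eqn:pl_identity} is asserted only for $\pi\in\Gamma^+$, whereas the step above also needs the path-sum value of $((M^{\omega,l})_{\pi_1})_{j_1}$ when the run passes through $\pi_1=\epsilon$ --- precisely the configurations at which the reset part $M_{\epsilon,\cdot}$, rather than the pushdown part, is active. So before the peeling can be run uniformly I must show that \eqref{eqn:pl_identity} extends to $\pi=\epsilon$, i.e.\ that the empty-stack row of $M^{\omega,l}$ is likewise computed by summing over all Büchi-accepting runs issuing from $(\epsilon,j)$.

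To obtain this I would simulate the reset feature by an ordinary pushdown matrix over an enlarged alphabet. Adjoin a fresh bottom symbol $\bot\notin\Gamma$, set $\Gamma_\bot=\Gamma\cup\{\bot\}$, and define $\hat M\in(\snn)^{\Gamma_\bot^*\times\Gamma_\bot^*}$ by $\hat M_{p,\pi}=M_{p,\pi}$ for $p\in\Gamma,\pi\in\Gamma^*$ and $\hat M_{\bot,\pi\bot}=M_{\epsilon,\pi}$ for $\pi\in\Gamma^*$, all remaining blocks being $0$; one checks that $\hat M$ is row-finite, satisfies the pushdown property, and never pops $\bot$. Under the stack correspondence $\sigma\leftrightarrow\sigma\bot$ the runs of $M$ from $(\epsilon,j)$ are in weight- and state-preserving bijection with the runs of $\hat M$ from $(\bot,j)$, and $\bot\in\Gamma_\bot^+$, so the empty-stack configuration of $M$ becomes a genuine nonempty configuration of $\hat M$ to which the pushdown result applies. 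I would then invoke Theorem~6 of Droste, Ésik, Kuich~\cite{triple-pair}, giving $\hat M^{\omega,l}=\hat M\hat M^{\omega,l}$; since from any stack $\pi\bot$ the matrix $\hat M$ reaches only stacks of the form $\pi'\bot$, reading this equation at the indices $(\pi\bot,j)$ and translating back through the correspondence yields $((M^{\omega,l})_\pi)_j=((MM^{\omega,l})_\pi)_j$ for all $\pi\in\Gamma^*$, including $\pi=\epsilon$, with \eqref{eqn:pl_identity} and the summation identity used to match the tail sums on the two sides.
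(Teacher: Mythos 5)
Your first paragraph is in substance exactly the paper's proof: the paper expands $((MM^{\omega,l})_{\pi_0})_{j_0}$ via \eqref{eqn:pl_identity}, merges the two sums (completeness gives the needed distributivity, and prepending an arbitrary first state to a sequence in $P_l$ yields exactly the sequences in $P_l$), and reads the merged sum as $((M^{\omega,l})_{\pi_0})_{j_0}$. Crucially, the paper does this for every $\pi_0\in\Gamma^*$ and expands the inner factors $(M^{\omega,l})_\pi$ for every $\pi\in\Gamma^*$ --- that is, it uses the path-sum representation at the empty stack as well, with no detour. For the row-finite matrices considered here, the operation ${}^{\omega,l}$ is given by exactly this path sum; that is the reading the paper inherits from \cite{triple-pair} and uses elsewhere too (the proof of Lemma~\ref{lem:hat_omega} applies the representation at every $\pi\in\Gamma^*$). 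So the restriction $\pi\in\Gamma^+$ in the display \eqref{eqn:pl_identity} is not a substantive restriction, and the obstacle you identify is not actually one.

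The detour you build around that perceived obstacle, however, contains a genuine circularity, so your proof as structured does not close. The bijection $\sigma\leftrightarrow\sigma\bot$ between runs of $M$ from $(\epsilon,j)$ and runs of $\hat M$ from $(\bot,j)$ identifies \emph{path sums} with \emph{path sums}; combined with the pushdown result for $\hat M$ (legitimate, since $\bot\in\Gamma_\bot^+$ and your $\hat M$ is indeed a pushdown matrix) it shows that the path sum of $M$ at $(\epsilon,j)$ equals $((\hat M^{\omega,l})_\bot)_j$. But your final step --- ``translating back through the correspondence'' --- needs the identification $((\hat M^{\omega,l})_\bot)_j=((M^{\omega,l})_\epsilon)_j$, and the only available bridge between these two quantities is the statement that $((M^{\omega,l})_\epsilon)_j$ equals its own path sum, i.e.\ precisely the extension of \eqref{eqn:pl_identity} to $\pi=\epsilon$ that the detour was designed to deliver. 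If that representation at $\epsilon$ is genuinely in doubt, then a bijection of runs tells you nothing about the value $(M^{\omega,l})_\epsilon$, and Theorem~6 of \cite{triple-pair} applied to $\hat M$ cannot be transported back to $M$ at the empty stack; if it is not in doubt (the correct reading), then $\hat M$ is unnecessary and your first paragraph already is the entire proof. Either way, the $\bot$-construction does not do the work you assign to it.
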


\begin{proof}
  We obtain for $\pi_0 \in \Gamma^*$ and $1 \leq j_0 \leq n$,
  \begin{align*}
    ((MM^{\omega,l})_{\pi_0})_{j_0} &= \sum_{\pi \in \Gamma^*} \sum_{1 \leq j \leq n}(M_{\pi_0,\pi})_{j_0,j}\sum_{{\pi_1,\pi_2, \dots}\in \Gamma^*}\sum_{(j_1,j_2\dots)\in P_l} (M_{\pi,\pi_1})_{j,j_1}(M_{\pi_1,\pi_2})_{j_1,j_2} \dots \\
    &= {\sum_{{\pi,\pi_1,\pi_2 \dots} \in \Gamma^*} \sum_{(j,j_1,j_2,\dots)\in P_l}(M_{\pi_0,\pi})_{j_0,j}(M_{\pi,\pi_1})_{j,j_1}(M_{\pi_1,\pi_2})_{j_1,j_2} \dots}\\
    &= ((M^{\omega,l})_{\pi_0})_{j_0}.\qedhere
  \end{align*}
\end{proof}


Next, an $\omega$-reset pushdown automaton
\[\mathfrak{A}=(n,\Gamma,I,M,P,l)\]
is given by a reset pushdown automaton $(n,\Gamma,I,M,P)$ and an integer $l$ with $0\leq l\leq n$, which indicates that $1,\dots,l$ are the repeated states of $\mathfrak{A}$. The behavior $\|\mathfrak{A}\|$ of this $\omega$-reset pushdown automaton $\mathfrak{A}$ is defined by
\[\|\mathfrak{A}\|=I(M^*)_{\epsilon,\epsilon}P+I(M^{\omega,l})_\epsilon\mpunkt\]
The $\omega$-reset pushdown automaton $\mathfrak{A}=(n,\Gamma,I,M,P,l)$ is called \emph{simple} if $M$ is a simple reset pushdown matrix.

\begin{figure}[t]
  \centering\small
  \begin{tikzpicture}[automaton,node distance=2.3cm]
    \node[roundnode] (2) {2};
    \node[roundnode,right=of 2] (3) {3};
    \node[roundnode,right=of 3] (4) {4};
    \node[roundnode,accepting,right=of 4] (1) {1};
    \draw[<-] (2) -- ++ (-1,0);
    \path (2) edge node[above] {$a\push Z_0:1$} (3);
    \path (3) edge node[above] {$b\pop X$} (4);
    \path (4) edge node[above] {$b\pop Z_0$} (1);
    \path (3) edge [loop above] node[above] {$a\push X:1$} ();
    \path (4) edge [loop above] node[above] {$b\pop X$} ();
    \path (1) edge [loop above] node[above] {$c\internal$} ();
    \path[bend angle=25,bend right] (2) edge node[below left,near start] {$c\internal$} (1);
    \path[bend angle=17,bend right] (3) edge node[below left,near start] {$b\pop Z_0$} (1);
  \end{tikzpicture}
  \caption{Example~\ref{ex:automaton}: Simple $\omega$-reset pushdown automaton, where, as above, $\push X$ means push symbol $X$, $\pop X$ means pop $X$, and $\internal$ leaves the stack unaltered. All transitions shown have a weight equal to the natural number 0 except the two transitions going to state $3$ and reading letter $a$, which have weight 1. All other possible transitions have weight $\infty$.}
  \label{fig:example}
\end{figure}

\begin{example}
  \label{ex:automaton}
  Figure~\ref{fig:example} shows a simple $\omega$-reset pushdown automaton $\mathcal{A}=(4,\Gamma,I,M,P,1)$ over the quemiring $\N^\infty \langle\langle\Sigma^*\rangle\rangle\times \N^\infty \langle\langle\Sigma^\omega\rangle\rangle$ for the tropical semiring $\langle \N^\infty,\min,\allowbreak +,\allowbreak\mathzero=\infty,\mathone=0\rangle$ with $\Sigma=\{a,b,c\}$, $\Gamma=\{Z_0,X\}$, $I_2=0$, $I_i=\infty$ for $i\neq 2$ and $P_i=\infty$ for all $1\leq i\leq 4$. Then the adjacency matrix $M$ of the automaton shown in Figure~\ref{fig:example} is a simple reset pushdown matrix. As an indication, $M$ is defined with $(M_{\epsilon,\epsilon})_{1,1}=0c$, $(M_{\epsilon,\epsilon})_{2,1}=0c$, $(M_{\epsilon,Z_0})_{2,3}=1a$, etc., resulting in e.g.,
\[M_{\epsilon,\epsilon}=
\begin{pmatrix}
  0c & \mathzero & \mathzero & \mathzero\\  0c & \mathzero & \mathzero & \mathzero\\ \mathzero & \mathzero & \mathzero & \mathzero\\ \mathzero & \mathzero & \mathzero & \mathzero
\end{pmatrix} \text{ and finally }M=
\begin{pmatrix}
  M_{\epsilon,\epsilon} & M_{\epsilon,Z_0} & M_{\epsilon,X} & \cdots\\
  M_{Z_0,\epsilon} & M_{\epsilon,\epsilon} & \mathzero & \cdots\\
  M_{X,\epsilon} & \mathzero & M_{\epsilon,\epsilon}  & \cdots\\
  \vdots & \vdots & \vdots & \ddots
\end{pmatrix}\mkomma\]
where the excluded part of $M$ can be derived from the rules of pushdown and simple reset pushdown matrices.
The automaton $\mathcal{A}$ has the behavior $a^n b^n c^\omega\mapsto n$, similar to the mixed $\omega$-algebraic system in Example~\ref{ex:mixed_algebraic_system}.
\end{example}

\begin{example}
  \label{ex:omega_automaton}
  Reconsider Example~\ref{ex:simple_automaton}. We define the simple $\omega$-reset pushdown automaton $\mathfrak{A}_2=(6,\Gamma,I,\allowbreak M,P,1)$ where we define the state ordering $T,S,Q,P,B,F$ to make state $T$ Büchi-accepting. The behavior in the semiring part is equal to before; the behavior in the semimodule part is $u$ with $(u,a^{n_1}b^{n_1}\allowbreak a^{n_2}b^{n_2}\ldots)=\max n_i$ and $(u,w)=-\infty$ for $w\notin \{a^n b^n \mid n\geq 1\}^\omega$.
\end{example}

\begin{example}

  Consider the $\omega$-algebraic system
  \begin{equation}\label{eqn:example_unmixed_system}
    \begin{aligned}
      y_1 &= a + c y_1\\
      y_2 &= a y_1 y_2 + a y_1\mpunkt
    \end{aligned}
  \end{equation}
  We will consider the second component of the first canonical solution, i.e., variable $y_1$ is Büchi-accepting and variable $y_2$ is the start variable.

  The $\omega$-algebraic system induces the following mixed $\omega$-algebraic system
  \begin{equation}\label{eqn:example_mixed_system}
    \begin{alignedat}{4}
      x_1 &= a + c x_1                &    z_1 &= c z_1\\
      x_2 &= a x_1 x_2 + a x_1\qquad  &    z_2 &= a z_1 + a x_1 z_2\mpunkt
    \end{alignedat}
  \end{equation}

  The least solution of $x=p(x)$ is
  \[\sigma=
  \begin{pmatrix}
    c^* a\\ (a c^* a)^+
  \end{pmatrix}\mpunkt\]

  Now, we write the linear system $z = \varrho(\sigma) z$ in the matrix form and compute the first canonical solution.
  \begin{align*}
    \varrho(\sigma)^{\omega,1}&=
    \begin{pmatrix}
      c & 0\\
      a & a (c^* a)
    \end{pmatrix}^{\omega,1}\\
    &= \begin{pmatrix}
      (c+0(a a)^* a)^\omega\\ (a c^* a)^* a (c+0(a a)^* a)^\omega
    \end{pmatrix}\\
    &= \begin{pmatrix}
      c^\omega\\ (a c^* a)^* a c^\omega
    \end{pmatrix}\eqqcolon
    \begin{pmatrix}
      \omega^{(1)}_1\\ \omega^{(1)}_2
    \end{pmatrix}=
    \omega^{(1)}
  \end{align*}
  Note that the second component, $\omega^{(1)}_2$, does not contain the $\omega$-words $(a c^* a)^\omega$ even though for an unweighted $\omega$-context-free grammar corresponding to \eqref{eqn:example_unmixed_system}, the derivation
  \[y_2\to a y_1 y_2\to (a a) y_2 \to (a a) a y_1 y_2 \to (a a)^2 y_2 \to^\omega a^\omega\]
  would be successful even with only $y_1$ Büchi-accepting. The difference is due to the fact that $y_1$ is not \emph{significant} in the $\omega$-algebraic system above, i.e., $y_1$ in \eqref{eqn:example_unmixed_system} is exchanged by $x_1$ in the mixed $\omega$-algebraic system \eqref{eqn:example_mixed_system} and can therefore no longer be considered as Büchi-accepting variable in $\varrho(\sigma)^{\omega,1}$ (for more information, see~\cite{MAT} pp.\@ 140 ff.).

  Now, we look at the simple $\omega$-reset pushdown automaton induced by $\omega$-algebraic system \eqref{eqn:example_unmixed_system}:
  \begin{center}
    \begin{tikzpicture}[automaton,node distance=2.3cm]
      \node[roundnode] (2) {2};
      \node[roundnode,accepting,right=of 2] (1) {1};
      \node[roundnode,right=of 1] (f) {f};
      \draw (f) -- ++ (1,0);
      \draw[<-] (2) -- ++ (-1,0);
      \path[bend left] (2) edge node[above,align=left] {$a\internal$\\$a\push y_2$} (1);
      \path[bend left] (1) edge node[below] {$a\pop y_2$} (2);
      \path (1) edge node[above] {$a\internal$} (f);
      \path (1) edge [loop below] node[below] {$c\internal$} ();
    \end{tikzpicture}
  \end{center}
  The behavior of this automaton is
  \begin{alignat*}{8}
    &(&((M^*)_{\epsilon,\epsilon})_{1,f},&\;&((M^*)_{\epsilon,\epsilon})_{2,f};&\;&((M^{\omega,1})_\epsilon)_1,&\;&((M^{\omega,1})_\epsilon)_2)&\\
    =\,&(&c^* a,&&(a c^* a)^+;&&c^\omega,&\quad&(a c^* a)^* a c^\omega+(a c^* a)^\omega)
  \end{alignat*}
  Here, the first two components are equal to $\sigma$, as desired. But the last component differs from $\omega^{(1)}_2$; the last component is however equal to the behavior of unweighted $\omega$-context-free grammars.

  Note that the desired component $\omega^{(1)}_2=(a c^* a)^* a c^\omega$ is not recognized by this automaton, even when changing the Büchi-accepting states. If no states are Büchi-accepting, the behavior is 0, if all of them are Büchi-accepting, we have the same behavior as above. If only state 2 is Büchi-accepting (can be achieved by renaming), we only recognize $(a c^* a)^\omega$.

  We now propose a different construction; this new construction models exactly the canonical solutions of mixed $\omega$-algebraic systems. The following is the simple $\omega$-reset pushdown automaton induced by the \emph{mixed} $\omega$-algebraic system~\eqref{eqn:example_mixed_system}; this new construction will be defined after the example. Basically, the construction is similar to the old construction but it differentiates between variables $x$ and $z$; it therefore uses the states $x_1,\ldots,x_n,z_1,\ldots,z_n$:
  \begin{center}
    \begin{tikzpicture}[automaton,node distance=2.3cm]
      \node[roundnode] (z2) {$z_2$};
      \node[roundnode,accepting,below=of z2] (z1) {$z_1$};
      \node[roundnode,right=3.5cm of z2] (x1) {$x_1$};
      \node[roundnode,below=of x1] (x2) {$x_2$};
      \node[roundnode,right=of x1] (f) {$f$};
      \draw (f) -- ++ (1,0);
      \draw[<-] (x2) -- ++ (-1,0);
      \draw[<-] (z2) -- ++ (-1,0);
      \path[bend left] (x2) edge node[left,align=left,near start] {$a\internal$\\$a\push X_2$} (x1);
      \path[bend left] (x1) edge node[right] {$a\pop X_2$} (x2);
      \path (x1) edge node[above] {$a\internal$} (f);
      \path (x1) edge [loop above] node[above] {$c\internal$} ();
      \path (z2) edge node[left] {$a\internal$} (z1);
      \path[bend left=10] (z2) edge node[above] {$a\push Z_2$} (x1);
      \path[bend left=10] (x1) edge node[below] {$a\pop Z_2$} (z2);
      \path (z1) edge [loop left] node[left] {$c\internal$} ();
    \end{tikzpicture}
  \end{center}
  This simple $\omega$-reset pushdown automaton has exactly the behavior $(\sigma,\omega^{(1)})$. This means, if only $z_1$ is Büchi-accepting, then the automaton does not allow the run $(a c^* a)^\omega$.

  The rest of the paper will show that in general, the $l$\textsuperscript{th} canonical solution of a mixed $\omega$-algebraic system $x=p(x), z=\varrho(x)z$ is exactly the behavior of the simple $\omega$-reset pushdown automaton induced by $x=p(x), z=\varrho(x)z$.
\end{example}

\mbox{}\phantomsection
\label{pagemark_discussions}
Given a series $r\in \salgtimes$, we want to construct a simple $\omega$-reset pushdown automaton with behavior $r$. 
By Theorem~\ref{thm:unmix} and Theorem~\ref{thm:3.1}, $r$ is a component of a canonical solution of an $\omega$-algebraic system~\eqref{eqn:5new} (compare this to the algebraic system~\eqref{eqn:2}) in Greibach normal form over the quemiring $\stimes$,
\begin{align}
  \label{eqn:5new}
    y_i= \sum_{1\leq j,k\leq n}\sum_{a\in\Sigma}(p_i,a y_j y_k)a y_j y_k+\sum_{1\leq j\leq n}\sum_{a\in\Sigma}(p_i,a y_j)a y_j+\sum_{a\in\Sigma}(p_i,a)a\mpunkt
\end{align}

The variables of this system are $y_i$, ($1\leq i\leq n$); they are variables for $\stimes$. The system~\eqref{eqn:5new} induces the following mixed $\omega$-algebraic system:
\begin{align}
  \label{eqn:2againnew}
    x_i &= \sum_{1\leq j,k\leq n}\sum_{a\in\Sigma}(p_i,a y_j y_k)a x_j x_k+\sum_{1\leq j\leq n}\sum_{a\in\Sigma}(p_i,a y_j)a x_j+\sum_{a\in\Sigma}(p_i,a)a\mpunkt
\intertext{and}
  \label{eqn:6new}
  z_i &= \sum_{1\leq j,k\leq n}\sum_{a\in\Sigma}(p_i, a y_j y_k)a(z_j+x_j z_k)+\sum_{1\leq j\leq n}\sum_{a\in\Sigma}(p_i,a y_j)a z_j
\end{align}

But this system hides information, for instance, $y_j y_k$ will never be derived by two consecutive variables $z_j z_k$ of $\ssigomega$. Our new construction is therefore based on the following mixed $\omega$-algebraic system:
\begin{align}
  \tag{$\lozenge$}
  \label{eqn:new_sys1}
    x_i &= \sum_{1\leq j,k\leq n}\sum_{a\in\Sigma}(p_i,a x_j x_k)a x_j x_k+\sum_{1\leq j\leq n}\sum_{a\in\Sigma}(p_i,a x_j)a x_j+\sum_{a\in\Sigma}(p_i,a)a\mpunkt
\intertext{and}
  \tag{$\lozenge\lozenge$}
  \label{eqn:new_sys2}
  z_i &= \sum_{1\leq j,k\leq n}\sum_{a\in\Sigma}(p_i, a x_j z_k)a x_j z_k+\sum_{1\leq j\leq n}\sum_{a\in\Sigma}(p_i,a z_j)a z_j
\end{align}
The new system \eqref{eqn:new_sys1}, \eqref{eqn:new_sys2} can be gained from the last system \eqref{eqn:2againnew}, \eqref{eqn:6new} by renaming; all new coefficients can easily be transferred except one: for all $1\leq i,j\leq n$, we set $(p_i,a z_j) = (p_i,a y_j)+\sum_{1\leq k\leq n} (p_i,a y_j y_k)$.

Note that the algebraic systems \eqref{eqn:new_sys1} and \eqref{eqn:2} are equivalent. 

Also note that we could start the presentation directly with the system \eqref{eqn:new_sys1}, \eqref{eqn:new_sys2} by applying Theorem~\ref{thm:greibach} instead of starting with system \eqref{eqn:5new}, \eqref{eqn:2againnew} and applying Theorem~\ref{thm:unmix}. We decided for this presentation because the mixed $\omega$-algebraic systems do not have a counterpart in unweighted automata theory and therefore, we believe it more natural to start by an $\omega$-algebraic system and constructing our simple $\omega$-reset pushdown automaton from there.

We now want to construct a simple $\omega$-reset pushdown automaton. Here, we introduce our new construction. Let $\mathfrak{A}_m^l=(2n+1,\Gamma,I_m,M,\allowbreak P,l)$, $1\leq m\leq n$, $0\leq l\leq n$, be defined as follows:\\
We let $\Gamma=\{X_1,\dots,X_n,Z_1,\dots,Z_n\}$; we denote the states $1,\dots,2n+1$ by $z_1,\dots,z_n,x_1,\dots,x_n,f$; the entries of $M$ of the form $(M_{\pi,\pi'})_{v,v'}$ for $1\leq v,v'\leq 2n+1$ and for $\pi,\pi'\in\Gamma^*$ with $|\pi|,|\pi'|\leq 1$
that may be unequal to $0$ are
\begin{align*}
  (M_{\epsilon,X_k})_{x_i,x_j} &= \sum_{a\in\Sigma}(p_i,a x_j x_k)a\mkomma\\
  (M_{Z_k,Z_k})_{x_i,x_j} = (M_{X_k,X_k})_{x_i,x_j} = (M_{\epsilon,\epsilon})_{x_i,x_j} &= \sum_{a\in\Sigma}(p_i,a x_j)a\mkomma\\
  (M_{Z_k,\epsilon})_{x_i,z_k} = (M_{X_k,\epsilon})_{x_i,x_k} = (M_{Z_k,Z_k})_{x_i,f} = (M_{X_k,X_k})_{x_i,f} = (M_{\epsilon,\epsilon})_{x_i,f} &= \sum_{a\in\Sigma}(p_i,a)a\mkomma\\
  (M_{Z_k,Z_k})_{z_i,z_j} = (M_{X_k,X_k})_{z_i,z_j} = (M_{\epsilon,\epsilon})_{z_i,z_j} &= \sum_{a\in\Sigma}(p_i,a z_j)a\mkomma\\
  (M_{\epsilon,Z_k})_{z_i,x_j} &= \sum_{a\in\Sigma}(p_i,a x_j z_k)a\mkomma
\end{align*}
for $1\leq i, j, k\leq n$; we further put $(I_m)_{x_m}=(I_m)_{z_m}=\epsilon$, and  $(I_m)_{x_i}=(I_m)_{z_i}=0$ for $1\leq i\leq m-1$ and $m+1\leq i \leq n$ and $(I_m)_f=0$; finally let $P_f=\epsilon$ and $P_j=0$ for $1\leq j\leq 2n$;

In the following, we assume that $r\in \salgtimes$ is the $m$\textsuperscript{th} component of the $l$\textsuperscript{th} canonical solution of~\eqref{eqn:5new}.
We want to show that for the $l$\textsuperscript{th} canonical solution $\tau=(\sigma,\omega)$ of \eqref{eqn:new_sys1}, \eqref{eqn:new_sys2}, and therefore also of \eqref{eqn:5new}, we have $\tau_m=\sigma_m+\omega_m=\|\mathfrak{A}_m^l\|$.

This simple reset pushdown matrix $M$ is called the simple reset pushdown matrix \emph{induced} by the Greibach normal form~\eqref{eqn:new_sys1}, \eqref{eqn:new_sys2}.
The simple $\omega$-reset pushdown automata $\mathfrak{A}_m^l$ ($1\leq m\leq n$, $0\leq l\leq n$) are called the simple $\omega$-reset pushdown automata \emph{induced} by the Greibach normal form~\eqref{eqn:new_sys1}, \eqref{eqn:new_sys2}.

For the rest of the paper, we will use the following notation (cf.\@ \cite{kuich_formal_power_series}, page 179).
Note that $M\in(S^{k \times k})^{\Gamma^* \times \Gamma^*}$ for $k=2n+1$. By isomorphism, we can transform this into $\what M\in(S^{\Gamma^* \times \Gamma^*})^{k \times k}$. We then have $(M_{\pi,\pi'})_{v,v'}=(\what M_{v,v'})_{\pi,\pi'}$ for $\pi,\pi'\in\Gamma^*$ and $1\leq v,v'\leq 2n+1$. (By the notation $1\leq v\leq 2n+1$, we mean $v$ can be any of the states $z_1,\dots,z_n,x_1,\dots,x_n,f$.) 

\begin{example}
This notation allows us to add up matrices with suitable pushdown indexes while still keeping the information of the states. For instance, note that
\[\sum_{1\leq k\leq n}\sum_{\pi\in\Gamma^*}(\what M_{z_i,x_k})_{\epsilon,\pi} (\what M_{x_{k},z_j})_{\pi,\epsilon}=\sum_{1\leq k\leq n}\big(\what M_{z_i,x_k} \what M_{x_k,z_j}\big)_{\epsilon,\epsilon}\mpunkt\]
Now consider the term
\[\sum_{1\leq k\leq n}\sum_{\pi\in\Gamma^*}(M_{\epsilon,\pi})_{z_i,x_k} (M_{\pi,\epsilon})_{x_k,z_j}\mkomma\]
which cannot be simplified because $\sum_{\pi\in\Gamma^*}(M_{\epsilon,\pi}M_{\pi,\epsilon})_{z_i,z_j}$
does no longer hold the information that the path passes only through states $x_i$, i.e., it contains also the path $(M_{\epsilon,\pi})_{z_i,z_k} (M_{\pi,\epsilon})_{z_k,z_j}$ (for all $1\leq k\leq n$). In the proofs below, we will specifically need to distinguish paths that pass through states $x_i$ and those that pass through states $z_i$ as in the mixed $\omega$-algebraic system, we also distinguish between variables $x_i$ for finite derivations and variables $z_i$ for infinite derivations.
\end{example}

\begin{lemma}
  \label{lem:hat_star}
  Let $M\in(S^{k \times k})^{\Gamma^* \times \Gamma^*}$ be a reset pushdown matrix. Then,
  \[\widehat {M^*}={\what M}^*\mpunkt\]
\end{lemma}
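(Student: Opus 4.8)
The plan is to recognise $\what{(\cdot)}$ as an isomorphism of complete semirings and to exploit that in any complete semiring the star is the intrinsic operation $a^* = \sum_{j\geq 0}a^j$; any such isomorphism then automatically commutes with $^*$, which is exactly the asserted identity $\widehat{M^*}={\what M}^*$.

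First I would fix the two ambient semirings. Since $(S,V)$ is complete, $S$ is a complete semiring, hence so are the matrix semirings $S^{k\times k}$ and $S^{\Gamma^*\times\Gamma^*}$; consequently both $(S^{k\times k})^{\Gamma^*\times\Gamma^*}$ and $(S^{\Gamma^*\times\Gamma^*})^{k\times k}$ are complete semirings, matrices over a complete semiring again forming a complete semiring via the entrywise infinite sums. The map $\what{(\cdot)}$ is the reindexing bijection determined by $(M_{\pi,\pi'})_{v,v'}=(\what M_{v,v'})_{\pi,\pi'}$; it merely swaps the outer pushdown indices $\pi,\pi'\in\Gamma^*$ with the inner state indices $v,v'\in\{1,\dots,k\}$ and leaves the underlying $S$-entries untouched. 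In particular it preserves arbitrary sums, because addition is computed entrywise over the same family of $S$-entries on both sides, and it sends the identity matrix to the identity matrix (both units are the diagonal $1$'s in the shared index $(\pi,v)$).

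The substance of the argument is that $\what{(\cdot)}$ preserves products. Expanding both sides entrywise and substituting the defining relation $(\what M_{v,v''})_{\pi,\pi''}=(M_{\pi,\pi''})_{v,v''}$, one gets
\[
((MN)_{\pi,\pi'})_{v,v'} = \sum_{\pi''\in\Gamma^*}\sum_{1\leq v''\leq k}(M_{\pi,\pi''})_{v,v''}(N_{\pi'',\pi'})_{v'',v'}
\]
on the one hand, and
\[
((\what M\,\what N)_{v,v'})_{\pi,\pi'} = \sum_{1\leq v''\leq k}\sum_{\pi''\in\Gamma^*}(M_{\pi,\pi''})_{v,v''}(N_{\pi'',\pi'})_{v'',v'}
\]
on the other. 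These two expressions differ only in the order of the double summation, so they coincide by the associativity and commutativity of the complete-monoid sum (conditions (i) and (ii) for complete semirings). Hence $\widehat{MN}={\what M}\,{\what N}$, and together with the unit preservation this shows that $\what{(\cdot)}$ is an isomorphism of complete semirings.

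Finally, since $\what{(\cdot)}$ preserves all finite products it preserves every power, $\widehat{M^j}={\what M}^j$ for all $j\geq 0$ by induction, and since it preserves countable sums it commutes with the star defined by $a^*=\sum_{j\geq 0}a^j$, giving
\[
\widehat{M^*}=\widehat{\textstyle\sum_{j\geq 0}M^j}=\sum_{j\geq 0}{\what M}^j={\what M}^*\mpunkt
\]
The only delicate point is the interchange of the two summation orders in the product computation; this is exactly where completeness of $S$ (and, for the well-definedness of the products over $\pi''$, the row-finiteness built into reset pushdown matrices) guarantees that all sums are defined and may be freely rearranged. Everything else is a routine bookkeeping of indices.
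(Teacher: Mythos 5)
Your proof is correct and follows essentially the same route as the paper's: both expand $M^*=\sum_{n\geq 0}M^n$ and commute the reindexing map $\what{(\cdot)}$ with infinite sums and with powers. The only difference is that you explicitly verify the multiplicativity $\widehat{MN}=\what M\,\what N$ by interchanging the double summation (a step the paper's proof leaves implicit when it passes from $((M^n)_{\pi,\pi'})_{v,v'}$ to $((\what M^n)_{v,v'})_{\pi,\pi'}$), which is a fair bit of added rigor rather than a different approach.
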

\begin{proof}
  For $1\leq v,v'\leq k$ and for $\pi,\pi'\in\Gamma^*$, we obtain
  \begin{align*}
    ((\widehat {M^*})_{v,v'})_{\pi,\pi'} &= ((M^*)_{\pi,\pi'})_{v,v'}\\
    &=\sum_{n\geq 0}((M^n)_{\pi,\pi'})_{v,v'}\\
    &=\sum_{n\geq 0}((\what M^n)_{v,v'})_{\pi,\pi'}\\
    &=((\what M^*)_{v,v'})_{\pi,\pi'}\mpunkt\qedhere
  \end{align*}
\end{proof}

Similarly, we need the above result for another operator.
\begin{lemma}
  \label{lem:hat_omega}
  Let $M\in(S^{k \times k})^{\Gamma^* \times \Gamma^*}$ be a reset pushdown matrix. Then, for $1\leq l\leq k$,
  \[\widehat {M^{\omega,l}}={\what M}^{\omega,l}\mpunkt\]
\end{lemma}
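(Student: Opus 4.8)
The plan is to prove the equality entrywise, showing that $((\widehat{M^{\omega,l}})_v)_\pi=(({\what M}^{\omega,l})_v)_\pi$ for every state $1\le v\le k$ and every $\pi\in\Gamma^*$, exactly in the spirit of the proof of Lemma~\ref{lem:hat_star}, but with the series $M^*=\sum_{n\ge0}M^n$ replaced by the $P_l$-sum characterization of the $\omega$-operation. For the left-hand side I would view $M$ as a single matrix over the pair $(S,V)$ whose index set is $\Gamma^*\times\{1,\dots,k\}$ and apply the $P_l$-characterization underlying~\eqref{eqn:pl_identity} (valid for every starting index $(\pi,v)$), obtaining
\[((M^{\omega,l})_\pi)_v=\sum_{\pi_1,\pi_2,\dots\in\Gamma^*}\ \sum_{(v_1,v_2,\dots)\in P_l}(M_{\pi,\pi_1})_{v,v_1}(M_{\pi_1,\pi_2})_{v_1,v_2}\cdots\mpunkt\]
By the defining property of the hat operation, which swaps the outer stack index and the inner state index just as $(\what M_{v,v'})_{\pi,\pi'}=(M_{\pi,\pi'})_{v,v'}$, the left-hand side of this display is precisely $((\widehat{M^{\omega,l}})_v)_\pi$, so it remains to match the same double sum with $(({\what M}^{\omega,l})_v)_\pi$.

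For the right-hand side I would first record that, writing $\hat S=S^{\Gamma^*\times\Gamma^*}$ and $\hat V=V^{\Gamma^*}$, the pair $(\hat S,\hat V)$ is again a complete semiring-semimodule pair (the matrix construction of \cite{MAT} with index set $\Gamma^*$, using completeness of $(S,V)$), so that $\what M\in\hat S^{k\times k}$ is a \emph{finite} matrix over $(\hat S,\hat V)$ and ${\what M}^{\omega,l}$ is well-defined. The same $P_l$-characterization (cf.\ \cite{triple-pair}), applied now to the finite matrix $\what M$ over $(\hat S,\hat V)$, gives
\[({\what M}^{\omega,l})_v=\sum_{(v_1,v_2,\dots)\in P_l}\ \what M_{v,v_1}\what M_{v_1,v_2}\what M_{v_2,v_3}\cdots\mkomma\]
an element of $\hat V=V^{\Gamma^*}$, where the product is the infinite product of $\hat S$-matrices. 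Evaluating the $\pi$-component by the entrywise definition of infinite products of matrices over $(S,V)$ turns each summand into $\sum_{\pi_1,\pi_2,\dots\in\Gamma^*}(\what M_{v,v_1})_{\pi,\pi_1}(\what M_{v_1,v_2})_{\pi_1,\pi_2}\cdots$, and substituting $(\what M_{v,v'})_{\pi,\pi'}=(M_{\pi,\pi'})_{v,v'}$ reproduces exactly the summands appearing in the display of the first paragraph.

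The only genuine work is then a Fubini-type interchange: on the right-hand side the outer sum ranges over state sequences $(v_1,v_2,\dots)\in P_l$ and the inner sum over stack sequences $(\pi_1,\pi_2,\dots)\in\Gamma^*$, whereas in the first display these two layers are nested in the opposite order. I would justify the reordering with the associativity and commutativity of the complete sum (axiom (ii) of complete monoids) together with the infinite distributivity law for infinite products (axiom (iii) of complete semiring-semimodule pairs); these are precisely the axioms guaranteeing that the combined sum over all pairs of sequences is independent of the nesting order. After the interchange the two double sums agree termwise, yielding $(({\what M}^{\omega,l})_v)_\pi=((M^{\omega,l})_\pi)_v=((\widehat{M^{\omega,l}})_v)_\pi$ for all $v$ and $\pi$, hence $\widehat{M^{\omega,l}}={\what M}^{\omega,l}$. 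I expect the main obstacle to be bookkeeping rather than any new computation: carefully confirming that the entrywise infinite-product description over $(\hat S,\hat V)$ is the one being used and that the $P_l$-summation identity transfers verbatim to the finite matrix $\what M$ over the larger pair, after which the index swap and the sum interchange close the argument.
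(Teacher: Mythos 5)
Your proposal is correct and follows essentially the same route as the paper's proof: an entrywise comparison using the index swap $(\what M_{v,v'})_{\pi,\pi'}=(M_{\pi,\pi'})_{v,v'}$ together with the $P_l$-sum characterization~\eqref{eqn:pl_identity} on both sides. The only difference is that you make explicit two points the paper leaves implicit, namely that $\what M^{\omega,l}$ is computed over the pair $(S^{\Gamma^*\times\Gamma^*},V^{\Gamma^*})$ and that the two layers of infinite summation (state sequences in $P_l$ versus stack sequences in $(\Gamma^*)^\omega$) may be interchanged by the completeness axioms, which is exactly the justification needed for the paper's final equality.
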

\begin{proof}
  For $1\leq v\leq k$ and for $\pi\in\Gamma^*$, we obtain
  \begin{align*}
    ((\widehat {M^{\omega,l}})_v)_\pi &= ((M^{\omega,l})_\pi)_v\\
    &=\sum_{\pi_1,\pi_2,\ldots\in \Gamma^*}\sum_{(v_1,v_2,\ldots)\in P_l}(M_{\pi,\pi_1})_{v,v_1}(M_{\pi_1,\pi_2})_{v_1,v_2}(M_{\pi_2,\pi_3})_{v_2,v_3}\cdots\\
    &=\sum_{\pi_1,\pi_2,\ldots\in \Gamma^*}\sum_{(v_1,v_2,\ldots)\in P_l}(\what M_{v,v_1})_{\pi,\pi_1}(\what M_{v_1,v_2})_{\pi_1,\pi_2}(\what M_{v_2,v_3})_{\pi_2,\pi_3}\cdots\\
    &=((\what M^{\omega,l})_v)_\pi\mpunkt\qedhere
  \end{align*}
\end{proof}

Let $M$ be a simple reset pushdown matrix induced by the Greibach normal form~\eqref{eqn:new_sys1}, \eqref{eqn:new_sys2}. We define some blocks of the matrix $\what M$ to make the following argumentation easier. We take the idea of the above-mentioned isomorphism and divide $\what M$ like
\begin{equation}\what M=
  \begin{pmatrix}
    \what M_{z,z} & \what M_{z,x} & 0\\
    \what M_{x,z} & \what M_{x,x} & \what M_{x,f}\\
    0 & 0 & 0
  \end{pmatrix}\mkomma\label{eqn:induced_blocks}
\end{equation}
where the respective blocks are defined as
\begin{alignat*}{4}
  \what M_{z,z}=\begin{pmatrix}
  \what M_{z_1,z_1} & \cdots & \what M_{z_1,z_n}\\
  \vdots & \ddots & \vdots\\
  \what M_{z_n,z_1} & \cdots & \what M_{z_n,z_n}
  \end{pmatrix}\mkomma\quad & 
  \what M_{z,x}=\begin{pmatrix}
  \what M_{z_1,x_1} & \cdots & \what M_{z_1,x_n}\\
  \vdots & \ddots & \vdots\\
  \what M_{z_n,x_1} & \cdots & \what M_{z_n,x_n}
  \end{pmatrix}\mkomma&&\\
  \what M_{x,z}=\begin{pmatrix}
  \what M_{x_1,z_1} & \cdots & \what M_{x_1,z_n}\\
  \vdots & \ddots & \vdots\\
  \what M_{x_n,z_1} & \cdots & \what M_{x_n,z_n}
  \end{pmatrix}\mkomma\quad &
  \what M_{x,x}=\begin{pmatrix}
  \what M_{x_1,x_1} & \cdots & \what M_{x_1,x_n}\\
  \vdots & \ddots & \vdots\\
  \what M_{x_n,x_1} & \cdots & \what M_{x_n,x_n}
  \end{pmatrix}\mkomma\quad &&
  \what M_{x,f}=\begin{pmatrix}
  \what M_{x_1,f}\\
  \vdots\\
  \what M_{x_n,f}
  \end{pmatrix}\mkomma
\end{alignat*}
and where each $\what M_{v,v'}\in S^{\Gamma^* \times \Gamma^*}$ for $1\leq v,v'\leq 2n+1$. For notational convenience, we also set
\begin{alignat*}{4}
  \what M_{z_i,x}=
  \begin{pmatrix}
    \what M_{z_i,x_1} & \cdots & \what M_{z_i,x_n}
  \end{pmatrix}\mkomma\qquad &&
  \what M_{x,z_i}=
  \begin{pmatrix}
    \what M_{x_1,z_i}\\
    \vdots\\
    \what M_{x_n,z_i}
  \end{pmatrix}\mpunkt
\end{alignat*}

Note that we have not defined the blocks $\what M_{z,f}$, $\what M_{f,z}$, $\what M_{f,x}$ and $\what M_{f,f}$ as they would all be zero by our construction for simple reset pushdown matrices induced by the Greibach normal form~\eqref{eqn:new_sys1}, \eqref{eqn:new_sys2}.

Analogously, let $M_{z,z}, M_{z,x}, M_{x,z}, M_{x,x},\allowbreak M_{x,f} \in (S^{(2n+1) \times (2n+1)})^{\Gamma^* \times \Gamma^*}$ be the isomorphic copy of $\what M_{z,z},\allowbreak \what M_{z,x}, \what M_{x,z}, \what M_{x,x},\allowbreak \what M_{x,f}$, respectively. Then, for $u,v\in\{x,z\}$ and for $\pi, \pi'\in\Gamma^*$, the matrix $(M_{u,v})_{\pi, \pi'}$ is $M_{\pi, \pi'}$ restricted to the variables $u_i,v_j$ (for $1\leq i,j\leq n$). Similarly, $M_{x,f}$ is $M$ restricted to variables $x_i,f$ (for $1\leq i\leq n$).
For instance, $(\what M_{x,x})^*$ and equally $(M_{x,x})^*$ consider only paths passing through states $x_i$ and no paths through $z_i$ or $f$ (for $1\leq i\leq n$). Their only difference is the order of indexes.

The following theorem computes the behavior of induced simple $\omega$-reset pushdown automata.

\begin{theorem}
  \label{thm:induced_omega}
  Let $M$ be a simple reset pushdown matrix induced by the Greibach normal form~\eqref{eqn:new_sys1}, \eqref{eqn:new_sys2}. Then, for all $1\leq i\leq n$ and $0\leq l\leq n$,
  \[\bigl((M^{\omega,l})_\epsilon\bigr)_{x_i}=\bigl((M^{\omega,l})_\epsilon\bigr)_f=0\mkomma\]
  and
  \[\bigl((M^{\omega,l})_\epsilon\bigr)_{z_i}=\Bigl(\bigl(\bigl(M_{z,z}+M_{z,x} (M^*)_{x,x} M_{x,z}\bigr)^{\omega,l}\bigr)_\epsilon\Bigr)_i\mpunkt\]
\end{theorem}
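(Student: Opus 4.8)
The plan is to transport the entire computation to the isomorphic matrix $\what M\in(S^{\Gamma^*\times\Gamma^*})^{(2n+1)\times(2n+1)}$, using Lemmas~\ref{lem:hat_star} and~\ref{lem:hat_omega} to pass freely between $M$ and $\what M$ and between $(M^{\omega,l})_\epsilon$ and the $\epsilon$-components of $\what M^{\omega,l}$. Since $S^{\Gamma^*\times\Gamma^*}$ is a complete, hence Conway, semiring, I can apply Theorem~\ref{thm:omegak} to $\what M$, reading off all three assertions from one block computation. I would partition $\what M$ as in~\eqref{eqn:induced_blocks} into the $z$-block (the first $n$ states $z_1,\dots,z_n$) and the $(x,f)$-block (the remaining $n+1$ states), i.e.
\[
a = \what M_{z,z},\quad b = \begin{pmatrix}\what M_{z,x} & 0\end{pmatrix},\quad c = \begin{pmatrix}\what M_{x,z}\\ 0\end{pmatrix},\quad d = \begin{pmatrix}\what M_{x,x} & \what M_{x,f}\\ 0 & 0\end{pmatrix}.
\]
Because the $l$ repeated states $z_1,\dots,z_l$ all lie inside the $z$-block and $l\le n\le 2n+1$, Theorem~\ref{thm:omegak} (with $t=l$ and first block of dimension $n$) yields
\[
\what M^{\omega,l} = \begin{pmatrix}(a+bd^*c)^{\omega,l}\\ d^*c\,(a+bd^*c)^{\omega,l}\end{pmatrix}.
\]

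Next I would evaluate the blocks. Since $d$ is block upper-triangular with zero bottom row, the star formula~\eqref{matrix_star_1} gives
\[
d^* = \begin{pmatrix}(\what M_{x,x})^* & (\what M_{x,x})^*\what M_{x,f}\\ 0 & 1\end{pmatrix},
\qquad
d^*c = \begin{pmatrix}(\what M_{x,x})^*\what M_{x,z}\\ 0\end{pmatrix},
\]
so that $bd^*c=\what M_{z,x}(\what M_{x,x})^*\what M_{x,z}$. Hence the first $n$ rows of $\what M^{\omega,l}$ equal $\bigl(\what M_{z,z}+\what M_{z,x}(\what M_{x,x})^*\what M_{x,z}\bigr)^{\omega,l}$; translating back with Lemma~\ref{lem:hat_star} (under which $(\what M_{x,x})^*$ is the $x$-path star written $(M^*)_{x,x}$ in the statement) and taking the $\epsilon$-component with Lemma~\ref{lem:hat_omega} produces exactly the claimed formula for $((M^{\omega,l})_\epsilon)_{z_i}$. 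Moreover the $f$-row of $d^*c$ is identically $0$, so the $f$-component of $\what M^{\omega,l}$ vanishes, giving $((M^{\omega,l})_\epsilon)_f=0$ with no further work.

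The remaining assertion $((M^{\omega,l})_\epsilon)_{x_i}=0$ is the step I expect to be the main obstacle, since it is the only one requiring a genuine stack-discipline argument rather than pure block algebra: the $x_i$-rows of $\what M^{\omega,l}$ equal $(\what M_{x,x})^*\what M_{x,z}(a+bd^*c)^{\omega,l}$, which is not formally zero, and one must exploit that only the empty-stack component is evaluated. The key observation is that, by construction, the $x$-states push and pop only the symbols $X_1,\dots,X_n$, whence $((\what M_{x,x})^*)_{\epsilon,\pi}\neq 0$ forces $\pi\in\{X_1,\dots,X_n\}^*$; on the other hand the nonzero blocks of $\what M_{x,z}$ are the pop-moves $(M_{Z_k,\epsilon})_{x_i,z_k}$, which require some $Z_k$ on top of the stack. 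Therefore the entire $\epsilon$-row of the product $(\what M_{x,x})^*\what M_{x,z}$ is $0$, its $\epsilon$-component annihilates $(a+bd^*c)^{\omega,l}$, and $((M^{\omega,l})_\epsilon)_{x_i}=0$ follows. Intuitively this records that a run begun in an $x$-state with empty stack can never uncover a $Z_k$, hence never enters a $z$-state, and so never visits a repeated state; I would present exactly this stack invariant as the lemma carrying the argument.
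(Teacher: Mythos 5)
Your proposal is correct and follows essentially the same route as the paper's own proof: the same partition of $\what M$ into the $z$-block and the $(x,f)$-block, the same appeal to Theorem~\ref{thm:omegak} together with Lemmas~\ref{lem:hat_star} and~\ref{lem:hat_omega}, the same block computation of $d^*c$ and $b d^* c$, and the same stack-discipline argument (that $(M_{x,x})^*$ started on the empty stack never exposes a symbol $Z_k$, so $((M_{x,x})^*M_{x,z})_{\epsilon,\pi}=0$) to kill the $x_i$-components. The paper's proof carries out exactly the steps you outline, including singling out that last point as the one place where the construction, rather than pure block algebra, is needed.
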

\begin{proof}
  For the matrix $M$, we have, by above notation~\eqref{eqn:induced_blocks},
  \begin{align*}
    \what M &=
    \begin{pmatrix}
      \what M_{z,z} & \what M_{z,x} & 0\\
      \what M_{x,z} & \what M_{x,x} & \what M_{x,f}\\
      0 & 0 & 0
    \end{pmatrix} =\left(
    \begin{array}{c|c}
      \what M_{z,z} & \begin{matrix}
        \what M_{z,x} & 0
      \end{matrix}\\
      \hline\\[-1.5\medskipamount]
      \begin{matrix}
        \what M_{x,z} \\ 0
      \end{matrix} &
      \begin{matrix}
        \what M_{x,x} & \what M_{x,f}\\
        0 & 0
      \end{matrix}
    \end{array}
    \right)\mpunkt
  \end{align*}
Thus, by Theorem~\ref{thm:omegak} and Lemma~\ref{lem:hat_omega}, we obtain
\[M^{\omega,l} = \begin{pmatrix}
      \alpha^{\omega,l}\\
      \begin{pmatrix}
        M_{x,x} & M_{x,f}\\
        0 & 0
      \end{pmatrix}^*
      \begin{pmatrix}
        M_{x,z}\\
        0
      \end{pmatrix}
      \alpha^{\omega,l}
    \end{pmatrix}\mkomma\]
where
  \begin{align}
    \alpha &= M_{z,z}+
    \begin{pmatrix}
      M_{z,x} & 0
    \end{pmatrix}
    \begin{pmatrix}
      M_{x,x} & M_{x,f}\notag\\
      0 & 0
    \end{pmatrix}^*
    \begin{pmatrix}
      M_{x,z}\\
      0
    \end{pmatrix}\notag\\
    &= M_{z,z}+
    \begin{pmatrix}
      M_{z,x} & 0
    \end{pmatrix}
    \begin{pmatrix}
      (M_{x,x})^* & (M_{x,x})^* M_{x,f}\\
      0 & 1
    \end{pmatrix}
    \begin{pmatrix}
      M_{x,z}\\
      0
    \end{pmatrix}\notag\\
    &= M_{z,z}+
    \begin{pmatrix}
      M_{z,x} (M_{x,x})^* & M_{z,x}(M_{x,x})^* M_{x,f}
    \end{pmatrix}
    \begin{pmatrix}
      M_{x,z}\\
      0
    \end{pmatrix}\notag\\
    &= M_{z,z}+M_{z,x} (M_{x,x})^* M_{x,z}\mpunkt\label{eqn:star_induced_matrix}
  \end{align}

  Now, we continue with the term from before and get
  \begin{align*}
    M^{\omega,l}
    &= \begin{pmatrix}
      \alpha^{\omega,l}\\
      \begin{pmatrix}
        M_{x,x} & M_{x,f}\\
        0 & 0
      \end{pmatrix}^*
      \begin{pmatrix}
        M_{x,z}\\
        0
      \end{pmatrix}
      \alpha^{\omega,l}\\
    \end{pmatrix}\\
    &= \begin{pmatrix}
      (M_{z,z}+M_{z,x} (M_{x,x})^* M_{x,z})^{\omega,l}\\
      \begin{pmatrix}
        M_{x,x} & M_{x,f}\\
        0 & 0
      \end{pmatrix}^*
      \begin{pmatrix}
        M_{x,z}\\
        0
      \end{pmatrix}
      (M_{z,z}+M_{z,x} (M_{x,x})^* M_{x,z})^{\omega,l}
    \end{pmatrix}\\
    &= \begin{pmatrix}
      (M_{z,z}+M_{z,x} (M_{x,x})^* M_{x,z})^{\omega,l}\\
      \begin{pmatrix}
        (M_{x,x})^* & (M_{x,x})^*M_{x,f}\\
        0 & 1
      \end{pmatrix}
      \begin{pmatrix}
        M_{x,z}\\
        0
      \end{pmatrix}
      (M_{z,z}+M_{z,x} (M_{x,x})^* M_{x,z})^{\omega,l}
    \end{pmatrix}\\
    &= \begin{pmatrix}
      (M_{z,z}+M_{z,x} (M_{x,x})^* M_{x,z})^{\omega,l}\\
      \begin{pmatrix}
        (M_{x,x})^*M_{x,z}\\
        0
      \end{pmatrix}
      (M_{z,z}+M_{z,x} (M_{x,x})^* M_{x,z})^{\omega,l}
    \end{pmatrix}\\
    &= \begin{pmatrix}
      \big(M_{z,z}+M_{z,x} (M_{x,x})^* M_{x,z}\big)^{\omega,l}\\
      \big((M_{x,x})^*M_{x,z}\big) \big(M_{z,z}+M_{z,x} (M_{x,x})^* M_{x,z}\big)^{\omega,l}\\
      0
    \end{pmatrix}\mpunkt
  \end{align*}

  Then, we start the run of the automaton with an empty stack and get
  \begin{align*}
    (M^{\omega,l})_\epsilon
    &= \begin{pmatrix}
      \big(M_{z,z}+M_{z,x} (M_{x,x})^* M_{x,z}\big)^{\omega,l}\\
      \big((M_{x,x})^*M_{x,z}\big) \big(M_{z,z}+M_{z,x} (M_{x,x})^* M_{x,z}\big)^{\omega,l}\\
      0
    \end{pmatrix}_\epsilon\\
    &= \begin{pmatrix}
      \bigl(\bigl(M_{z,z}+M_{z,x} (M_{x,x})^* M_{x,z}\bigr)^{\omega,l}\bigr)_\epsilon\\
      \Big(\big((M_{x,x})^*M_{x,z}\big) \big(M_{z,z}+M_{z,x} (M_{x,x})^* M_{x,z}\big)^{\omega,l}\Big)_\epsilon\\
      0
    \end{pmatrix}\allowdisplaybreaks\\
    &= \begin{pmatrix}
      \bigl(\bigl(M_{z,z}+M_{z,x} (M_{x,x})^* M_{x,z}\bigr)^{\omega,l}\bigr)_\epsilon\\
      \sum_{\pi\in\Gamma^*}\big((M_{x,x})^*M_{x,z}\big)_{\epsilon,\pi} \bigl(\bigl(M_{z,z}+M_{z,x} (M_{x,x})^* M_{x,z}\bigr)^{\omega,l}\bigr)_\pi\\
      0
    \end{pmatrix}\\
    &\stackrel{4}{=} \begin{pmatrix}
      \bigl(\bigl(M_{z,z}+M_{z,x} (M_{x,x})^* M_{x,z}\bigr)^{\omega,l}\bigr)_\epsilon\\
      \sum_{\pi\in\Gamma^*}0\; \bigl(\bigl(M_{z,z}+M_{z,x} (M_{x,x})^* M_{x,z}\bigr)^{\omega,l}\bigr)_\pi\\
      0
    \end{pmatrix}\\
    &= \begin{pmatrix}
      \bigr(\bigr(M_{z,z}+M_{z,x} (M_{x,x})^* M_{x,z}\bigr)^{\omega,l}\bigr)_\epsilon\\
      0\\
      0
    \end{pmatrix}
  \end{align*}
  where the fourth equality uses the fact that $((M_{x,x})^*M_{x,z})_{\epsilon,\pi}=0$, which is because $(M_{x,z})_{\pi,\pi'}=0$ for all $\pi\neq Z_k\pi''$ ($1\leq k\leq n$ and $\pi''\in\Gamma^*$) and at the same time, $((M_{x,x})^*)_{\epsilon,Z_k\pi''}=0$ because only $(M_{z,x})_{\epsilon,Z_k}\neq 0$ by construction.

  The vector $(M^{\omega,l})_\epsilon$ is indexed by $z_1,\dots,z_n,x_1,\dots,x_n,f$, thus completing the proof.
\end{proof}

We want to apply the results from Section~\ref{sec:simple_reset_pushdown_automata}.
The following three lemmas investigate the star operation applied to simple reset pushdown matrices $M$ induced by the Greibach normal form~\eqref{eqn:new_sys1}, \eqref{eqn:new_sys2}.
The lemmas state that in a computation $(M^*)_{\epsilon,\epsilon}$, the new states $z_k$ are never reached when starting in a state $x_i$ and therefore, these computations are equivalent to the computations $(M'^*)_{\epsilon,\epsilon}$ for $M'$ being induced by the Greibach normal form~~\eqref{eqn:2}, i.e., for $M'$ built by the old construction.

\begin{lemma}
  \label{lem:induced_star_x}
  Let $M$ be a simple reset pushdown matrix induced by the Greibach normal form~\eqref{eqn:new_sys1}, \eqref{eqn:new_sys2}. Then, for all $1\leq i,k\leq n$,
  \[((M^*)_{\epsilon,\epsilon})_{x_k,x_i}=(((M_{x,x})^*)_{\epsilon,\epsilon})_{x_k,x_i}\mpunkt\]
\end{lemma}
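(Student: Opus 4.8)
The plan is to establish the equality by a run-level analysis, matching the runs encoded by $(M^*)_{\epsilon,\epsilon}$ with those encoded by $((M_{x,x})^*)_{\epsilon,\epsilon}$. Expanding $M^*=\sum_{p\ge 0}M^p$, the entry $((M^*)_{\epsilon,\epsilon})_{x_k,x_i}$ is a sum over all runs $(x_k,\epsilon)=(v_0,\pi_0),(v_1,\pi_1),\dots,(v_p,\pi_p)=(x_i,\epsilon)$, each weighted by the product $\prod_{t}(M_{\pi_{t-1},\pi_t})_{v_{t-1},v_t}$. The crucial claim is that every such run of nonzero weight visits only states among $x_1,\dots,x_n$. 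To prove this I would track the stack contents and show the invariant: as long as a run starting in $(x_k,\epsilon)$ has visited only $x$-states, its current stack lies in $\{X_1,\dots,X_n\}^*$ and contains no symbol $Z_m$. This follows by induction on the run length from the construction of $M$: from an $x$-state the only push block is $(M_{\epsilon,X_k})_{x_i,x_j}$, which pushes $X_k$, since $(M_{\epsilon,Z_k})$ has nonzero entries only out of $z$-states; the internal and pop blocks never introduce a $Z$-symbol.

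Given the invariant, the only block of $M$ that leaves the $x$-states for a $z$-state is $(M_{Z_k,\epsilon})_{x_i,z_k}$, which requires $Z_k$ on top of the stack and is therefore never enabled along such a run. The only remaining way to leave the $x$-states is a transition into the state $f$; but $f$ is a sink, as the bottom block-row of $\what M$ in~\eqref{eqn:induced_blocks} is zero, so a run reaching $f$ cannot return to $x_i$ and hence does not contribute to the $(x_k,x_i)$-entry. Consequently every nonzero run from $(x_k,\epsilon)$ to $(x_i,\epsilon)$ stays within $x$-states and uses exactly the blocks retained in $M_{x,x}$; conversely, every run counted by $(((M_{x,x})^*)_{\epsilon,\epsilon})_{x_k,x_i}$ is by definition a run of $M$ through $x$-states, hence also counted on the left. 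Matching these two identical sets of runs weight-for-weight, which is legitimate since completeness of $S$ lets us split both $M^*$ and $(M_{x,x})^*$ into their defining sums term by term, yields the asserted equality.

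The main obstacle is pinning down the stack invariant cleanly and checking that it is preserved by every admissible transition out of an $x$-state; once this is settled the remainder is bookkeeping. A secondary subtlety worth addressing explicitly is that $(M_{x,x})^*$ formally still ranges over the whole pushdown alphabet $\Gamma=\{X_1,\dots,X_n,Z_1,\dots,Z_n\}$: I would note that, because no $Z$-symbol is ever pushed from an $x$-state, the runs of $M_{x,x}$ starting with the empty stack likewise only ever see $X$-symbols, so passing to the $x$-restriction neither enlarges nor shrinks the set of reachable stacks, and the two sets of runs genuinely coincide.
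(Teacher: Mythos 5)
Your proposal is correct and follows essentially the same route as the paper: the paper's proof is exactly your run-level argument written out as a chain of summation identities, expanding $(M^*)_{\epsilon,\epsilon}$ over stack-content sequences $\pi_1,\pi_2,\dots$ and intermediate states, and then restricting to stacks in $\{X_1,\dots,X_n\}^*$ and to $x$-states using the same structural facts you invoke (only $X$-symbols are pushed from $x$-rows, $z$-states are reachable only by popping a $Z$-symbol, and $f$ is a sink). Your explicit treatment of the secondary subtlety---that $(M_{x,x})^*$ formally ranges over all of $\Gamma^*$ but its runs from the empty stack never see $Z$-symbols---is implicit in the paper's final equality, so nothing is missing.
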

\begin{proof}
  Let $\Delta=\{X_1,\ldots,X_n\}$. We have
  \begin{align*}
    ((M^*)_{\epsilon,\epsilon})_{x_k,x_i} &= \sum_{t\geq 0}((M^t)_{\epsilon,\epsilon})_{x_k,x_i}\\
    &= \sum_{t\geq 0}\sum_{\pi_1,\ldots,\pi_{t-1}\in\Gamma^*} \Bigl(M_{\epsilon,\pi_1}M_{\pi_1,\pi_2}\cdots M_{\pi_{t-1},\epsilon}\Bigr)_{x_k,x_i}\\
    &= \sum_{t\geq 0}\sum_{\substack{\pi_1\in\Delta^*\\ \pi_2,\ldots,\pi_{t-1}\in\Gamma^*}}\sum_{1\leq j_1\leq n} (M_{\epsilon,\pi_1})_{x_k,x_{j_1}}\Bigl(M_{\pi_1,\pi_2}\cdots M_{\pi_{t-1},\epsilon}\Bigr)_{x_k,x_i}\\
    &= \sum_{t\geq 0}\sum_{\pi_1,\ldots,\pi_{t-1}\in\Delta^*}\sum_{1\leq j_1, \ldots, j_{t-1}\leq n} (M_{\epsilon,\pi_1})_{x_k,x_{j_1}}(M_{\pi_1,\pi_2})_{x_{j_1},x_{j_2}}\cdots (M_{\pi_{t-1},\epsilon})_{x_{j_{t-1}},x_i}\\
    &= \bigl(\bigl(\sum_{t\geq 0} (M_{x,x})^t\bigr){}_{\epsilon,\epsilon}\bigr)_{x_k,x_i} = (((M_{x,x})^*)_{\epsilon,\epsilon})_{x_k,x_i}\mkomma
  \end{align*}
  where the third equality (and similarly the fourth equality) is by definition of induced pushdown matrices;
the blocks $(M_{\epsilon,X_k})_{x_i,x_j}$, $(M_{X_k,X_k})_{x_i,x_j}$ and $(M_{\epsilon,\epsilon})_{x_i,x_j}$ are the only non-null blocks that describe a step in the matrix starting from a state $x_i$ and having $\epsilon$ or $X_k$ as the topmost stack symbol.
\end{proof}

\begin{lemma}
  \label{lem:induced_star_x_extra}
  Let $M$ be a simple reset pushdown matrix induced by the Greibach normal form~\eqref{eqn:new_sys1}, \eqref{eqn:new_sys2}. Then, we have
  \[((M_{x,x}+M_{x,z}(M_{z,z})^*M_{z,x})^*)_{\epsilon,\epsilon} = ((M_{x,x})^*)_{\epsilon,\epsilon}\mpunkt\]
\end{lemma}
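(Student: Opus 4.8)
The plan is to mirror the power-series argument of Lemma~\ref{lem:induced_star_x}: expand both stars as $\sum_{t\geq 0}(\cdot)^t$ and show that, at the $(\epsilon,\epsilon)$ block, every summand that actually uses the correction term $E := M_{x,z}(M_{z,z})^*M_{z,x}$ contributes $0$. Writing $N = M_{x,x}+E$, each power $N^t$ expands (using completeness, i.e.\ distributivity over the infinite sums) into a sum of products $A_1\cdots A_t$ with $A_s\in\{M_{x,x},E\}$, so it suffices to prove that any such product containing at least one factor $E$ has zero $(\epsilon,\epsilon)$-entry; the remaining all-$M_{x,x}$ products then reassemble into $((M_{x,x})^t)_{\epsilon,\epsilon}$, and summing over $t$ gives the claim.

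I would isolate two facts about the induced matrix. First, by construction the only block effecting an $x\to z$ step is $(M_{Z_k,\epsilon})_{x_i,z_k}$, which \emph{pops} $Z_k$; hence $(M_{x,z})_{\pi,\pi'}=0$ unless $\pi=Z_k\pi''$ for some $1\leq k\leq n$. Consequently $(E)_{\pi,\pi'}=0$ whenever the topmost symbol of $\pi$ is not some $Z_k$, and in particular for every $\pi\in\Delta^*$ with $\Delta=\{X_1,\ldots,X_n\}$. Second, exactly as in the proof of Lemma~\ref{lem:induced_star_x}, a computation starting on the empty stack and using only $M_{x,x}$ can push, pop, or ignore symbols from $\Delta$ alone, so $((M_{x,x})^s)_{\epsilon,\pi}=0$ unless $\pi\in\Delta^*$; the internal blocks that would require a $Z_k$ on top, such as $(M_{Z_k,Z_k})_{x_i,x_j}$, can never fire under this invariant.

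Combining these, I would take a product $A_1\cdots A_t$ with $A_s$ the \emph{first} factor equal to $E$, so that $A_1=\cdots=A_{s-1}=M_{x,x}$. In the block expansion of its $(\epsilon,\epsilon)$-entry the factor $((M_{x,x})^{s-1})_{\epsilon,\pi_{s-1}}$ is nonzero only for $\pi_{s-1}\in\Delta^*$, and for such $\pi_{s-1}$ the immediately following factor $(E)_{\pi_{s-1},\pi_s}$ vanishes by the first fact. Thus the whole product contributes $0$ at $(\epsilon,\epsilon)$, leaving $(N^t)_{\epsilon,\epsilon}=((M_{x,x})^t)_{\epsilon,\epsilon}$, and summing over $t\geq 0$ yields $((M_{x,x}+M_{x,z}(M_{z,z})^*M_{z,x})^*)_{\epsilon,\epsilon}=((M_{x,x})^*)_{\epsilon,\epsilon}$.

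The only genuinely delicate point is the bookkeeping of the two stack-shape invariants — that $E$ can fire only with a $Z_k$ on top, while $M_{x,x}$-runs from the empty stack stay within $\Delta^*$ — both of which are immediate from the induced-matrix definition but must be stated carefully enough to justify selecting the \emph{first} $E$-factor and then reading off the blocking equality $(M_{x,z})_{\pi,\cdot}=0$. Everything else is the same routine expansion already used for Lemma~\ref{lem:induced_star_x}.
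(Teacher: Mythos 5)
Your proof is correct and follows essentially the same route as the paper's: both expand the star into powers and annihilate every summand involving the correction term $M_{x,z}(M_{z,z})^*M_{z,x}$ by combining the two facts that this term can only fire with some $Z_k$ on top of the stack, while $M_{x,x}$-runs from the empty stack keep the stack inside $\Delta^*=\{X_1,\ldots,X_n\}^*$. The only difference is bookkeeping — the paper eliminates the correction term factor-by-factor from the left, whereas you expand binomially and isolate the first $E$-factor — but the same two invariants do all the work in either formulation.
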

\begin{proof}
  Let $\Delta=\{X_1,\ldots,X_n\}$. In some sense similar to the proof of Lemma~\ref{lem:induced_star_x}, we have
  \begin{align*}
    & ((M_{x,x}+M_{x,z}(M_{z,z})^*M_{z,x})^*)_{\epsilon,\epsilon}\\
    &= \Bigl(\sum_{t\geq 0}(M_{x,x}+M_{x,z}(M_{z,z})^*M_{z,x})^t\Bigr)_{\epsilon,\epsilon}\\
    &= \sum_{t\geq 0}\sum_{\pi_1,\ldots,\pi_{t-1}\in\Gamma^*}(M_{x,x}+M_{x,z}(M_{z,z})^*M_{z,x})_{\epsilon,\pi_1}\cdots (M_{x,x}+M_{x,z}(M_{z,z})^*M_{z,x})_{\pi_{t-1},\epsilon}\allowdisplaybreaks\\
    &= \sum_{t\geq 0}\sum_{\pi_1,\ldots,\pi_{t-1}\in\Gamma^*}\Bigl((M_{x,x})_{\epsilon,\pi_1}+\bigl(\sum_{\pi,\pi'\in\Gamma^*} (M_{x,z})_{\epsilon,\pi}((M_{z,z})^*)_{\pi,\pi'}(M_{z,x})_{\pi,\pi_1}\bigr)\Bigr)\cdots (M_{x,x}+M_{x,z}(M_{z,z})^*M_{z,x})_{\pi_{t-1},\epsilon}\\
    &\stackrel{4}{=} \sum_{t\geq 0}\hspace{-.3cm}\sum_{\substack{\pi_1\in\Delta^*\\\pi_2,\ldots,\pi_{t-1}\in\Gamma^*}}\hspace{-.3cm}(M_{x,x})_{\epsilon,\pi_1}(M_{x,x}+M_{x,z}(M_{z,z})^*M_{z,x})_{\pi_1,\pi_2}\cdots (M_{x,x}+M_{x,z}(M_{z,z})^*M_{z,x})_{\pi_{t-1},\epsilon}\\
    &= \sum_{t\geq 0}\hspace{-.3cm}\sum_{\substack{\pi_1\in\Delta^*\\\pi_2,\ldots,\pi_{t-1}\in\Gamma^*}}\hspace{-.3cm}(M_{x,x})_{\epsilon,\pi_1} \Bigl((M_{x,x})_{\pi_1,\pi_2}+\bigl(\sum_{\pi,\pi'\in\Gamma^*} (M_{x,z})_{\pi_1,\pi}((M_{z,z})^*)_{\pi,\pi'}(M_{z,x})_{\pi,\pi_2}\bigr)\Bigr)\cdots (M_{x,x}+M_{x,z}(M_{z,z})^*M_{z,x})_{\pi_{t-1},\epsilon}\\
    &\stackrel{6}{=} \sum_{t\geq 0}\sum_{\pi_1,\ldots,\pi_{t-1}\in\Delta^*}\hspace{-.3cm} (M_{x,x})_{\epsilon,\pi_1} (M_{x,x})_{\pi_1,\pi_2}\cdots (M_{x,x})_{\pi_{t-1},\epsilon} = ((M_{x,x})^*)_{\epsilon,\epsilon}\mkomma
  \end{align*}
  where the fourth equality is because $(M_{x,z})_{\epsilon,\pi}=0$ for all $\pi\in\Gamma^*$. Similarly, for the sixth equality, we use the fact that $(M_{x,z})_{\pi_i,\pi}=0$ for all $\pi_i\in\Delta^*$ (and $\pi\in\Gamma^*$).
\end{proof}

\begin{lemma}
  \label{lem:induced_star}
  Let $M$ be a simple reset pushdown matrix induced by the Greibach normal form~\eqref{eqn:new_sys1}, \eqref{eqn:new_sys2} and $M'$ be induced by the Greibach normal form~~\eqref{eqn:2}.
  Then, for all $1\leq i\leq n$,
  \[((M^*)_{\epsilon,\epsilon})_{x_i,f} = ((M'^*)_{\epsilon,\epsilon})_{i,f}\mpunkt\]
\end{lemma}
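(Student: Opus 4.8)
The plan is to show that every computation contributing to $((M^*)_{\epsilon,\epsilon})_{x_i,f}$ stays among the states $x_1,\dots,x_n,f$, never visiting a new state $z_k$ nor placing a symbol $Z_k$ on the stack; such computations are then in weight-preserving correspondence with the computations of the old automaton $M'$. The structural reason is that, by the definition of the induced matrix, the only blocks leading from an $x$-state into a $z$-state are the pop blocks $(M_{Z_k,\epsilon})_{x_i,z_k}$, while the only blocks placing a $Z_k$ on the stack are the push blocks $(M_{\epsilon,Z_k})_{z_i,x_j}$, which start in a $z$-state. Hence a run beginning in $(x_i,\epsilon)$ can neither push a $Z_k$ (every $x$-to-$x$ step pushes at most some $X_k$) nor, consequently, ever reach a $z$-state; this is precisely the mechanism already exploited in Lemma~\ref{lem:induced_star_x}.

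Next I would peel off the last transition into $f$. Since the bottom row of $\what M$ vanishes, $f$ is a sink and occurs only as the terminal configuration; and since the run must terminate with an empty stack, the only block entering $f$ compatible with an empty final stack is the internal block $(M_{\epsilon,\epsilon})_{x_j,f}$ (the read-and-keep blocks $(M_{X_k,X_k})_{x_i,f}$ and $(M_{Z_k,Z_k})_{x_i,f}$ would leave a nonempty stack, and there is no pop block into $f$). Using $M^*=I+M^*M$ and discarding the vanishing terms, this gives
\[((M^*)_{\epsilon,\epsilon})_{x_i,f}=\sum_{1\leq j\leq n}((M^*)_{\epsilon,\epsilon})_{x_i,x_j}\,(M_{\epsilon,\epsilon})_{x_j,f}\mkomma\]
and Lemma~\ref{lem:induced_star_x} rewrites each factor $((M^*)_{\epsilon,\epsilon})_{x_i,x_j}$ as $(((M_{x,x})^*)_{\epsilon,\epsilon})_{x_i,x_j}$.

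Finally I would run the identical decomposition for $M'$, namely
\[((M'^*)_{\epsilon,\epsilon})_{i,f}=\sum_{1\leq j\leq n}((M'^*)_{\epsilon,\epsilon})_{i,j}\,(M'_{\epsilon,\epsilon})_{j,f}\mkomma\]
and match the two sums term by term. Under the identification of the state $x_i$ with $i$ and the stack symbol $X_k$ with $x_k$, the blocks of $M_{x,x}$ that can fire from an empty stack (those reading $\epsilon$ or some $X_k$) are exactly the defining blocks of $M'$, and $(M_{\epsilon,\epsilon})_{x_j,f}=(M'_{\epsilon,\epsilon})_{j,f}=\sum_{a\in\Sigma}(p_j,a)a$. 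A path expansion as in the proof of Lemma~\ref{lem:induced_star_x}, restricted to the sub-alphabet $\{X_1,\dots,X_n\}$, then yields $(((M_{x,x})^*)_{\epsilon,\epsilon})_{x_i,x_j}=((M'^*)_{\epsilon,\epsilon})_{i,j}$, so the two displayed sums coincide and the claim follows.

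The step I expect to be the main obstacle is this last term-by-term identification: one must verify that the $Z_k$-reading blocks sitting inside $M_{x,x}$ genuinely never contribute, i.e.\ that no $Z_k$ is ever on the stack during an $x$-to-$x$ run from $\epsilon$, so that $M_{x,x}$ effectively collapses to $M'$. This is, however, only a careful replay of the stack bookkeeping already carried out for Lemma~\ref{lem:induced_star_x}, so no essentially new difficulty arises.
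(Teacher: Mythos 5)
Your proposal is correct, and it reaches the conclusion by a genuinely different route than the paper. The paper's proof is pure block-matrix algebra: it writes $\what M'=\begin{pmatrix}\what M_{x,x} & \what M_{x,f}\\ 0 & 0\end{pmatrix}$, applies Lemma~\ref{lem:hat_star} and the block star formula to get $((M'^*)_{\epsilon,\epsilon})_{i,f}=\bigl(((M_{x,x})^*M_{x,f})_{\epsilon,\epsilon}\bigr)_i$, then computes $M^*$ from the decomposition of $\what M$ separating the $z$-states from the $x$-states together with $f$, which yields $((M^*)_{\epsilon,\epsilon})_{x_i,f}=\Bigl(\bigl((M_{x,x}+M_{x,z}(M_{z,z})^*M_{z,x})^*M_{x,f}\bigr)_{\epsilon,\epsilon}\Bigr)_i$, and finally collapses the Schur-type term using Lemma~\ref{lem:induced_star_x_extra} (the paper never invokes Lemma~\ref{lem:induced_star_x} here). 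You instead peel off the final transition into $f$ via $M^*=I+M^*M$ --- exactly the technique the paper reserves for Lemma~\ref{lem:newrun_without_f} --- reduce both sides to sums $\sum_j(\cdot)_{x_i,x_j}(M_{\epsilon,\epsilon})_{x_j,f}$, invoke Lemma~\ref{lem:induced_star_x}, and then match $(((M_{x,x})^*)_{\epsilon,\epsilon})_{x_i,x_j}$ with $((M'^*)_{\epsilon,\epsilon})_{i,j}$ by a path expansion over the sub-alphabet $\{X_1,\dots,X_n\}$. What each approach buys: the paper gets the relation between $M'$ and $M_{x,x}$ essentially for free from the block identity ``by construction,'' at the price of needing Lemma~\ref{lem:induced_star_x_extra}; your route avoids Lemma~\ref{lem:induced_star_x_extra} and the block-star computation of $M^*$ altogether, at the price of verifying the $(M_{x,x})^*$-versus-$M'^*$ identification by hand --- which, as you note, is a replay of the stack bookkeeping of Lemma~\ref{lem:induced_star_x} (no $Z_k$ is ever pushed from an $x$-state, so the $Z_k$-reading blocks of $M_{x,x}$ never fire from an empty stack, and $f$ is a sink in $M'$, so its paths between states $1,\dots,n$ correspond weight-preservingly to $M_{x,x}$-paths). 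The individual steps all check out: pop blocks never enter $f$, and among the blocks $(M_{\epsilon,\epsilon})_{v,f}$ only those with $v=x_j$ are nonzero, so your peeling identity is valid for both $M$ and $M'$.
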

\begin{proof}
  Note that by construction, we have
  \[\what M'=
  \begin{pmatrix}
    \what M_{x,x} & \what M_{x,f}\\
    0 & 0
  \end{pmatrix}\mpunkt\]

  By applying Lemma~\ref{lem:hat_star}, we infer
  \[M'^*=\begin{pmatrix}
  (M_{x,x})^* & (M_{x,x})^* M_{x,f}\\
  0 & 1
  \end{pmatrix}\mkomma\]
  and we get
  \begin{equation}\label{eqn:old_M}
    ((M'^*)_{\epsilon,\epsilon})_{i,f} = \bigl(((M_{x,x})^* M_{x,f})_{\epsilon,\epsilon}\bigr)_i\mpunkt
  \end{equation}
  
  At the same time, we have
  \begin{align*}
    \what M &= \begin{pmatrix}
      \what M_{z,z} & \what M_{z,x} & 0\\
      \what M_{x,z} & \what M_{x,x} & \what M_{x,f}\\
      0 & 0 & 0
    \end{pmatrix}\\
    &=\left(
    \begin{array}{c|c}
      \what M_{z,z} & \begin{matrix}
        \what M_{z,x} & 0
      \end{matrix}\\
      \hline\\[-1.5\medskipamount]
      \begin{matrix}
        \what M_{x,z} \\ 0
      \end{matrix} &
      \begin{matrix}
        \what M_{x,x} & \what M_{x,f}\\
        0 & 0
      \end{matrix}
    \end{array}
    \right)\mpunkt
  \end{align*}

  By Lemma~\ref{lem:hat_star}, we obtain
  \[M^*=\begin{pmatrix}
      \alpha^* & \alpha^*
      \begin{pmatrix}
        M_{z,x} & 0
      \end{pmatrix}\\
      \beta^* \begin{pmatrix}
        M_{x,z}\\ 0
      \end{pmatrix} & \beta^*
    \end{pmatrix}\mkomma\]
  with
  \begin{align*}
    \alpha &= M_{z,z}+
    \begin{pmatrix}
      M_{z,x} & 0
    \end{pmatrix}
    \begin{pmatrix}
      M_{x,x} & M_{x,f}\\
      0 & 0
    \end{pmatrix}^*
    \begin{pmatrix}
      M_{x,z}\\
      0
    \end{pmatrix}\\
    &= M_{z,z}+M_{z,x} (M_{x,x})^* M_{x,z}\mkomma
  \end{align*}
  by \eqref{eqn:star_induced_matrix} in the proof of Theorem~\ref{thm:induced_omega} and
  \begin{align}
    \beta^* &= \left(\begin{pmatrix}
      M_{x,x} & M_{x,f}\\
      0 & 0
    \end{pmatrix}+
    \begin{pmatrix}
      M_{x,z}\\ 0
    \end{pmatrix}
    (M_{z,z})^*
    \begin{pmatrix}
      M_{z,x} & 0
    \end{pmatrix}\right)^*\notag\\
    &= \left(\begin{pmatrix}
      M_{x,x} & M_{x,f}\\
      0 & 0
    \end{pmatrix}+
    \begin{pmatrix}
      M_{x,z}(M_{z,z})^*M_{z,x} & 0\\ 0 & 0
    \end{pmatrix}\right)^*\notag\\
    &= \begin{pmatrix}
      M_{x,x}+M_{x,z}(M_{z,z})^*M_{z,x} & M_{x,f}\\ 0 & 0
    \end{pmatrix}^*\notag\notag\\
    &= \begin{pmatrix}
      (M_{x,x}+M_{x,z}(M_{z,z})^*M_{z,x})^* & (M_{x,x}+M_{x,z}(M_{z,z})^*M_{z,x})^* M_{x,f}\\ 0 & 1
    \end{pmatrix}\mpunkt\label{eqn:star_induced_matrix_beta}
  \end{align}

  We deduce that
  \begin{align*}
    ((M^*)_{\epsilon,\epsilon})_{x_i,f}
    &= \bigl(\bigl((M_{x,x}+M_{x,z}(M_{z,z})^*M_{z,x})^* M_{x,f}\bigr)_{\epsilon,\epsilon}\bigr)_i\\
    &= \bigl(((M_{x,x}+M_{x,z}(M_{z,z})^*M_{z,x})^*)_{\epsilon,\epsilon} (M_{x,f})_{\epsilon,\epsilon}\bigr)_i\\
    &= \bigl(((M_{x,x})^*)_{\epsilon,\epsilon} (M_{x,f})_{\epsilon,\epsilon}\bigr)_i\\
    &= \bigl(((M_{x,x})^* M_{x,f})_{\epsilon,\epsilon}\bigr)_i\\
    &= ((M'^*)_{\epsilon,\epsilon})_{i,f}\mkomma
  \end{align*}
  where the third equality is by Lemma~\ref{lem:induced_star_x_extra} and the last equality is by~\eqref{eqn:old_M}. This concludes the proof.
\end{proof}

The following lemma investigates the final state $f$ in infinite paths. It states that a finite run of induced simple $\omega$-reset pushdown automata is equivalent to another path only through states $x$ and with symbol $Z_j$ initially on the pushdown tape and ending in state $z_j$ with an empty pushdown tape.

\begin{lemma}
  \label{lem:newrun_without_f}
  Let $M$ be a simple reset pushdown matrix induced by the Greibach normal form~\eqref{eqn:new_sys1}, \eqref{eqn:new_sys2}.
  Then, for all $1\leq j,k\leq n$,
  \[((M^*)_{\epsilon,\epsilon})_{x_k,f}=\bigl(((M_{x,x})^*)_{Z_j,Z_j}M_{Z_j,\epsilon}\bigr)_{x_k,z_j}\mpunkt\]
\end{lemma}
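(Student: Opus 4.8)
The plan is to reduce both sides of the asserted identity to the single common expression
\[\sum_{1\le i\le n}\bigl(((M_{x,x})^*)_{\epsilon,\epsilon}\bigr)_{x_k,x_i}\sum_{a\in\Sigma}(p_i,a)a\mkomma\]
deriving the left-hand side from the material already collected for the old construction $M'$, and the right-hand side by an explicit expansion that rests on a stack-shift invariance.

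For the left-hand side I would begin from Lemma~\ref{lem:induced_star}, combined with equation~\eqref{eqn:old_M} recorded in its proof, which together give
\[((M^*)_{\epsilon,\epsilon})_{x_k,f}=\bigl(((M_{x,x})^*M_{x,f})_{\epsilon,\epsilon}\bigr)_k=\sum_{\pi\in\Gamma^*}\bigl(((M_{x,x})^*)_{\epsilon,\pi}(M_{x,f})_{\pi,\epsilon}\bigr)_k\mpunkt\]
I would then use that every transition entering the sink $f$ leaves the stack untouched: by construction the only nonzero blocks reaching $f$ are $(M_{\epsilon,\epsilon})_{x_i,f}$, $(M_{X_k,X_k})_{x_i,f}$ and $(M_{Z_k,Z_k})_{x_i,f}$, each of which preserves the topmost symbol, so $(M_{x,f})_{\pi,\epsilon}=0$ for $\pi\neq\epsilon$. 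Only the summand $\pi=\epsilon$ survives, and hence
\[((M^*)_{\epsilon,\epsilon})_{x_k,f}=\sum_{1\le i\le n}\bigl(((M_{x,x})^*)_{\epsilon,\epsilon}\bigr)_{x_k,x_i}\,(M_{\epsilon,\epsilon})_{x_i,f}\mkomma\]
where $(M_{\epsilon,\epsilon})_{x_i,f}=\sum_{a\in\Sigma}(p_i,a)a$.

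For the right-hand side I would expand the matrix product and use that $((M_{x,x})^*)_{Z_j,Z_j}$ is supported on the $x$-states while $(M_{Z_j,\epsilon})_{v,z_j}\neq 0$ forces $v=x_i$, giving
\[\bigl(((M_{x,x})^*)_{Z_j,Z_j}M_{Z_j,\epsilon}\bigr)_{x_k,z_j}=\sum_{1\le i\le n}\bigl(((M_{x,x})^*)_{Z_j,Z_j}\bigr)_{x_k,x_i}\,(M_{Z_j,\epsilon})_{x_i,z_j}\mkomma\]
where crucially $(M_{Z_j,\epsilon})_{x_i,z_j}=\sum_{a\in\Sigma}(p_i,a)a$ coincides with the weight $(M_{\epsilon,\epsilon})_{x_i,f}$ appearing on the left. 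What remains, and what I expect to be the main obstacle, is the stack-shift identity
\[\bigl(((M_{x,x})^*)_{Z_j,Z_j}\bigr)_{x_k,x_i}=\bigl(((M_{x,x})^*)_{\epsilon,\epsilon}\bigr)_{x_k,x_i}\mpunkt\]
I would establish it by a path expansion in the spirit of the proof of Lemma~\ref{lem:induced_star_x}. Writing $\Delta=\{X_1,\dots,X_n\}$, the key structural facts are that within $M_{x,x}$ the bottom symbol $Z_j$ can never be popped (the only block removing $Z_j$ is $(M_{Z_j,\epsilon})_{x_i,z_j}$, which leaves the $x$-states) and that only $X$-symbols are ever pushed (pushing $Z_k$ is the block $(M_{\epsilon,Z_k})_{z_i,x_j}$, which starts from a $z$-state). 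Therefore every intermediate stack occurring in a term of $((M_{x,x})^t)_{Z_j,Z_j}$ has the form $\rho Z_j$ with $\rho\in\Delta^*$, and deleting the inert bottom symbol, $\rho Z_j\mapsto\rho$, sets up a weight-preserving bijection onto the computations summed in $((M_{x,x})^t)_{\epsilon,\epsilon}$: the simple reset pushdown identities $M_{p,p}=M_{\epsilon,\epsilon}$ and $M_{p,p_1 p}=M_{\epsilon,p_1}$ guarantee that keep- and push-steps carry the same weight irrespective of the $Z_j$ lying below, while pop-steps use the shift-invariant blocks $M_{X_m,\epsilon}$.

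Combining the two reductions, both the left- and right-hand sides equal $\sum_{1\le i\le n}\bigl(((M_{x,x})^*)_{\epsilon,\epsilon}\bigr)_{x_k,x_i}\sum_{a\in\Sigma}(p_i,a)a$, which proves the lemma. Everything apart from the stack-shift invariance is bookkeeping re-using \eqref{eqn:old_M} and the explicit form of the induced transition blocks; the shift invariance is the one genuinely new point and should be written out carefully, analogously to Lemma~\ref{lem:induced_star_x}.
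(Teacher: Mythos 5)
Your proposal is correct, and it rests on exactly the same three structural facts as the paper's proof: (a) every transition into the sink $f$ is stack-neutral and originates in an $x$-state, so only the $\pi=\epsilon$ summand survives when the last step is peeled off; (b) the prefix computation can be restricted from $M$ to $M_{x,x}$; and (c) by construction $(M_{\epsilon,\epsilon})_{x_i,f}=(M_{Z_j,\epsilon})_{x_i,z_j}=\sum_{a\in\Sigma}(p_i,a)a$, together with the stack-shift invariance $\bigl(((M_{x,x})^*)_{Z_j,Z_j}\bigr)_{x_k,x_i}=\bigl(((M_{x,x})^*)_{\epsilon,\epsilon}\bigr)_{x_k,x_i}$. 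The differences are in routing and emphasis. For step (b) the paper peels first, writing $((M^*)_{\epsilon,\epsilon})_{x_k,f}=((M^*M)_{\epsilon,\epsilon})_{x_k,f}$, and then restricts via Lemma~\ref{lem:induced_star_x}; you restrict first, invoking Lemma~\ref{lem:induced_star} together with \eqref{eqn:old_M} (a heavier dependency, since that lemma itself relies on Lemma~\ref{lem:induced_star_x_extra}), and then peel inside the restricted product $(M_{x,x})^*M_{x,f}$. Both orderings are legitimate because everything you cite precedes the lemma, so there is no circularity; the paper's chain is more self-contained, while yours re-uses more of the established machinery and presents the argument as a meet-in-the-middle reduction rather than a single chain of equalities. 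The more substantial difference is at the crux (c): the paper compresses the shift invariance into the phrase ``by construction and by the definition of pushdown matrices,'' whereas you prove it explicitly via the weight-preserving bijection $\rho Z_j\mapsto\rho$, checking that within $M_{x,x}$ the bottom symbol $Z_j$ is never popped (the only popping block $(M_{Z_j,\epsilon})_{x_i,z_j}$ exits to a $z$-state), that only $X$-symbols are pushed from $x$-states, and that the simplicity identities $M_{p,p}=M_{\epsilon,\epsilon}$ and $M_{p,p_1p}=M_{\epsilon,p_1}$ make corresponding steps carry equal weight. That is precisely the argument the paper leaves implicit, and spelling it out, in the style of Lemma~\ref{lem:induced_star_x}, is a genuine gain in rigor.
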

\begin{proof}
  The beginning of the proof is similar to the proof of Lemma~10 of~\cite{finite_simple}. We obtain
    \vspace{-.2cm}
    \begin{align*}
    ((M^*)_{\epsilon,\epsilon})_{x_k,f} &= ((M^+)_{\epsilon,\epsilon})_{x_k,f} = ((M^*M)_{\epsilon,\epsilon})_{x_k,f}\\
    &= \!\!\sum_{1\leq v_1\leq 2n+1}\!((M^*)_{\epsilon,\epsilon})_{x_k,v_1}(M_{\epsilon,\epsilon})_{v_1,f}+
    \sum_{1\leq v_1\leq 2n+1}\sum_{P\in\Gamma} ((M^*)_{\epsilon,P})_{x_k,v_1}(M_{P,\epsilon})_{v_1,f}\\
    &\stackrel{4}{=} \!\!\sum_{1\leq v_1\leq 2n+1}\!((M^*)_{\epsilon,\epsilon})_{x_k,v_1}(M_{\epsilon,\epsilon})_{v_1,f}\\
    &\stackrel{5}{=} \sum_{1\leq i\leq n}((M^*)_{\epsilon,\epsilon})_{x_k,x_i}(M_{\epsilon,\epsilon})_{x_i,f}\\
    &\stackrel{6}{=} \sum_{1\leq i\leq n}(((M_{x,x})^*)_{\epsilon,\epsilon})_{x_k,x_i}(M_{\epsilon,\epsilon})_{x_i,f}\\
    &\stackrel{7}{=} \sum_{1\leq i\leq n}(((M_{x,x})^*)_{Z_j,Z_j})_{x_k,x_i}(M_{Z_j,\epsilon})_{x_i,z_j} = (((M_{x,x})^*)_{Z_j,Z_j}M_{Z_j,\epsilon})_{x_k,z_j}\mkomma
  \end{align*}
  where the fourth equality is since $(M_{P,\epsilon})_{v_1,f}=0$ for all $1\leq v_1\leq 2n+1$ and $P\in\Gamma$ by our construction. In the fifth equality, we use the fact that $(M_{\epsilon,\epsilon})_{v_1,f}=0$ for $v_1\neq x_i$ ($1\leq i\leq n$). The sixth equality is by Lemma~\ref{lem:induced_star_x}. The seventh equality is also by construction and by the definition of pushdown matrices.
\end{proof}

We now discuss the behaviors of our constructed simple $\omega$-reset pushdown automata.
\begin{lemma}
  Let the simple $\omega$-reset pushdown automata $\mathfrak{A}_m^l=(2n+1,\Gamma,I_m,M,P,l)$, for $1\leq m\leq n$ and $0\leq l\leq n$, be induced by the Greibach normal form~\eqref{eqn:new_sys1}, \eqref{eqn:new_sys2}.
  We then have
  \[\|\mathfrak{A}_m^l\|=((M^*)_{\epsilon,\epsilon})_{x_m,f}+((M^{\omega,l})_\epsilon)_{z_m}\mpunkt\]
\end{lemma}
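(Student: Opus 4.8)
The plan is to expand the defining formula for the behavior,
\[\|\mathfrak{A}_m^l\|=I_m(M^*)_{\epsilon,\epsilon}P+I_m(M^{\omega,l})_\epsilon\mkomma\]
and to evaluate the two summands separately, reading off the supports of the vectors $I_m$ and $P$ and invoking the two structural results already available: Theorem~\ref{thm:induced_omega} for the infinite part and the block form of $M^*$ established in the proof of Lemma~\ref{lem:induced_star} for the finite part.

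First I would treat the finite summand $I_m(M^*)_{\epsilon,\epsilon}P$. Since $P_f=\epsilon$ and $P_j=0$ for $1\leq j\leq 2n$, while $(I_m)_{x_m}=(I_m)_{z_m}=\epsilon$ and all remaining entries of $I_m$ vanish, this product collapses to
\[((M^*)_{\epsilon,\epsilon})_{x_m,f}+((M^*)_{\epsilon,\epsilon})_{z_m,f}\mpunkt\]
The one point needing a genuine argument is that the second term is $0$. I would read this off the block decomposition of $M^*$ obtained inside the proof of Lemma~\ref{lem:induced_star}, in which the rows and columns are grouped as $z$-states and then $(x\text{-states},f)$. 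There the $(z,(x,f))$-coupling block is $\alpha^*\begin{pmatrix}M_{z,x} & 0\end{pmatrix}$, whose $f$-column is the zero block because $\what M_{z,f}=0$ by construction; hence $((M^*)_{\epsilon,\epsilon})_{z_m,f}=0$. (Equivalently, one can give a direct combinatorial argument: every transition into $f$ starts in some $x_i$ and leaves the stack unaltered, via the blocks $(M_{\epsilon,\epsilon})_{x_i,f}$, $(M_{X_k,X_k})_{x_i,f}$, $(M_{Z_k,Z_k})_{x_i,f}$; but a run from $z_m$ with empty stack can only reach the $x$-world by pushing some $Z_k$ through $(M_{\epsilon,Z_k})_{z_i,x_j}$, and the sole way back to an empty stack is to pop that $Z_k$ through $(M_{Z_k,\epsilon})_{x_i,z_k}$, which re-enters a $z$-state, so an $x$-state with empty stack is never reached from $z_m$.) This leaves $((M^*)_{\epsilon,\epsilon})_{x_m,f}$ as the finite contribution.

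Next I would treat the infinite summand $I_m(M^{\omega,l})_\epsilon$. By the same support of $I_m$ it equals
\[((M^{\omega,l})_\epsilon)_{x_m}+((M^{\omega,l})_\epsilon)_{z_m}\mpunkt\]
Here Theorem~\ref{thm:induced_omega} applies directly: it states $((M^{\omega,l})_\epsilon)_{x_i}=0$ for every $1\leq i\leq n$, so the first term drops out and the infinite contribution is exactly $((M^{\omega,l})_\epsilon)_{z_m}$.

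Adding the two contributions yields the claimed identity
\[\|\mathfrak{A}_m^l\|=((M^*)_{\epsilon,\epsilon})_{x_m,f}+((M^{\omega,l})_\epsilon)_{z_m}\mpunkt\]
Apart from the vanishing of the $(z_m,f)$-entry, the proof is pure bookkeeping about the supports of $I_m$ and $P$ followed by citations of Theorem~\ref{thm:induced_omega} and Lemma~\ref{lem:induced_star}. The only step with any content is that $z_m$ cannot reach $f$ with an empty stack, and I expect the cleanest write-up to dispatch it by quoting the block form of $M^*$ from Lemma~\ref{lem:induced_star} rather than repeating the stack-bookkeeping argument.
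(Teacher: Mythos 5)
Your skeleton coincides with the paper's proof: expand $\|\mathfrak{A}_m^l\|=I_m(M^*)_{\epsilon,\epsilon}P+I_m(M^{\omega,l})_\epsilon$ via the supports of $I_m$ and $P$, remove $((M^{\omega,l})_\epsilon)_{x_m}$ by Theorem~\ref{thm:induced_omega}, and reduce everything to the claim $((M^*)_{\epsilon,\epsilon})_{z_m,f}=0$. The gap is in your primary justification of that claim. The $(z,(x,f))$-block of $M^*$ is \emph{not} $\alpha^*\begin{pmatrix}M_{z,x} & 0\end{pmatrix}$: by the star identity \eqref{matrix_star_1} it is $\alpha^*\begin{pmatrix}M_{z,x} & 0\end{pmatrix}d^*$ with $d=\begin{pmatrix}M_{x,x} & M_{x,f}\\ 0 & 0\end{pmatrix}$, and the missing right factor $d^*=\begin{pmatrix}(M_{x,x})^* & (M_{x,x})^*M_{x,f}\\ 0 & 1\end{pmatrix}$ carries exactly the runs that continue after entering the $x$-world. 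Consequently the $f$-column of the coupling block is $\alpha^*M_{z,x}(M_{x,x})^*M_{x,f}$, which is \emph{not} the zero matrix: its $(\epsilon,Z_k)$-entry contains the (in general nonzero) term $(M_{z,x})_{\epsilon,Z_k}(M_{x,f})_{Z_k,Z_k}$, i.e., runs from $z_m$ that push $Z_k$ and enter $f$ with $Z_k$ still on the stack. So $\what M_{z,f}=0$ alone does not suffice: $f$ \emph{is} reachable from $z_m$, just never with empty stack, and the vanishing of the single entry $((M^*)_{\epsilon,\epsilon})_{z_m,f}$ is a stack phenomenon, not a state-graph phenomenon. This is precisely where the paper does real work: it computes the coupling block correctly as $(M_{z,z})^*\begin{pmatrix}M_{z,x} & 0\end{pmatrix}\bigl[\,\cdot\,\bigr]^*$ using \eqref{eqn:star_induced_matrix_beta} and then proves $\bigl((M_{x,x}+M_{x,z}(M_{z,z})^*M_{z,x})^*\bigr)_{Z_i,\epsilon}=0$ by an explicit computation over the stack indices. (A caveat: the display of $M^*$ inside Lemma~\ref{lem:induced_star} of the paper does show the block you quote, but that display is itself missing the factors $d^*$ and $(M_{z,z})^*$ in the off-diagonal blocks; it is harmless there because only the lower-right block $\beta^*$ is used, and the present lemma's proof recomputes the needed block correctly.)

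Your parenthetical combinatorial argument, by contrast, is sound and is essentially what the paper's matrix computation formalizes: starting from $(z_m,\epsilon)$, the only reachable configurations are $z$-states with empty stack, $x$-states with stack in $\{X_1,\ldots,X_n\}^*Z_k$, and $f$ with such a nonempty stack, since the sole passage $z\to x$ pushes some $Z_k$, the sole passage back to empty stack pops $Z_k$ into a $z$-state, and all transitions into the sink $f$ leave the stack unaltered; hence $f$ is never reached with empty stack. Had you made this argument (or its formalization in terms of matrix entries, as the paper does) the main proof, rather than the block-form shortcut you explicitly prefer, the proof would be complete.
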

\begin{proof}
  Let $1\leq m\leq n$ and $0\leq l\leq n$. We obtain
  \begin{align*}
  \|\mathfrak{A}_m^l\|&=I(M^*)_{\epsilon,\epsilon}P+I(M^{\omega,l})_\epsilon\\
  &=((M^*)_{\epsilon,\epsilon})_{x_m,f}+((M^*)_{\epsilon,\epsilon})_{z_m,f}+((M^{\omega,l})_\epsilon)_{x_m}+((M^{\omega,l})_\epsilon)_{z_m}\mkomma\\
  &=((M^*)_{\epsilon,\epsilon})_{x_m,f}+((M^*)_{\epsilon,\epsilon})_{z_m,f}+((M^{\omega,l})_\epsilon)_{z_m}\mpunkt
\end{align*}
  where the last equality is by Theorem~\ref{thm:induced_omega}.

It remains to show that $((M^*)_{\epsilon,\epsilon})_{z_m,f}=0$.
We have
\[\what M=\left(
\begin{array}{c|c}
  \what M_{z,z} & \begin{matrix}
    \what M_{z,x} & 0
  \end{matrix}\\
  \hline\\[-1.5\medskipamount]
  \begin{matrix}
    \what M_{x,z} \\ 0
  \end{matrix} &
  \begin{matrix}
    \what M_{x,x} & \what M_{x,f}\\
    0 & 0
  \end{matrix}
\end{array}
\right)\mpunkt\]

Now let
\[M^*=
\begin{pmatrix}
  \alpha & \beta\\
  \gamma & \delta
\end{pmatrix}\mkomma\]
where we are only interested in the second component of $\beta$. By lemma~\ref{lem:hat_star} and by \eqref{eqn:star_induced_matrix_beta} in the proof of Lemma~\ref{lem:induced_star}, we have
\begin{align*}
  \beta
  &=(M_{z,z})^*
  \begin{pmatrix}
    M_{z,x} & 0
  \end{pmatrix}
  \left[\begin{pmatrix}
      M_{x,x} & M_{x,f}\\
      0 & 0
    \end{pmatrix}+
    \begin{pmatrix}
      M_{x,z}\\ 0
    \end{pmatrix}(M_{z,z})^*
    \begin{pmatrix}
      M_{z,x} & 0
    \end{pmatrix}\right]^*\\
  &=\begin{pmatrix}
    (M_{z,z})^* M_{z,x} & 0
  \end{pmatrix}
  \begin{pmatrix}
      (M_{x,x} + M_{x,z} (M_{z,z})^* M_{z,x})^* & (M_{x,x} + M_{x,z} (M_{z,z})^* M_{z,x})^* M_{x,f}\\
      0 & 1
    \end{pmatrix}\\
  &=\begin{pmatrix}
    (M_{z,z})^* M_{z,x}(M_{x,x} + M_{x,z} (M_{z,z})^* M_{z,x})^*, & (M_{z,z})^* M_{z,x}(M_{x,x} + M_{x,z} (M_{z,z})^* M_{z,x})^* M_{x,f}
  \end{pmatrix}\mpunkt
\end{align*}

Now, we obtain
\begin{align}
  ((M^*)_{\epsilon,\epsilon})_{z_m,f} &=\Bigl(\Bigl((M_{z,z})^* M_{z,x}(M_{x,x} + M_{x,z} (M_{z,z})^* M_{z,x})^* M_{x,f}\Bigr)_{\!\!\epsilon,\epsilon}\,\Bigr)_m\notag\\
  &=\Bigl(\bigl((M_{z,z})^* M_{z,x}(M_{x,x} + M_{x,z} (M_{z,z})^* M_{z,x})^*\bigr)_{\epsilon,\epsilon} (M_{x,f})_{\epsilon,\epsilon} \Bigr)_m\notag\\
  &=\Bigl(((M_{z,z})^*)_{\epsilon,\epsilon} \bigl(M_{z,x}(M_{x,x} + M_{x,z} (M_{z,z})^* M_{z,x})^*\bigr)_{\epsilon,\epsilon} (M_{x,f})_{\epsilon,\epsilon} \Bigr)_m\notag\\
  &=\sum_{1\leq i\leq n}\Bigl(((M_{z,z})^*)_{\epsilon,\epsilon} (M_{z,x})_{\epsilon,Z_i}\bigl((M_{x,x} + M_{x,z} (M_{z,z})^* M_{z,x})^*\bigr)_{Z_i,\epsilon} (M_{x,f})_{\epsilon,\epsilon} \Bigr)_m\mkomma\label{eq:continue_here}
\end{align}
where in the second equality, we have $(M_{x,f})_{\pi,\epsilon}=0$ for $\pi\neq \epsilon$. The third equality uses that $(M_{z,z})^*)_{\epsilon,\pi}=0$ for $\pi\neq \epsilon$. In the fourth equality, we have $(M_{z,x})_{\epsilon,\pi}=0$ for $\pi\notin \{Z_i \mid 1\leq i\leq n\}$.

We concentrate on the factor in the center, where we have
\begin{align*}
  &\bigl((M_{x,x} + M_{x,z} (M_{z,z})^* M_{z,x})^*\bigr)_{Z_i,\epsilon}\\
  &= \sum_{t\geq 0}\sum_{\pi_1,\ldots,\pi_{t-1}\in\Gamma^*}\hspace{-.4cm}(M_{x,x} + M_{x,z} (M_{z,z})^* M_{z,x})_{Z_i,\pi_1}\cdots (M_{x,x} + M_{x,z} (M_{z,z})^* M_{z,x})_{\pi_{t-1},\epsilon}\\
  &= \sum_{t\geq 0}\sum_{\pi_1,\ldots,\pi_{t-2}\in\Gamma^*}\hspace{-.4cm}(M_{x,x} + M_{x,z} (M_{z,z})^* M_{z,x})_{Z_i,\pi_1}\cdots (M_{x,x} + M_{x,z} (M_{z,z})^* M_{z,x})_{\pi_{t-2},\epsilon}(M_{x,x})_{\epsilon,\epsilon}\\
  &= \sum_{t\geq 0}(M_{x,x})_{Z_i,\epsilon}\cdots (M_{x,x})_{\epsilon,\epsilon}(M_{x,x})_{\epsilon,\epsilon}= 0\mkomma
\end{align*}
where in the second (and similarly in the third) equality we have $(M_{x,x} + M_{x,z} (M_{z,z})^* M_{z,x})_{\pi_{t-1},\epsilon}=(M_{x,x})_{\epsilon,\epsilon}$ because $(M_{x,x})_{\pi_{t-1},\epsilon}=0$ for $\pi_{t-1}\neq \epsilon$ and because
\[(M_{x,z} (M_{z,z})^* M_{z,x})_{\pi_{t-1},\epsilon}=\sum_{\pi,\pi'\in\Gamma^*}(M_{x,z})_{\pi_{t-1},\pi}((M_{z,z})^*)_{\pi,\pi'}(M_{z,x})_{\pi',\epsilon}=0\]
as $(M_{z,x})_{\pi',\epsilon}=0$ for all $\pi'$.
In the last equality, $(M_{x,x})_{Z_i,\epsilon}=0$.

We now plug this into~\eqref{eq:continue_here} and obtain
\begin{align*}
  ((M^*)_{\epsilon,\epsilon})_{z_m,f} &=\sum_{1\leq i\leq n}\Bigl(((M_{z,z})^*)_{\epsilon,\epsilon} (M_{z,x})_{\epsilon,Z_i}\bigl((M_{x,x} + M_{x,z} (M_{z,z})^* M_{z,x})^*\bigr)_{Z_i,\epsilon} (M_{x,f})_{\epsilon,\epsilon} \Bigr)_m\\
  &=\sum_{1\leq i\leq n}\Bigl(((M_{z,z})^*)_{\epsilon,\epsilon} (M_{z,x})_{\epsilon,Z_i} 0 (M_{x,f})_{\epsilon,\epsilon} \Bigr)_m=0\mpunkt
\end{align*}
This completes the proof.
\end{proof}

The following theorem compares the behavior of induced simple $\omega$-reset pushdown automata with the solutions of system~\eqref{eqn:5new} 
by stating that $(\|\mathfrak{A}_1^l\|,\dots,\|\mathfrak{A}_n^l\|)$ is a canonical solution of~\eqref{eqn:5new}.

\begin{theorem}
  \label{thm:newsimple_canonical_sol}
  Let $(S,V)$ be a complete semiring-semimodule pair.
  Let the simple $\omega$-reset pushdown automata $\mathfrak{A}_m^l$, for $1\leq m\leq n$ and $0\leq l\leq n$, be induced by the Greibach normal form~\eqref{eqn:new_sys1}, \eqref{eqn:new_sys2}.

Then, for $0\leq l\leq n$,
\[(\|\mathfrak{A}_1^l\|,\dots,\|\mathfrak{A}_n^l\|)=
\big(((M^*)_{\epsilon,\epsilon})_{x_1,f}+((M^{\omega,l})_\epsilon)_{z_1},\dots,((M^*)_{\epsilon,\epsilon})_{x_n,f}+((M^{\omega,l})_\epsilon)_{z_n}\big)\]
is the $l$\textsuperscript{th} canonical solution of~\eqref{eqn:5new}.
\end{theorem}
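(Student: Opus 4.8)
The plan is to read the displayed tuple against the preceding lemma and the definition of the $l$\textsuperscript{th} canonical solution. By the preceding lemma we already have $\|\mathfrak{A}_m^l\|=((M^*)_{\epsilon,\epsilon})_{x_m,f}+((M^{\omega,l})_\epsilon)_{z_m}$, so the stated equality of tuples is automatic and the real task is to identify the two summands with the components of the $l$\textsuperscript{th} canonical solution $(\sigma,\omega)$ of \eqref{eqn:new_sys1}, \eqref{eqn:new_sys2} (which, by the renaming discussion, is also the $l$\textsuperscript{th} canonical solution of \eqref{eqn:5new}). Concretely, I would prove $((M^*)_{\epsilon,\epsilon})_{x_m,f}=\sigma_m$ and $((M^{\omega,l})_\epsilon)_{z_m}=\omega_m$; the quemiring addition then recombines these into the pair $(\sigma_m,\omega_m)$, the $m$\textsuperscript{th} component of the canonical solution, as required.

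The semiring part is short. By Lemma~\ref{lem:induced_star} I rewrite $((M^*)_{\epsilon,\epsilon})_{x_m,f}=((M'^*)_{\epsilon,\epsilon})_{m,f}$, where $M'$ is the matrix induced by the Greibach normal form \eqref{eqn:2}. Theorem~\ref{thm:8} identifies the right-hand side as the $m$\textsuperscript{th} component of the unique solution of \eqref{eqn:2}. Since \eqref{eqn:2} and \eqref{eqn:new_sys1} are the same strict algebraic system, its unique solution equals the least solution $\sigma$, giving $((M^*)_{\epsilon,\epsilon})_{x_m,f}=\sigma_m$.

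The heart of the argument, and the step I expect to be the main obstacle, is the semimodule part. By Theorem~\ref{thm:induced_omega} (via equation \eqref{eqn:star_induced_matrix}) one has $((M^{\omega,l})_\epsilon)_{z_m}=((\alpha^{\omega,l})_\epsilon)_m$ with $\alpha=M_{z,z}+M_{z,x}(M_{x,x})^*M_{x,z}$, a matrix on the states $z_1,\dots,z_n$ over the pushdown alphabet. The plan is to show that $\alpha$ is \emph{stack-preserving} and stack-independent, with ground-level block equal to the plain coefficient matrix $\varrho(\sigma)$ of the linear system \eqref{eqn:new_sys2}. For the diagonal term $M_{z,z}$ this is immediate from the construction, since every admissible block equals $\sum_a(p_i,a z_j)a$ irrespective of the topmost stack symbol. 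For the excursion term $M_{z,x}(M_{x,x})^*M_{x,z}$ I would argue combinatorially: from $z_i$ one pushes a marker $Z_k$ and enters $x_j$ with weight $(M_{\epsilon,Z_k})_{z_i,x_j}=\sum_a(p_i,a x_j z_k)a$; above this marker only $X$-symbols are pushed or popped while $Z_k$ is merely read (never popped inside $M_{x,x}$, as its pop $(M_{Z_k,\epsilon})_{x_i,z_k}$ lies in $M_{x,z}$), so the excursion returns the stack below $Z_k$ unchanged and its weight is independent of that stack; finally $Z_k$ is popped into $z_k$. By Lemma~\ref{lem:newrun_without_f} the finite $x$-computation from $x_j$ above $Z_k$ up to popping $Z_k$ has weight exactly $\sigma_j=((M^*)_{\epsilon,\epsilon})_{x_j,f}$. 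Summing over $j$ gives the ground-level block of $\alpha$ from $z_i$ to $z_m$ as $\sum_a(p_i,a z_m)a+\sum_j\bigl(\sum_a(p_i,a x_j z_m)a\bigr)\sigma_j=\varrho(\sigma)_{i,m}$.

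To finish, I would feed this into the combinatorial description \eqref{eqn:pl_identity} of the Büchi-$\omega$ operation. Because $\alpha$ is stack-preserving, any nonzero term in $((\alpha^{\omega,l})_\epsilon)_m$ starting from the empty stack forces every intermediate stack to remain $\epsilon$, and each factor then contributes $\varrho(\sigma)$; hence $((\alpha^{\omega,l})_\epsilon)_m=\sum_{(i_1,i_2,\dots)\in P_l}\varrho(\sigma)_{m,i_1}\varrho(\sigma)_{i_1,i_2}\cdots=(\varrho(\sigma)^{\omega,l})_m=\omega_m$, the last step being the same $P_l$-sum read as the Büchi-$\omega$ power of the finite matrix $\varrho(\sigma)$. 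Combining the two parts yields $\|\mathfrak{A}_m^l\|=(\sigma_m,\omega_m)$ for every $m$, so the displayed tuple is the $l$\textsuperscript{th} canonical solution of \eqref{eqn:5new}. The two points needing the most care are the verification of stack-preservation for the excursion term (that $Z_k$ is never popped within $(M_{x,x})^*$ and that the $X$-computations above it are shift-invariant) and the clean passage from the pushdown $\omega,l$-operation to the finite-matrix $\omega,l$-operation through the $P_l$-identity.
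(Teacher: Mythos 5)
Your proposal is correct and follows essentially the same route as the paper's proof: the semiring part via Lemma~\ref{lem:induced_star}, Theorem~\ref{thm:8} and the identity of \eqref{eqn:2} with \eqref{eqn:new_sys1}, and the semimodule part via Theorem~\ref{thm:induced_omega}, Lemma~\ref{lem:newrun_without_f}, the stack-preservation of $\alpha=M_{z,z}+M_{z,x}(M_{x,x})^*M_{x,z}$, and the $P_l$-identity \eqref{eqn:pl_identity}. The only difference is directional and cosmetic: the paper expands $\varrho(\sigma)$ and shows it equals the ground-level block $(\alpha)_{\epsilon,\epsilon}$, whereas you compute that block and show it equals $\varrho(\sigma)$, with identical content.
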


\begin{proof}
  We show that
  \[(((M^*)_{\epsilon,\epsilon})_{x_1,f},\dots,((M^*)_{\epsilon,\epsilon})_{x_n,f}) \qquad \text{and} \qquad (((M^{\omega,l})_\epsilon)_{z_1},\dots,((M^{\omega,l})_\epsilon)_{z_n})\]
  is the $l$\textsuperscript{th} canonical solution of the mixed $\omega$-algebraic system~\eqref{eqn:new_sys1}, \eqref{eqn:new_sys2}.

  Let $M'$ be induced by the Greibach normal form~~\eqref{eqn:2}. Then, by Theorem~\ref{thm:8}, $(((M'^*)_{\epsilon,\epsilon})_{1,f},\allowbreak \dots,\allowbreak ((M'^*)_{\epsilon,\epsilon})_{n,f})$ is the unique (and therefore least) solution of~\eqref{eqn:2}. By Lemma~\ref{lem:induced_star} and by equality of \eqref{eqn:2} and \eqref{eqn:new_sys1}, we can conclude that $\sigma=(((M^*)_{\epsilon,\epsilon})_{x_1,f},\allowbreak \dots,\allowbreak ((M^*)_{\epsilon,\epsilon})_{x_n,f})$ is also the least solution of~\eqref{eqn:new_sys1}.

  Fix $l$ with $1\leq l\leq n$ for the remainder of the proof.
  It remains to show that for the system \eqref{eqn:new_sys2}, written as $z=\varrho(x)z$, we have
  \[\varrho(\sigma)^{\omega,l}=(((M^{\omega,l})_\epsilon)_{z_1},\dots,((M^{\omega,l})_\epsilon)_{z_n})\]

  We start with the right side of equation~\eqref{eqn:new_sys2}. We have, for $1\leq i\leq n$,
  \begin{align*}
      \varrho(\sigma)_i z &= \sum_{1\leq j,k\leq n}\sum_{a\in\Sigma}(p_i, a x_j z_k)a \sigma_j z_k+\sum_{1\leq j\leq n}\sum_{a\in\Sigma}(p_i,a z_j)a z_j\\
      &= \sum_{1\leq j,k\leq n}\sum_{a\in\Sigma}(p_i, a x_k z_j)a \sigma_k z_j+\sum_{1\leq j\leq n}\sum_{a\in\Sigma}(p_i,a z_j)a z_j\\
      &= \sum_{1\leq j\leq n}\Big(\sum_{1\leq k\leq n}\sum_{a\in\Sigma}(p_i, a x_k z_j)a \sigma_k +\sum_{a\in\Sigma}(p_i,a z_j)a \Big)z_j\\
      &= \sum_{1\leq j\leq n}\Big(\sum_{1\leq k\leq n}(M_{\epsilon,Z_j})_{z_i,x_k} ((M^*)_{\epsilon,\epsilon})_{x_k,f} +(M_{\epsilon,\epsilon})_{z_i,z_j} \Big)z_j\\
      &\stackrel{5}{=} \sum_{1\leq j\leq n}\Big(\sum_{1\leq k\leq n}(M_{\epsilon,Z_j})_{z_i,x_k} (((M_{x,x})^*)_{Z_j,Z_j}M_{Z_j,\epsilon})_{x_k,z_j} +(M_{\epsilon,\epsilon})_{z_i,z_j} \Big)z_j\\
      &= \sum_{1\leq j\leq n}\Big(\sum_{1\leq k,k'\leq n}(M_{\epsilon,Z_j})_{z_i,x_k} (((M_{x,x})^*)_{Z_j,Z_j})_{x_k,x_{k'}}(M_{Z_j,\epsilon})_{x_{k'},z_j} +(M_{\epsilon,\epsilon})_{z_i,z_j} \Big)z_j\\
      &= \sum_{1\leq j\leq n}\Big(\sum_{1\leq k,k'\leq n}(\what M_{z_i,x_k})_{\epsilon,Z_j} (((\what M_{x,x})^*)_{x_k,x_{k'}})_{Z_j,Z_j}(\what M_{x_{k'},z_j})_{Z_j,\epsilon} +(\what M_{z_i,z_j})_{\epsilon,\epsilon} \Big)z_j\\
      &\stackrel{8}{=} \sum_{1\leq j\leq n}\Big(\sum_{1\leq k,k'\leq n}\sum_{P\in\Gamma}(\what M_{z_i,x_k})_{\epsilon,P} (((\what M_{x,x})^*)_{x_k,x_{k'}})_{P,P}(\what M_{x_{k'},z_j})_{P,\epsilon} +(\what M_{z_i,z_j})_{\epsilon,\epsilon} \Big)z_j\\
      &= \sum_{1\leq j\leq n}\Big(\sum_{1\leq k,k'\leq n} \big(\what M_{z_i,x_k} ((\what M_{x,x})^*)_{x_k,x_{k'}}\what M_{x_{k'},z_j}\big)_{\epsilon,\epsilon} +(\what M_{z_i,z_j})_{\epsilon,\epsilon} \Big)z_j\\
    &= \sum_{1\leq j\leq n}\Big(\what M_{z_i,x} (\what M_{x,x})^*\what M_{x,z_j} + \what M_{z_i,z_j}\Big)_{\epsilon,\epsilon} z_j\mkomma
  \end{align*}
  where the fifth equality is by Lemma~\ref{lem:newrun_without_f}. The eighth equality is because for $P\neq Z_j$, we have $(\what M_{x_{k'},z_j})_{P,\epsilon}=0$.

  Now for $\varrho$ of the system $z=\varrho(x)z$, we obtain
  \begin{align*}
    \varrho(\sigma)
    &= \begin{pmatrix}\big(\what M_{z_1,x} (\what M_{x,x})^*\what M_{x,z_1} + \what M_{z_1,z_1}\big)_{\epsilon,\epsilon} & \cdots & \big(\what M_{z_1,x} (\what M_{x,x})^*\what M_{x,z_n} + \what M_{z_1,z_n}\big)_{\epsilon,\epsilon}\\
      \vdots & \ddots & \vdots\\
      \big(\what M_{z_n,x} (\what M_{x,x})^*\what M_{x,z_1} + \what M_{z_n,z_1}\big)_{\epsilon,\epsilon} & \cdots & \big(\what M_{z_n,x} (\what M_{x,x})^*\what M_{x,z_n} + \what M_{z_n,z_n}\big)_{\epsilon,\epsilon}
    \end{pmatrix}\\
    &= \Big(M_{z,x} (M_{x,x})^*M_{x,z} + M_{z,z}\Big)_{\epsilon,\epsilon}{}_{\mpunkt}
  \end{align*}

  Then, we apply the identity \eqref{eqn:pl_identity} and we get
  \begin{align}
    \bigl(\varrho(\sigma)^{\omega,l}\bigr)_{j}
    &= \Bigl(\bigl( \big(M_{z,x} (M_{x,x})^*M_{x,z} + M_{z,z}\big)_{\epsilon,\epsilon}\bigr)^{\omega,l}\Big)_j\notag\allowdisplaybreaks\\
    &=\hspace{-.4cm}\sum_{(j_1,j_2,\ldots)\in P_l}\hspace{-.4cm} \big(\what M_{z_j,x} (\what M_{x,x})^*\what M_{x,z_{j_1}} + \what M_{z_j,z_{j_1}}\big)_{\epsilon,\epsilon} \big(\what M_{z_{j_1},x} (\what M_{x,x})^*\what M_{x,z_{j_2}} + \what M_{z_{j_1},z_{j_2}}\big)_{\epsilon,\epsilon} \cdots\notag\\
    &\stackrel{4}{=}\hspace{-.4cm}\sum_{(j_1,j_2,\ldots)\in P_l}\sum_{\mathrlap{\hspace{-.2cm}\pi_1,\pi_2,\ldots\in\Gamma^*}}\;\big(\what M_{z_j,x} (\what M_{x,x})^*\what M_{x,z_{j_1}} + \what M_{z_j,z_{j_1}}\big)_{\epsilon,\pi_1} \big(\what M_{z_{j_1},x} (\what M_{x,x})^*\what M_{x,z_{j_2}} + \what M_{z_{j_1},z_{j_2}}\big)_{\pi_1,\pi_2} \cdots\notag\\
    &=\Bigl(\bigl((M_{z,x}(M_{x,x})^*M_{x,z}+M_{z,z})^{\omega,l}\bigr)_{\!\epsilon}\,\Bigr)_{\!j}\mkomma \label{solution_system}
  \end{align}
  where the fourth equality uses the fact that $\big(\what M_{z_i,x}(\what M_{x,x})^*\what M_{x,z_j} + \what M_{z_i,z_j}\big)_{\epsilon,\pi}=0$ for $\pi\neq \epsilon$, which is because $(\what M_{z_i,z_j})_{\epsilon,\pi} = 0$ for $\pi\neq \epsilon$ by definition and because, by our construction, we have
  \[M_{z,x}(M_{x,x})^*M_{x,z}=\sum_{1\leq j\leq n}(M_{z,x})_{\epsilon,Z_j}((M_{x,x})^*)_{Z_j,Z_j}(M_{x,z})_{Z_j,\epsilon}\mpunkt\]
  Inductively, the above argument can be applied to all factors $\big(\what M_{z_{j_i},x} (\what M_{x,x})^*\what M_{x,z_{j_{i+1}}} + \what M_{z_{j_i},z_{j_{i+1}}}\big)_{\pi_i,\pi_{i+1}}$ because we learn from the preceding factor that $\pi_i=\epsilon$.

  Now, we proceed from the other direction. 
  From Theorem~\ref{thm:induced_omega}, we know that for the simple $\omega$-reset pushdown automaton $\mathfrak{A}_m^l$ and a variable $z_j$, we have
  \begin{align*}
    ((M^{\omega,l})_\epsilon)_{z_j} &= \Bigl(\bigl(\bigl(M_{z,z}+M_{z,x} (M_{x,x})^* M_{x,z}\bigr)^{\omega,l}\bigr)_\epsilon\Bigr)_j\\
    &=\varrho(\sigma)_{j}^{\omega,l}\mkomma
  \end{align*}
  where the last equality is by \eqref{solution_system}. This completes the proof.
\end{proof}

We now combine our previous discussion and Theorem~\ref{thm:newsimple_canonical_sol} to get our second main result.

\begin{corollary}
  Let $S$ be a continuous star-omega semiring with the underlying semiring $S$ being commutative and let $r\in \salgtimes$.

  Then there exists a simple $\omega$-reset pushdown automaton with behavior $r$.
\end{corollary}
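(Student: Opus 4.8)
The plan is to assemble the machinery developed across the whole paper, since essentially all the real work has already been done in Theorem~\ref{thm:newsimple_canonical_sol} and the lemmas preceding it. The desired statement should follow by chaining together the normal-form result for $\omega$-algebraic systems with the behavior computation for induced automata. Before starting, I would check that the hypotheses align: the corollary assumes $S$ is continuous and commutative (the standing assumption of Sections~\ref{sec:mixed_algebraic_systems}--\ref{sec:greibach_normal_form_unmixed}, so that Theorem~\ref{thm:unmix} applies), while Theorem~\ref{thm:newsimple_canonical_sol} only requires a complete semiring-semimodule pair; since any continuous semiring is complete and $\salgtimes$ inherits the structure of a complete semiring-semimodule pair, both results are simultaneously available.

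First I would invoke the characterization of $\omega$-algebraic series. Writing $r=(s,\upsilon)\in\salg\times\salgomega$, Theorem~\ref{thm:3.1} shows $r$ is a component of a canonical solution of a mixed $\omega$-algebraic system, and Theorem~\ref{thm:unmix} strengthens this to an \emph{unmixed} $\omega$-algebraic system in Greibach normal form. Concretely, there exist $n\geq 1$, indices $m\in\{1,\dots,n\}$ and $l\in\{0,\dots,n\}$, and a system of the form~\eqref{eqn:5new} such that $r$ is the $m$\textsuperscript{th} component of the $l$\textsuperscript{th} canonical solution of~\eqref{eqn:5new}. Passing from~\eqref{eqn:5new} to the renamed mixed system~\eqref{eqn:new_sys1}, \eqref{eqn:new_sys2} (which, as noted in the discussion, has the same canonical solutions) sets up exactly the input required by the construction.

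Second, I would apply the construction introduced just before Theorem~\ref{thm:induced_omega} to obtain the simple $\omega$-reset pushdown automaton $\mathfrak{A}_m^l$ induced by the Greibach normal form~\eqref{eqn:new_sys1}, \eqref{eqn:new_sys2}. By Theorem~\ref{thm:newsimple_canonical_sol}, the tuple of behaviors $(\|\mathfrak{A}_1^l\|,\dots,\|\mathfrak{A}_n^l\|)$ is precisely the $l$\textsuperscript{th} canonical solution of~\eqref{eqn:5new}. Reading off the $m$\textsuperscript{th} coordinate then gives $\|\mathfrak{A}_m^l\|=r$, so $\mathfrak{A}_m^l$ is a simple $\omega$-reset pushdown automaton with behavior $r$, as required.

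I do not expect a serious obstacle at this stage: the final assembly is bookkeeping, since the genuinely difficult content lives in Theorem~\ref{thm:newsimple_canonical_sol}, whose proof depends on the refined $\what{M}$-notation and Lemmas~\ref{lem:induced_star_x} through~\ref{lem:newrun_without_f} to keep paths through the $x$-states separate from those through the $z$-states. The only point demanding attention is the index tracking: one must ensure that the $m$ and $l$ supplied by Theorem~\ref{thm:unmix} are the \emph{same} $m$ and $l$ used to instantiate $\mathfrak{A}_m^l$, and that the variable ordering chosen in the construction of~\eqref{eqn:new_sys1}, \eqref{eqn:new_sys2} preserves which states are Büchi-accepting, so that the $l$\textsuperscript{th} canonical solution computed by the automaton matches the one describing $r$.
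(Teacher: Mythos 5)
Your proposal is correct and follows essentially the same route as the paper's own proof: invoke Theorem~\ref{thm:unmix} (with Theorem~\ref{thm:3.1}) to present $r$ as the $m$\textsuperscript{th} component of the $l$\textsuperscript{th} canonical solution of a system~\eqref{eqn:5new} in Greibach normal form, build the induced automaton $\mathfrak{A}_m^l$ from~\eqref{eqn:new_sys1}, \eqref{eqn:new_sys2}, and conclude via Theorem~\ref{thm:newsimple_canonical_sol}. Your ``reading off the $m$\textsuperscript{th} coordinate'' step is exactly the paper's appeal to the uniqueness of the $l$\textsuperscript{th} canonical solution, which you also flag correctly in your index-tracking remark.
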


\begin{proof}
  Let $r\in\salgtimes$.
  As discussed on page~\pageref{pagemark_discussions}, by Theorem~\ref{thm:unmix} (and Theorem~\ref{thm:3.1}), $r$ is a component of a canonical solution of an $\omega$-algebraic system in Greibach normal form over $\stimes$.
  Let \eqref{eqn:5new} be such a system and assume that the $m$\textsuperscript{th} component of the $l$\textsuperscript{th} canonical solution of \eqref{eqn:5new} is $r$, i.e., assume $\tau_m=r$ for the $l$\textsuperscript{th} canonical solution $\tau$.

  Now, we can construct the simple $\omega$-reset pushdown automata $\mathfrak{A}_m^l$ induced by the Greibach normal form~\eqref{eqn:new_sys1}, \eqref{eqn:new_sys2}, for which, by Theorem~\ref{thm:newsimple_canonical_sol}, $(\|\mathfrak{A}_1^l\|,\dots,\|\mathfrak{A}_n^l\|)$ is the $l$\textsuperscript{th} canonical solution of \eqref{eqn:5new}. As the $l$\textsuperscript{th} canonical solution is unique, we can conclude that
  \[\|\mathfrak{A}_m^l\|=\tau_m=r\mpunkt\qedhere\]
\end{proof}

\section{Discussion}
We have extended the characterization of $\omega$-algebraic series so that we can use the $\omega$-Kleene closure to transfer the property of Greibach normal form from algebraic systems to mixed $\omega$-algebraic systems. This generalizes a fundamental property from context-free languages.

We believe that the same technique can be used to transfer other properties of algebraic systems to infinite words. Cohen, Gold~\cite{cohen_gold} use this technique also for the elimination of chain rules, for the Chomsky normal form and for effective decision methods of emptiness, finiteness and infiniteness.

The second part of this paper applies the Greibach normal form for the construction of $\omega$-pushdown automata. Simple $\omega$-reset pushdown automata do not use $\epsilon$-transitions; in the literature, this is also called a \emph{realtime} pushdown automaton. Realtime pushdown automata read a symbol of the input word in every transition---exactly like context-free grammars in Greibach normal form generate a letter in every derivation step. Additionally, each derivation step of context-free grammars in Greibach normal form increases the number of non-terminals in the sentential form by at most one. We showed that for realtime pushdown automata it suffices to handle at most one stack symbol per transition. Here the Greibach normal form provides exactly the properties needed to construct simple $\omega$-reset pushdown automata.

For our proof in the second part of the paper, we exploit the following connections.
The $l$\textsuperscript{th} canonical solutions are by definition unique. This allows us to perform the following proof method in Section~\ref{sec:simple_omega_reset_pushdown_automata}: The proof that each of two expressions is the $m$\textsuperscript{th} component of the $l$\textsuperscript{th} canonical solution implies the equality of these two expressions. (Compare this with the proof method in continuous semirings: The proof that each of two expressions is the $m$\textsuperscript{th} component of the least solution of an algebraic system implies the equality of these two expressions.)
In our proof, we consider an $\omega$-algebraic series that is the $m$\textsuperscript{th} component of the $l$\textsuperscript{th} canonical solution of an $\omega$-algebraic system in Greibach normal form and construct a simple $\omega$-reset pushdown automaton whose moves depend only on the coefficients of this Greibach normal form. We prove that the behavior of this simple $\omega$-reset pushdown automaton equals the $m$\textsuperscript{th} component of the $l$\textsuperscript{th} canonical solution of this Greibach normal form.

The model of simple $\omega$-reset pushdown automata seems to be very natural. They occur when applying general homomorphisms to nested-word automata~\cite{note_nw,unweighted_logic}. Their unweighted counterparts have been used for a Büchi-type logical characterization of timed pushdown languages of finite words~\cite{perevoshchikov_logic} and $\omega$-context-free languages~\cite{unweighted_logic}.
Also in the weighted setting, simple reset pushdown automata of finite words have been used in~\cite{perevoshchikov_weighted}.

We use a similar automaton model as simple $\omega$-reset pushdown automata for a Büchi-type logical characterization in \cite{weighted_logic}. There, we introduce a weighted logic and prove its expressive equivalence to the new automaton model. Restricted to the weight structure used in the current paper, we can therefore extend our result there by stating that every $\omega$-algebraic series can be converted to a formula of our weighted logic.


\paragraph{Acknowledgment}
We thank the anonymous reviewers for their valuable feedback.


\bibliography{main}


\appendix

\end{document}